\newtheorem{lemma}{Lemma}
\newtheorem{theorem}{Theorem}
\newtheorem{example}{Example}
\newtheorem{remark}{Remark}
\newtheorem{corollary}{Corollary}
\newlength{\leftstackrelawd}
\newlength{\leftstackrelbwd}
\def\leftstackrel#1#2{\settowidth{\leftstackrelawd}%
{${{}^{#1}}$}\settowidth{\leftstackrelbwd}{$#2$}%
\addtolength{\leftstackrelawd}{-\leftstackrelbwd}%
\leavevmode\ifthenelse{\lengthtest{\leftstackrelawd>0pt}}%
{\kern-.5\leftstackrelawd}{}\mathrel{\mathop{#2}\limits^{#1}}}
\newcommand{\PH}{\mathsf{str}}
\newcommand{\mN}{\mathsf{N}} 
\newcommand{\mZ}{\mathsf{Z}} 
\newcommand{\mP}{\mathsf{P}} 
\newcommand{\mE}{\mathsf{E}} 
\newcommand{\mD}{\mathsf{D}} 
\newcommand{\mB}{\mathsf{B}} 
\newcommand{\mQ}{\mathsf{Q}} 
\newcommand{\C}{\mathchoice
  {\mathrm{\scriptscriptstyle C}} % displaystyle
  {\mathrm{\scriptscriptstyle C}} % textstyle
  {\mathrm{\scriptscriptstyle C}} % scriptstyle
  {\mathrm{\scriptscriptstyle C}} % scriptscriptstyle
}
\newcommand{\NS}{\mathchoice
  {\mathrm{\scriptscriptstyle NS}} % displaystyle
  {\mathrm{\scriptscriptstyle NS}} % textstyle
  {\mathrm{\scriptscriptstyle NS}} % scriptstyle
  {\mathrm{\scriptscriptstyle NS}} % scriptscriptstyle
}
\newcommand{\NSa}{\mathchoice
  {\mathrm{\scriptscriptstyle NS\text{-}1}} % displaystyle
  {\mathrm{\scriptscriptstyle NS\text{-}1}} % textstyle
  {\mathrm{\scriptscriptstyle NS\text{-}1}} % scriptstyle
  {\mathrm{\scriptscriptstyle NS\text{-}1}} % scriptscriptstyle
}
\newcommand{\NSz}{\mathchoice
  {\mathrm{\scriptscriptstyle NS\text{-}0}} % displaystyle
  {\mathrm{\scriptscriptstyle NS\text{-}0}} % textstyle
  {\mathrm{\scriptscriptstyle NS\text{-}0}} % scriptstyle
  {\mathrm{\scriptscriptstyle NS\text{-}0}} % scriptscriptstyle
}
\newcommand{\SIa}{\mathchoice
  {\mathrm{\scriptscriptstyle SI\text{-}1}} % displaystyle
  {\mathrm{\scriptscriptstyle SI\text{-}1}} % textstyle
  {\mathrm{\scriptscriptstyle SI\text{-}1}} % scriptstyle
  {\mathrm{\scriptscriptstyle SI\text{-}1}} % scriptscriptstyle
}
\newcommand{\semidet}{\mathchoice
  {\mathrm{\scriptscriptstyle  SD}} % displaystyle
  {\mathrm{\scriptscriptstyle  SD}} % textstyle
  {\mathrm{\scriptscriptstyle  SD}} % scriptstyle
  {\mathrm{\scriptscriptstyle  SD}} % scriptscriptstyle
}
\newcommand{\semidetE}{\mathchoice
  {\mathrm{\scriptscriptstyle  SDE}} % displaystyle
  {\mathrm{\scriptscriptstyle  SDE}} % textstyle
  {\mathrm{\scriptscriptstyle  SDE}} % scriptstyle
  {\mathrm{\scriptscriptstyle  SDE}} % scriptscriptstyle
}
\newcommand{\KS}{\mathchoice
  {\mathrm{\scriptscriptstyle  KS}} % displaystyle
  {\mathrm{\scriptscriptstyle  KS}} % textstyle
  {\mathrm{\scriptscriptstyle  KS}} % scriptstyle
  {\mathrm{\scriptscriptstyle  KS}} % scriptscriptstyle
}
\newcommand{\Sato}{\mathchoice
  {\mathrm{\scriptscriptstyle  Sato}} % displaystyle
  {\mathrm{\scriptscriptstyle  Sato}} % textstyle
  {\mathrm{\scriptscriptstyle  Sato}} % scriptstyle
  {\mathrm{\scriptscriptstyle  Sato}} % scriptscriptstyle
}
\newcommand{\tR}{\mathchoice
  {\mathrm{\scriptscriptstyle  R}} % displaystyle
  {\mathrm{\scriptscriptstyle  R}} % textstyle
  {\mathrm{\scriptscriptstyle  R}} % scriptstyle
  {\mathrm{\scriptscriptstyle  R}} % scriptscriptstyle
}
\newcommand{\tT}{\mathchoice
  {\mathrm{\scriptscriptstyle  T}} % displaystyle
  {\mathrm{\scriptscriptstyle  T}} % textstyle
  {\mathrm{\scriptscriptstyle  T}} % scriptstyle
  {\mathrm{\scriptscriptstyle  T}} % scriptscriptstyle
}
\newcommand{\mathrmtiny}[1]{\text{\tiny$\mathrm{#1}$}}
\title{Virtual Signaling of CSIT via Non-Signaling Assistance}
\author{Yuhang Yao, Syed A. Jafar\\
{\small Center for Pervasive Communications and Computing (CPCC)}\\
{\small University of California Irvine, Irvine, CA 92697}\\
{\small \it Email: \{yuhangy5, syed\}@uci.edu}
}
\date{}
\begin{document}
\maketitle

\begin{abstract}
Non-signaling correlations, which (strictly) include  quantum correlations, provide a tractable path to explore the potential impact of quantum nonlocality on the capacity of classical communication networks. Motivated by a recent discovery that certain  wireless  network settings benefit significantly from non-signaling (NS) correlations, various generalizations are considered. First, it is shown that for a point to point discrete memoryless channel $\mN_{Y|XS}$ with input $X$, output $Y$, channel state $S$, and non-causal channel state information at the transmitter (CSIT), the NS-assisted Shannon capacity is $\max_{\mP_{X|S}}I(X;Y|S)$, matching the classical (without NS assistance) capacity of the channel for the setting where $S$ is also made available to the receiver. For any finite blocklength, the NS-assisted optimal probability of successful decoding over $\mN_{Y|XS}$  is shown to remain unchanged if in addition $S$ is made available to the receiver. The key insight is summarized as `virtual signaling of CSIT via NS-assistance' and is supported by further results as follows. For a discrete memoryless $2$-user broadcast channel (BC), the Shannon capacity region with NS-assistance available only between the transmitter and User $1$, is found next. Consistent with the aforementioned key insight, the result matches the classical capacity region for the setting where the desired message of User $2$ is made available in advance as side-information to User $1$. The latter capacity region is known from a result of Kramer and Shamai. Next, for a semi-deterministic BC, the Shannon capacity region with full (tripartite) NS-assistance is shown to be the same as if only bipartite NS-assistance was available between the transmitter and the non-deterministic user. Bipartite NS-assistance between the transmitter and only the deterministic user, does not improve the capacity region relative to the corresponding classical setting. The capacity region is also found for the BC obtained by passing the outputs of a semi-deterministic BC through separate erasure channels, provided that the deterministic output does not experience the worse erasure channel. The result matches Sato's converse bound. Finally, the analysis is extended to a $K$-user BC with full NS-assistance among all parties. It is shown that the  optimal probability of successful decoding for any finite blocklength does not change if each User $k, k\in[K]$ is provided in advance the desired messages of Users $k+1, k+2,\cdots, K$.

\end{abstract}

\newpage
\section{Introduction}
The \emph{no-signaling principle} is a fundamental consequence of special relativity that precludes  any controllable physical influence from propagating beyond the light cone, and therefore forbids superluminal (faster than light) communication. It plays a central role in reconciling \emph{quantum non-locality} --- non-local correlation  across space-like separated quantum systems --- with \emph{relativistic causality}. Indeed, quantum non-locality is a non-signaling (NS)  resource,\footnote{It is known that the set of NS resources, i.e., the set of correlations that do not violate the no-signaling principle, includes all quantum correlations, and the inclusion is strict.} and  \emph{by itself}  such a resource shared across multiple parties, cannot allow any communication among those parties.

This leads naturally to the following questions: What if the parties sharing a NS resource are also connected by a classical communication network? Can NS assistance improve the Shannon capacity region of that classical communication network? How significant can such an improvement be? What kinds of communication networks allow significant improvements? Are such networks \emph{naturally} encountered? And, how much of these improvements are achievable with quantum correlations? These  fundamental  questions lie at the intersection of classical and quantum information theory and define a research frontier that has experienced  recent progress \cite{fawzi2024MAC, pereg2021quantumBC, fawzi2024broadcast,Yao_Jafar_NS_DoF}. The motivation of  this work is  to further advance this frontier.

While not all NS correlations are realizable by quantum systems, the simpler formulation of NS correlations  makes them more amenable to analysis than quantum correlations.  Thus, NS models offer, firstly, a tractable path to explore the potential impact of \emph{quantum}-nonlocality on the capacity of classical communication networks. Clearly, the quantum-assisted capacity region of a classical communication network is sandwiched between its capacity regions with and without NS-assistance. It follows then that strong benefits of quantum nonlocality can only exist where large  improvements are possible through NS-assistance. In this sense, the search for large improvements in capacity due to NS-assistance, serves a similar purpose as a \emph{metal detector for a treasure hunt}. Secondly, NS models, which encapsulate the maximal non-locality that is allowed by relativistic causality, are especially important within the broad framework of \emph{generalized probabilistic theories} \cite{GPT2007, GPT2023} to explore the fundamental limits of information processing supported by non-local correlations, and thereby potentially discover new information processing principles as well as new physical theories.
Thirdly, the study of NS-assisted capacity  contributes new converse bounds, or a better understanding of existing converse bounds in classical information theory. Examples include the  Polyanskiy-Poor-Verdu one-shot metaconverse \cite{Polyanskiy_Poor_Verdu}, a cornerstone of finite-blocklength classical information theory, which is alternatively recovered (\emph{and found to be tight!}) under NS assistance \cite{matthews2012linear}, and  algorithmic approaches to classical converses for one-shot settings inspired by NS-assisted models \cite{Barman_Fawzi, Jose_Kulkarni_2019, fawzi2024broadcast}. Finally, NS-assistance is also explored in stochastic control theory for decentralized information structures \cite{Venkat_NS_2007,Jose_Kulkarni_2015,  Deshpande_2023, Dhingra_Kulkarni_2024}, with several intriguing parallels, e.g., between \cite{Jose_Kulkarni_2015, matthews2012linear, Polyanskiy_Poor_Verdu}.

Many of the results on NS-assisted capacity of classical communication networks are negative results, showing that there can be no Shannon capacity advantage from NS  assistance (which subsumes quantum entanglement assistance) in certain settings. It is known, for instance,  that NS assistance does not improve the Shannon capacity of a classical point to point channel\cite{matthews2012linear, Barman_Fawzi}, generalizing a prior result for quantum assistance from \cite{Bennett_Shor_Smolin_Thapliyal_PRL}. NS assistance also does not improve the capacity region of a deterministic broadcast channel (BC) \cite{ fawzi2024broadcast}, or a BC where NS resources are only shared among receivers \cite{fawzi2024broadcast} (generalizing a result for the corresponding quantum-assisted setting from \cite{pereg2021quantumBC}), or a multiple access channel (MAC) where independent NS resources are shared between each transmitter and the receiver \cite{fawzi2024MAC}. Positive results, showing strict capacity \emph{improvements} from NS assistance started with the work of Quek and Shor \cite{Quek_Shor} on interference channels, which introduced the key idea of  translating NS and/or quantum strategies for nonlocal games into capacity improvements for carefully constructed artificial channels. This was followed by the works by Leditzky et al. \cite{leditzky2020playing}, Seshadri et al. \cite{seshadri2023separation}, Fawzi and Ferme \cite{fawzi2024MAC}, and Pereg et al. \cite{pereg2024MAC_QEassist} on multiple access  channels where similar advantages were established for NS and/or quantum-assistance, and by Hawellek et al. who established capacity advantages in interference channels with entangled transmitters \cite{hawellek2024interferencechannelentangledtransmitters}. The capacity of NS-assisted broadcast channels was studied by Fawzi and Ferme in \cite{fawzi2024broadcast}, who posed the question: can NS assistance  improve the Shannon capacity region of a broadcast channel? The question was answered in the affirmative in \cite{Yao_Jafar_NS_DoF}, somewhat surprisingly even for semi-deterministic and degraded broadcast channels where an improvement was not expected, and it was shown that for certain $K$ user BC's that occur naturally in wireless networks, the  improvement in Shannon capacity due to NS assistance approaches a factor of $K$. On one hand, considering that the improvements in Shannon capacity reported previously were relatively modest amounts (not exceeding 5\% \cite{seshadri2023separation,hawellek2024interferencechannelentangledtransmitters} to our knowledge\footnote{Note that we are considering Shannon capacity (requiring vanishing error for large block lengths) of discrete memoryless channels. Larger improvements from NS assistance are  indeed well-known in other settings, e.g., zero-error capacity \cite{cubitt2011zero,cubitt2010improving}, arbitrarily varying channel capacity \cite{Notzel}, and communication cost for computations such as binary decision problems \cite{van2013implausible}. Quek and Shor \cite{Quek_Shor} demonstrated a factor of $2$ improvement in a $2$ user IC, but their notion of  sum-rate `capacity' which corresponds to maximizing single-letter mutual informations for the two users, $\max I(X_1;Y_1)+I(X_2;Y_2)$, is different from Shannon capacity.}), a $K$-fold improvement in Shannon capacity seems shockingly large (e.g., 100\% improvement for $K=2$ users, 900\% for $K=10$ users) and represents a strong `metal detector' signal in our earlier analogy. On the other hand it is important to note that it is not yet known how much, if any, of this gain is actually achievable under the restriction to strictly quantum-correlations. The scenarios pinpointed in \cite{Yao_Jafar_NS_DoF} provide a focused target for future quantum-theoretic analysis. For our present purpose let us note that reference \cite{Yao_Jafar_NS_DoF} also considers a NS-assisted single-user \emph{fading-dirt} channel model,  where even larger (\emph{unbounded} as channel alphabet size approaches infinity) \emph{multiplicative} gains in Shannon capacity are established. The present work  seeks a more fundamental understanding of these findings.

There are three classes of communication networks considered in this work. The first is a general point to point discrete memoryless channel $\mN_{Y|XS}$ with input $X$, output $Y$, and state $S$ that is known non-causally to the transmitter. By the well known Gelfand-Pinsker Theorem \cite{gel1980coding} the classical capacity of such a channel is  $C^{\C}(\mN) = \max_{\mP_{X U}} I(U;Y) - I(U;S)$. We show (Theorem \ref{thm:capacity_cws}) that the NS-assisted capacity  of this channel is $C^{\NS}(\mN) = \max_{\mP_{X|S}} I(X;Y|S)$. The form of the capacity expression is recognizable as the classical capacity of a variant of the channel setting $\mN$, denoted $\bar\mN$, where the state $S$ is also made available to the \emph{receiver} \cite[Sec. 7.4.1]{NIT}. Thus, $C^{\NS}(\mN) = C^{\C}(\bar\mN)$. It is noteworthy that the multiplicative gap between $C^{\C}(\mN)$ and $C^{\NS}(\mN) = C^{\C}(\bar\mN)$ is in general \emph{unbounded} (Example \ref{ex:fadedirt}), underpinning the corresponding observation in \cite{Yao_Jafar_NS_DoF}. Going beyond Shannon capacity, we study optimal probability of successful decoding (denoted as $\eta$) over finite blocklengths. We prove (Theorem \ref{thm:csit_tp}) that with NS-assistance, for any finite-blocklength, the  optimal probability of success over $\mN$ is the same as over $\bar\mN$, i.e., $\eta(\mN)=\eta(\bar\mN)$. In words, with NS-assistance, the channel setting $\mN$ where the state $S$ is known to the transmitter, is exactly as good as the channel setting $\bar\mN$ where the state $S$ is also given to the receiver. Equivalence of finite-blocklength success probability is stronger than, and therefore implies, the equivalence of Shannon capacity, i.e., $C^{\NS}(\mN) = C^{\NS}(\bar\mN)$. In \emph{effect},  it is as if NS-assistance allows CSIT to be shared with the receiver. This insight, which is encapsulated as the titular `\emph{virtual signaling of CSIT via NS-assistance},' is the central insight of this work. 
The intuition of `virtual CSIT signaling' is  helpful to anticipate and interpret our results for the next two classes of communication networks.

The second class that we consider comprises $2$ user discrete memoryless broadcast channels, $\mN_{Y_1Y_2|X}$. While in this case there is no notion of state inherent to the channel \emph{per se}, it is well known that broadcast channels are intimately related to channels with state known non-causally at the transmitter. This is because the message (say, $W_2$) to be transmitted to a user (User $2$) impacts the codebook that can be used to communicate with the other user (User $1$). Since $W_2$ is known non-causally (i.e., prior to the beginning of transmission) to the transmitter, it acts somewhat like a channel state that is known non-causally to the transmitter, from the perspective of the point to point communication between the transmitter and User $1$. For a general $2$ user discrete memoryless broadcast channel $\mN_{Y_1Y_2|X}$ with NS-assistance between the transmitter and User $1$, we prove (Theorem \ref{thm:bipartite_NSBC}) that the capacity region $\mathcal{C}^{\NSa}(\mN)$ is the union (over all $\mP_{XU}$) of the rate pairs $(R_1,R_2)$ such that $R_1\leq I(X;Y_1)$, $R_2\leq I(U;Y_2)$ and $R_1+R_2\leq I(X;Y_1\mid U)+I(U;Y_2)$. This region is recognizable as the classical capacity region $\mathcal{C}^{\C,\SIa}(\mN)$ of a related setting of $\mN$, where User $2$'s desired message $W_2$ is provided in advance to User $1$ as side-information. The characterization of $\mathcal{C}^{\C,\SIa}(\mN)$ is available explicitly as a special case of a more general result of Kramer and Shamai in \cite{kramer2007capacity}.
In particular, we show (Theorem \ref{thm:bipartite_NSBC}) that $\mathcal{C}^{\NSa}(\mN)=\mathcal{C}^{\NS,\SIa}(\mN)=\mathcal{C}^{\C,\SIa}(\mN)$, i.e., NS assistance between transmitter and User $1$ is as helpful as providing User $1$ with $W_2$, but not any more helpful than that. Note that the result is consistent with the `virtual signaling of CSIT' insight, as in this case, it is as if NS-assistance to User $1$ effectively signaled the CSIT $(W_2)$ to User $1$. Within the class of $2$-user broadcast channels, we then consider the sub-class of \emph{semi-deterministic} broadcast channels. For this class we characterize the capacity region even with full (tripartite) NS-assistance between the transmitter and both receivers. The capacity region is the same as if NS-assistance is provided only between the transmitter and the non-deterministic receiver. In fact if NS-assistance is provided only between the transmitter and the \emph{deterministic} receiver, then it does not improve the capacity region at all compared to the classical setting (without NS assistance). This is consistent with the observation that in a semi-deterministic BC,  the capacity region is unchanged if the deterministic receiver is provided the desired message of the other user as side information.  We then offer a limited generalization of the semi-deterministic BC, to the stochastic BC obtained by passing the semi-deterministic BC outputs through separate erasure channels for the two users (provided that the deterministic output does not experience the worse erasure channel). The capacity region with full (tripartite) NS assistance is characterized for this setting in Corollary \ref{cor:semi_det_E} and matches exactly the region described by Sato's converse bound \cite[Prob. 8.8]{NIT}. 

Finally, the third class that we consider comprises $K$ user discrete memoryless broadcast channels. Our main result for this setting (Theorem \ref{thm:extension}) shows that  the finite-blocklength optimal probability of successful decoding under  full $(K+1)$-partite NS assistance is the same as if  each User $k$, $k\in[K]$ was additionally provided the desired messages of Users $k+1,\cdots,K$ as side-information. The side-information structure intuitively corresponds to a causal encoding order where the users' messages are encoded sequentially starting from User $K$ and ending with User $1$, so that in terms of the encoding for User $k$, the messages of previously encoded users with indices $k+1,\cdots, K$ comprise non-causal CSIT available to the encoder. Note that the ordering of users can be chosen arbitrarily.  
\section{Preliminaries}
\subsection{Notation}
$\mathbb{R}_+$ is the set of non-negative reals. $\mathbb{N}$ is the set of positive integers. For $n \in \mathbb{N}$, $[n] \triangleq \{1,2,\cdots, n\}$. 
$A_i^j$ denotes $[A_i,A_{i+1},\cdots, A_j]$, and  $A^n$ denotes $[A_1,A_2,\cdots, A_n]$. 
$\mathbb{F}_q$ is the finite field with order $q$, where $q$ is a power of a prime.  We write $g(n) = o(f(n))$ if $\lim_{n\to \infty} \frac{g(n)}{f(n)}$ $  = 0$.  The Cartesian product of sets $\mathcal{A}$ and $\mathcal{B}$ is $\mathcal{A} \times \mathcal{B}$.
The indicator function $\mathbb{I}(p)$ returns $1$ if the predicate $p$ is true, and $0$ otherwise. $\Pr(E)$ denotes the probability of the event $E$.  ``If and only if" is written as ``iff".

\subsection{Non-signaling correlations}
For a discrete set $\mathcal{X}$ of finite cardinality $|\mathcal{X}|<\infty$, let $\mathcal{P}(\mathcal{X})$ denote the set of probability mass functions on $\mathcal{X}$.  Let $\mathcal{P}(\mathcal{Y} \mid \mathcal{X})$ denote the set of conditional probability distributions where the input  variable is defined on the set $\mathcal{X}$ and the output variable is defined on the set $\mathcal{Y}$. Given $\mP_{Y_1Y_2\mid X} \in \mathcal{P}(\mathcal{Y}_1\times \mathcal{Y}_2\mid \mathcal{X})$, the conditional marginal distribution of $Y_1$ is defined as $\mP_{Y_1\mid X} \in \mathcal{P}(\mathcal{Y}_1\mid \mathcal{X})$ such that $\mP_{Y_1\mid X}(y_1\mid x) =\sum_{y_2\in \mathcal{Y}_2} \mP_{Y_1Y_2\mid X}(y_1,y_2\mid x)$ for all $y_1\in \mathcal{Y}_1,x\in \mathcal{X}$. The conditional marginal distribution of $Y_2$ is defined similarly.

For $K\geq 2$, let $\mathcal{P}_K^{\NS}(\mathcal{B}_1,\cdots, \mathcal{B}_K\mid \mathcal{A}_1,\cdots, \mathcal{A}_K) \subseteq \mathcal{P}(\mathcal{B}_1\times \cdots \times \mathcal{B}_K\mid \mathcal{A}_1\times \cdots \times \mathcal{A}_K)$ be the set of $K$-partite \emph{non-signaling} (NS) correlations (boxes), defined such that  $\mP_{B_1B_2\cdots B_K\mid A_1A_2\cdots A_K}\in \mathcal{P}_K^{\NS}(\mathcal{B}_1,\cdots, \mathcal{B}_K\mid \mathcal{A}_1,\cdots, \mathcal{A}_K)$ iff for all\footnote{In fact \cite{barrett2005nonlocal, masanes2006general} have shown that it suffices to let $\mathcal{U}=\{k\}$ for $k\in [K]$.} non-trivial bipartitions  $[K]=\mathcal{U} \cup \mathcal{V}$,  say $\mathcal{U} = \{u_1,u_2,\cdots, $ $u_m\}$, $\mathcal{V} = \{v_1,v_2,\cdots,v_n\}$, the (conditional) marginal distribution of $(B_{u_1},\cdots, B_{u_m})$ satisfies,
\begin{align} \label{eq:def_NS_condition}
  &\mP_{B_{u_1}\cdots B_{u_m} \mid A_{u_1}\cdots A_{u_m}A_{v_1}\cdots A_{v_n}}(b_{u_1},\cdots,b_{u_m}\mid a_{u_1},\cdots, a_{u_m}, a_{v_1},\cdots, a_{v_n}) \notag \\
  &=\mP_{B_{u_1}\cdots B_{u_m} \mid A_{u_1}\cdots A_{u_m}A_{v_1}\cdots A_{v_n}}(b_{u_1},\cdots,b_{u_m}\mid a_{u_1},\cdots, a_{u_m}, a_{v_1}',\cdots, a_{v_n}')   \\
  &\triangleq \mP_{B_{u_1}\cdots B_{u_m} \mid A_{u_1}\cdots A_{u_m}}(b_{u_1},\cdots,b_{u_m}\mid a_{u_1},\cdots, a_{u_m})
\end{align}
for all $\{b_{u_i},a_{u_i}\}_{i=1}^m, \{a_{v_j},a'_{v_j}\}_{j=1}^n $ with values chosen from their respective alphabets. In words, the condition says that the  marginal distribution of the outputs of any subset of parties only depends on the inputs of those parties. Note that in the last step we have redefined the marginal distribution for the parties indexed by $(u_1,\cdots, u_m)$ as $\mP_{B_1B_2\cdots B_K\mid A_1A_2\cdots A_K}$, to reflect that it is only a function of their own inputs and outputs.

\section{Problem Formulation I:  Channel with State}

Let $\mathcal{X}$, $\mathcal{Y}$ and $\mathcal{S}$ denote the alphabets for the input, output, and state, respectively. A channel with state is described by $\mN= (\mN_{Y\mid XS},\mP_S)$. Here $\mN_{Y\mid XS} \in \mathcal{P}(\mathcal{Y}\mid \mathcal{X}\times \mathcal{S})$ describes the output distribution of the channel given any input and  state, and $\mP_S \in \mathcal{P}(\mathcal{S})$ is the probability distribution of the states. 
A message $W\in \mathcal{M}$ is uniformly generated and made available to the transmitter, and needs to be communicated to the receiver.

\subsection{Coding for Channel with State}\label{sec:p2pMC}

\begin{figure}[htbp]
\center
\begin{tikzpicture}
  \node (Z) [rectangle, draw, thick, minimum width = 8cm, minimum height = 2cm] at (0,0) {};
  \draw[-, thick] (0,1) -- (0,-1);
  \node [rectangle, minimum height = 1cm, fill=gray!30] at (0,0) {\small $\mathsf{Z}(x,\hat{w}\mid [w,s],y)$};
  
  \node (IT) [rectangle, minimum width = 0.6cm,  fill=black]  at (-3,0.9) {};
  \node [below =0cm of IT] {\small $[w,s]$};
  
  \node (IR) [rectangle, minimum width=0.6cm, fill=black] at (3,-0.9) {};
  \node [above  =0cm of IR] {\small $y$};
  
  \node (OT)  [] at ($(IT.south)+(0,-1.6)$){\small $x$};
  
  \node (OR)  [] at ($(IR.north)+(0,1.5)$){\small $\hat{w}$};

  \node (N) [rectangle, draw, thick, minimum width = 3.2cm, minimum height = 1.5cm] at (0,-3) {};
  
  \node [rectangle, minimum height = 1cm, minimum width=2cm, fill=gray!30] at (0,-3) {\small $\mathsf{N}_{Y\mid XS} $};
  
  \node (X)[rectangle, fill=black, minimum height = 0.4cm] at ($(N.west)+(0.1,0.35)$){};
  
  \node (SN)[rectangle, fill=black, minimum height = 0.4cm] at ($(N.west)+(0.1,-0.35)$){};
  
  \node (S) at (-5,-1.8) [rectangle, draw, minimum height=1cm, thick] {$S\sim \mP_S$};
  
  \node (Y)[] at ($(N.east)+(-0.15,0)$){};
  
  \node (Tx) [above =1cm of IT] {\sc Transmitter};
  
  \node (Rx) [above =1cm of OR] {\sc Receiver};
  
  \node (C) [below =0cm of N] {\small \sc Channel with state $S$};
  
  \node  [above =0cm of Z] {\small \sc Non-Signaling Box};
  
  \draw[-{Latex[length=2.5mm]}, thick] (Tx) -- (IT) node[pos=0.5,left]{$[W,S]$};
  
  \draw[-{Latex[length=2.5mm]}, thick] (OT) |- (X) node[pos=0.25,left]{$X$};

  \draw[-{Latex[length=2.5mm]}, thick] (S.south) |- (SN);
  
  \draw[-{Latex[length=2.5mm]}, thick] (S.north) |- (Tx);
  
  \draw[-{Latex[length=2.5mm]}, thick] (Y) -| (IR) node[pos=0.75,right]{$Y$};

  \draw[-{Latex[length=2.5mm]}, thick] (OR) -- (Rx) node[pos=0.5,right]{$\hat{W}$};
  
\end{tikzpicture}
\caption{NS-assisted Coding scheme for channel with state.} \label{fig:CWS}
\end{figure}
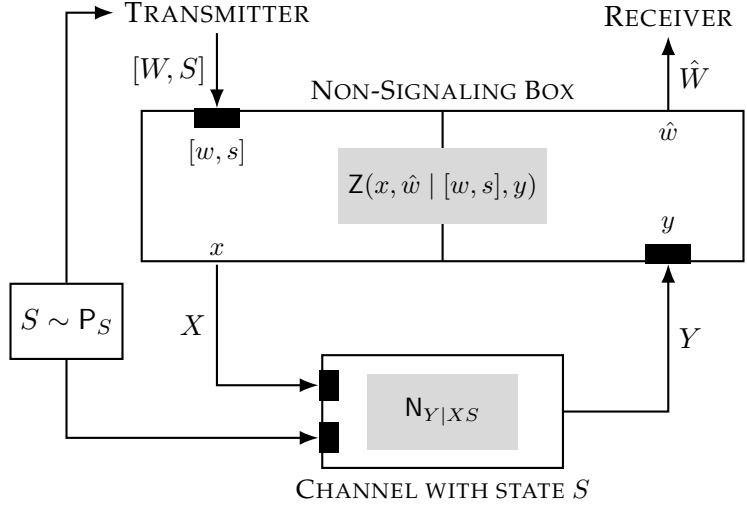

A coding scheme $\mZ$ for $\mN$ is specified by $\mZ\in \mathcal{P}(\mathcal{X} \times \mathcal{M} \mid \mathcal{M} \times \mathcal{S} \times \mathcal{Y})$. See Fig. \ref{fig:CWS}. Without loss of generality, assume $\mathcal{M} = [M]$, so that $M = |\mathcal{M}|$. We are  interested in two kinds of coding schemes (cf. \cite{matthews2012linear, cubitt2011zero}), denoted as $\mathcal{Z}^{\C}(M,\mN)$ and $\mathcal{Z}^{\NS}(M,\mN)$, for $M\in \mathbb{N}$, defined as follows.

\begin{itemize}[leftmargin=*]
  \item ``Classical" (C): $\mZ\in \mathcal{Z}^{\C}(M,\mN)$ iff there exists an encoder $\mE \in \mathcal{P}(\mathcal{X}\mid \mathcal{M}\times \mathcal{S} \times \mathcal{Q})$, a  decoder  $\mD\in \mathcal{P}(\mathcal{M}\mid \mathcal{Y}\times \mathcal{Q})$ and a distribution $\mP_Q\in \mathcal{P}(\mathcal{Q})$ such that $\mZ(x,\hat{w} \mid w,s,y) = \sum_{q\in \mathcal{Q}} \mP_Q(q) \mE(x\mid w,s,q) \mD(\hat{w}\mid y,q)$ for all $w\in \mathcal{M}, \hat{w} \in \mathcal{M}, x\in \mathcal{X}, y\in \mathcal{Y}, s\in \mathcal{S}$.
  \item ``Non-signaling" (NS): $\mZ \in \mathcal{Z}^{\NS}(M,\mN)$ iff $\mZ\in \mathcal{P}^{\NS}_2(\mathcal{X},\mathcal{M}\mid \mathcal{M}\times \mathcal{S}, \mathcal{Y})$.\footnote{According to the definition of $\mathcal{P}^{\NS}_2$, what this means explicitly is that $\sum_{x\in \mathcal{X}}\mZ(x,\hat{w}\mid w,s,y) = \sum_{x\in \mathcal{X}}\mZ(x,$ $\hat{w}\mid w',s',y)$ for all $\hat{w} \in \mathcal{M}, y\in \mathcal{Y}, w\in \mathcal{M}, w'\in \mathcal{M}, s\in \mathcal{S}, s'\in \mathcal{S}$, and $\sum_{\hat{w} \in \mathcal{M}}\mZ(x,\hat{w}\mid w,s,y) = \sum_{\hat{w} \in \mathcal{X}}\mZ(x,\hat{w}\mid w,s,y')$ for all $x\in \mathcal{X}, w\in \mathcal{M}, s\in \mathcal{S}, y\in \mathcal{Y}, y'\in \mathcal{Y}$.}
\end{itemize}
Since any classical coding scheme is non-signaling, $\mathcal{Z}^{\C}(M,\mN) \subseteq  \mathcal{Z}^{\NS}(M,\mN)$. 
A coding scheme $\mZ \in \mathcal{Z}^{\NS}(M,\mN)$ works as follows. The transmitter inputs $(W,S)$ to $\mZ$, and obtains the output $X$. This $X$ is sent through the channel $\mN$.  The user inputs $Y$ to $\mZ$, and obtains the decoded message $\hat{W}$.
For any given $\mN=(\mN_{Y\mid XS}, \mP_S)$ and $\mZ\in \mathcal{Z}^{\NS}(M,\mN)$, the joint distribution of all variables of interest is,
\begin{align}
  &\Pr(W=w,\hat{W}=\hat{w},S=s,X=x,Y=y)\notag\\
  &=\frac{1}{M} \mP_S(s)\Pr(\hat{W}=\hat{w},X=x,Y=y\mid W=w,S=s)\\
  &=\frac{1}{M} \mP_S(s) \mN_{Y\mid XS}(y\mid x,s) \mZ(x,\hat{w} \mid w, s, y)
\end{align}
where the last step holds because $\mZ\in \mathcal{Z}^{\NS}(M,\mN)$ (cf. \cite[Eq. (9)]{matthews2012linear}).

Define the probability of success $\eta(\mZ)$ associated with $\mZ\in \mathcal{Z}^{\NS}(M,\mN)$ as
\begin{align}
  &\eta(\mZ) \triangleq \Pr(W=\hat{W})\\
  &= \frac{1}{M}\sum_{w\in \mathcal{M},s\in \mathcal{S},  x\in \mathcal{X}, y\in \mathcal{Y}} \mP_S(s) \mN_{Y\mid XS}(y\mid x,s) \mZ(x,w\mid w, s, y) \label{eq:def_eta_cws}
\end{align}
Note that $\eta$ also depends on $(M, \mN_{Y\mid XS}, \mP_S)$, but we suppress these parameters for compact notation, as they can be inferred from the context.

\begin{remark}
  The coding schemes naturally allow the availability of the channel state information at the transmitter, commonly referred to as the setting of coding with CSIT. Any available channel state information at the receiver (CSIR) can be modeled by including it explicitly into the output of the channel. Therefore, the framework allows modeling of general channel state information (CSI) settings at both the transmitter and the receiver.
\end{remark}

\begin{remark} \label{rem:shared_randomness}
  Our formulation of classical coding schemes corresponds to coding with shared randomness (SR) in \cite{cubitt2011zero}.  In terms of maximal probability of success (equivalently, minimal probability of error), it suffices to consider a smaller set of coding schemes under SR, referred to as ``no-correlation" (NC) in \cite{cubitt2011zero, matthews2012linear}, for which $|\mathcal{Q}|=1$. It is not difficult to see that shared randomness cannot improve the probability of success for classical coding schemes, as a scheme with shared randomness is equivalently a convex combination of no-correlation (deterministic) schemes, and the probability of success of the scheme with shared randomness is equal to the (same) convex combination of the probabilities of success of those no-correlation schemes. Therefore, the maximal probability of success achievable by schemes with shared randomness is also  achievable with a no-correlation scheme.
\end{remark}

\subsection{Classical and Non-Signaling Assisted Capacity of Channel with State}
For $n\in \mathbb{N}$, let $\mN^{\otimes n}= (\mN_{Y^n\mid X^nS^n}, \mP_{S^n})$ represent $n$ uses of the channel $\mN=(\mN_{Y\mid XS}, \mP_S)$. Then $\mN^{\otimes n}$ is also a channel with state, defined such that,
\begin{align}
\begin{cases}
  \mN_{Y^n\mid X^nS^n}(y^n \mid x^n, s^n) = \prod_{i=1}^n \mN_{Y\mid XS}(y_i\mid x_i,s_i), \\
  \mP_{S^n}(s^n) = \prod_{i=1}^n \mP_S(s_i).
\end{cases}
\end{align}

A rate $R\in \mathbb{R}_+$ (measured in bits/channel-use) is said to be achievable classically (or with NS assistance) iff there exists a sequence of coding schemes $\mZ_n\in \mathcal{Z}^{\C}(M_n,\mN^{\otimes n})$ (or $\mZ_n \in \mathcal{Z}^{\NS}(M_n,\mN^{\otimes n})$) such that the limit of the probability of success $\lim_{n\to \infty} \eta(\mZ_n) = 1$ and the limit of the ratio $\lim_{n\to \infty}\frac{\log_2 M_n}{n} \geq R$. The classical capacity $C^{\C}(\mN)$ (or NS assisted capacity $C^{\NS}(\mN)$) is defined as the supremum of the set of rates that are achievable classically (or with NS assistance).

\begin{remark}
  A coding scheme with $n>1$, whether classical or with NS assistance, assumes non-causal CSIT, i.e., the output distribution for each $X_i$, $i\in[n]$ is allowed to depend on the entire state sequence $S^n$.
\end{remark}

\subsection{Results: NS Assisted Success Probability and Capacity for Channel with State}
Given a channel with state, $\mN=(\mN_{Y\mid XS}, \mP_S)$, let $\bar{\mN} = (\bar{\mN}_{YS_{\tR} \mid XS}, \mP_S)$ be another channel with state where the state is made available to the receiver (as an additional output, $S_{\tR}$). Formally, we define,
\begin{align}
  \bar{\mN}_{YS_{\tR}\mid XS}(y,s_{\tR} \mid x,s) \triangleq \mN_{Y\mid XS}(y\mid x,s) \times \mathbb{I}(s_{\tR}=s),\label{eq:defnbar}
\end{align}
for all $s_{\tR}\in \mathcal{S}, s \in \mathcal{S}, x\in \mathcal{X}, y\in \mathcal{Y}$. Note that $\bar{\mN}$ is still within the framework of channel with state. With this we are ready to present our first theorem.

\begin{theorem}[Virtual Signaling of CSIT] \label{thm:csit_tp}
  For every $M\in \mathbb{N}$,
  \begin{align}
    \max_{\mZ\in \mathcal{Z}^{\NS}(M,~\mN)}\eta(\mZ) = \max_{\mZ \in \mathcal{Z}^{\NS}(M,~\bar{\mN})}\eta(\mZ).
  \end{align}
\end{theorem}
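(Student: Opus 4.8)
The plan is to establish the two inequalities separately. The direction $\max_{\mZ\in\mathcal{Z}^{\NS}(M,\mN)}\eta(\mZ)\le \max_{\mZ\in\mathcal{Z}^{\NS}(M,\bar\mN)}\eta(\mZ)$ is the easy one: any coding scheme $\mZ$ for $\mN$ is also a coding scheme for $\bar\mN$ if we simply have the receiver discard the extra output $S_{\tR}$. Concretely, given $\mZ\in\mathcal{Z}^{\NS}(M,\mN)$ we define $\bar\mZ(x,\hat w\mid w,s,(y,s_{\tR}))\triangleq\mZ(x,\hat w\mid w,s,y)$; this is still in $\mathcal{P}^{\NS}_2$ (ignoring an input coordinate preserves both no-signaling constraints), and since $\bar\mN$ always outputs $s_{\tR}=s$, the success probability is unchanged. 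So the left side is at most the right side.

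The substantive direction is $\max_{\mZ\in\mathcal{Z}^{\NS}(M,\bar\mN)}\eta(\mZ)\le\max_{\mZ\in\mathcal{Z}^{\NS}(M,\mN)}\eta(\mZ)$. Here I am given a NS box $\bar\mZ(x,\hat w\mid w,s,y,s_{\tR})$ for $\bar\mN$ and must build a NS box $\mZ(x,\hat w\mid w,s,y)$ for $\mN$ achieving at least the same success probability. The natural idea is to \emph{simulate} the extra receiver input: the transmitter already knows $s$ (it is an input to the box), so a candidate is $\mZ(x,\hat w\mid w,s,y)\triangleq\bar\mZ(x,\hat w\mid w,s,y,s_{\tR}=s)$, i.e., feed the known state into the $s_{\tR}$ port. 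The success probability of this $\mZ$ over $\mN$ equals the success probability of $\bar\mZ$ over $\bar\mN$ by the defining identity \eqref{eq:defnbar} (the indicator $\mathbb{I}(s_{\tR}=s)$ in $\bar\mN$ exactly forces $s_{\tR}=s$ in the success expression \eqref{eq:def_eta_cws}). The real work is to verify that this $\mZ$ lies in $\mathcal{P}^{\NS}_2(\mathcal{X},\mathcal{M}\mid\mathcal{M}\times\mathcal{S},\mathcal{Y})$, i.e., satisfies the two no-signaling constraints: (i) $\sum_x\mZ(x,\hat w\mid w,s,y)$ is independent of $(w,s)$, and (ii) $\sum_{\hat w}\mZ(x,\hat w\mid w,s,y)$ is independent of $y$.

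Constraint (ii) is immediate: for $\bar\mZ$, $\sum_{\hat w}\bar\mZ(x,\hat w\mid w,s,y,s_{\tR})$ is independent of both $y$ and $s_{\tR}$, hence after substituting $s_{\tR}=s$ it is independent of $y$. Constraint (i) is the main obstacle, because substituting $s_{\tR}=s$ couples the receiver-side input $s_{\tR}$ to the transmitter-side input $s$, and the NS property of $\bar\mZ$ only guarantees $\sum_x\bar\mZ(x,\hat w\mid w,s,y,s_{\tR})$ is independent of $(w,s)$ — it says nothing a priori about how it depends on $s_{\tR}$. So naively $\sum_x\mZ(x,\hat w\mid w,s,y)=\sum_x\bar\mZ(x,\hat w\mid w,s,y,s_{\tR}=s)$ could depend on $s$ through the $s_{\tR}$ slot. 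The way I would resolve this is to note that in $\bar\mN$, the pair $(Y,S_{\tR})$ is the output, and $S_{\tR}$ is a deterministic copy of the true state; a more careful construction reshuffles which variable plays the role of the transmitter's state input. Specifically, I would instead define $\mZ$ by first feeding $w$ and $s$ into the transmitter port of a \emph{modified} box in which the $s_{\tR}$ coordinate is marginalized out in a way compatible with no-signaling — e.g. take $\mZ(x,\hat w\mid w,s,y)\triangleq\sum_{s_{\tR}}\bar\mZ(x,\hat w\mid w,s,y,s_{\tR})\,r(s_{\tR})$ for a fixed distribution $r$, or better, exploit that the transmitter can locally generate a copy of $s$ and route it to $s_{\tR}$ only \emph{after} the box's marginal on $x$ has been pinned down. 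I expect the clean argument to go as follows: since $\bar\mZ$ is no-signaling between the transmitter side $(W,S)\to(X)$ and the receiver side $(Y,S_{\tR})\to(\hat W)$, the marginal box $\sum_{\hat w}\bar\mZ(\cdot\mid w,s,y,s_{\tR})$ depends only on $(w,s)$ and the marginal $\sum_x\bar\mZ(\cdot\mid w,s,y,s_{\tR})$ depends only on $(y,s_{\tR})$; one then writes $\bar\mZ$ in a form (a convex/"glued" combination over a shared hidden index that itself carries no information, as in the Vorob'ev/Fine-type decomposition available for bipartite NS boxes, or directly via the two marginals and a correction term) that makes the substitution $s_{\tR}\leftarrow s$ manifestly preserve both constraints. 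I would present this gluing step as the key lemma, state it for a generic bipartite NS box where one output port is a noiseless copy of a chosen input port, and then apply it. The finite-blocklength claim then follows verbatim for $\mN^{\otimes n}$ and $\bar\mN^{\otimes n}$, since $\overline{\mN^{\otimes n}}=\bar\mN^{\otimes n}$ and the argument is oblivious to product structure.
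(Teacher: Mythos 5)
Your easy direction is fine, and you correctly identify both the natural candidate $\mZ(x,\hat w\mid [w,s],y)\triangleq\bar\mZ(x,\hat w\mid [w,s],[y,s])$ and the precise obstacle: after substituting $s_{\tR}\leftarrow s$, the receiver-side marginal $\sum_x\bar\mZ(x,\hat w\mid [w,s],[y,s])=\mZ_{\tR}(\hat w\mid [y,s])$ may depend on $s$, violating no-signaling from transmitter to receiver. (It genuinely can: take a box whose receiver output is $\hat W=S_{\tR}$ deterministically.) But your proposed resolutions do not close this gap. Averaging over a fixed distribution $r(s_{\tR})$ restores no-signaling but destroys the success probability, since the receiver then effectively sees a random guess of the state rather than the true one. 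And the ``Vorob'ev/Fine-type decomposition'' you invoke as the key lemma is not available for general bipartite NS boxes --- the existence of such a joint/local-hidden-variable decomposition is exactly what NS boxes (e.g., the PR box) fail to have --- so the gluing step you defer to cannot be carried out in the generality you need.

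The missing idea is a \emph{twirling} step that exploits the uniformity of the message: first replace $\bar\mZ$ by
\begin{align*}
\mZ'(x,\hat w\mid [w,s],[y,s_{\tR}])=\frac{1}{M}\sum_{\pi\in\mathbb{C}_M}\bar\mZ(x,\pi(\hat w)\mid [\pi(w),s],[y,s_{\tR}]),
\end{align*}
where $\mathbb{C}_M$ is the cyclic permutation group on $[M]$. This is a convex combination of relabelings of $\bar\mZ$, so it is still NS and, because $W$ is uniform, has the same success probability. Crucially, its receiver-side marginal becomes $\sum_x\mZ'(x,\hat w\mid\cdot)=\frac{1}{M}\sum_\pi\mZ_{\tR}(\pi(\hat w)\mid [y,s_{\tR}])=1/M$, a constant. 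Only then is the substitution $s_{\tR}\leftarrow s$ manifestly harmless for constraint (i) (the marginal is constant, hence trivially independent of $(w,s)$), while constraint (ii) survives as you already observed. Without the twirl, the construction you pin your hopes on simply is not non-signaling.
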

\noindent Note that the LHS is the maximal (optimal) probability of success for NS assisted coding schemes over a channel with state, $\mN$, whereas the RHS is  for NS assisted coding schemes over $\bar{\mN}$. The theorem says that under \emph{NS assistance}, making the CSIT also available to the receiver cannot improve the maximal probability of success, i.e., any advantage in terms of optimal probability of success, of having CSIT also available to the receiver, is already available due to the NS assistance. In \emph{effect}, it is as if all CSIT was already \emph{signaled} to the receiver via NS assistance. This insight is what we refer to as \emph{virtual  signaling of CSIT}. Note that there is no \emph{actual} signaling of CSIT, i.e., the receiver may not explicitly acquire any knowledge of the CSIT. The \emph{`virtual' signaling of CSIT} refers only to the impact/effect of NS assistance on optimal success probability. The details of the proof of Theorem \ref{thm:csit_tp} are left to Appendix \ref{proof:csit_tp}. Let us provide a sketch here. \\[0.1cm]

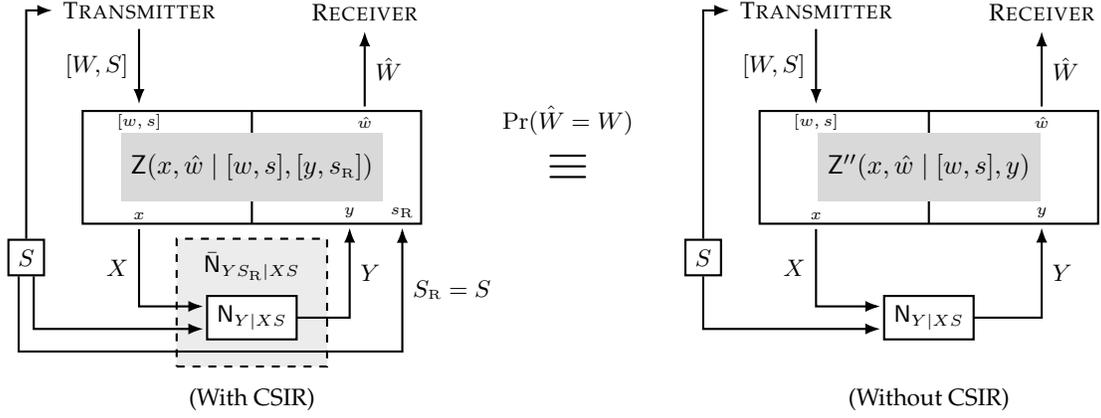
\begin{figure}
\center
\begin{tikzpicture}
\begin{scope}[shift={(0,0)}]
\node (Z) [rectangle, draw, thick, minimum width = 4.5cm, minimum height = 1.5cm] at (0,0) {};
  \draw[-, thick] (0,0.75) -- (0,-0.75);
  \node [rectangle, minimum height = 0.9cm, fill=gray!30] at (0,0) {\small $\mathsf{Z}(x,\hat{w}\mid [w,s],[y,s_{\tR}])$};
  
  \node (IT) [rectangle, minimum width = 0.6cm ]  at (-1.5,0.6) {\tiny $[w,s]$};

  \node (IR1) [rectangle, minimum width=0.6cm ] at (1.3,-0.6) {\tiny $y$};
  \node (IR2) [rectangle, minimum width=0.6cm ] at (2,-0.6) {\tiny $s_{\mathrm{R}}$};
  \node [above =0cm of IR] { };
  
  \node (OT)  [] at ($(IT.south)+(0,-1)$){\tiny $x$};
  
  \node (OR)  [] at ($(IR1.north)+(0.2,1.0)$){\tiny $\hat{w}$};

  \node (Nbar) [rectangle, draw, dashed, thick, minimum width = 2.0cm, minimum height = 1.7cm , fill=gray!15] at (0,-1.8) { };
  \node [above = -0.7cm of Nbar ] {\footnotesize $\bar{\mathsf{N}}_{YS_{\tR} \mid XS} $};
  
  \node (N) [rectangle, draw, thick, fill=white ] at (0,-2) {\footnotesize $\mathsf{N}_{Y \mid XS} $};
  
  \node (X)[rectangle, minimum height = 0.6cm] at ($(N.west)+(0.1,0.15)$){};
  
  \node (S) [rectangle, draw, thick ] at (-3,-1.2) {\footnotesize $S$};
  
  \node (SN)[rectangle, minimum height = 0.6cm] at ($(N.west)+(0.1,-0.15)$){};
  
  \node (Y)[] at ($(N.east)+(-0.15,0)$){};

  \node (Tx) [above =1cm of IT] {\footnotesize \sc Transmitter};
  
  \node (Rx) [above =1cm of OR] {\footnotesize \sc Receiver};
  
  \node (C) [below =0.5cm of N] {\footnotesize (With CSIR)};
  
  \draw[-{Latex[length=2mm]}, thick] (Tx) -- (IT) node[pos=0.5,left]{\footnotesize $[W,S]$};
  
  \draw[-{Latex[length=2mm]}, thick] (OT) |- (X) node[pos=0.25,left]{\footnotesize $X$};
 
  \draw[-{Latex[length=2mm]}, thick] (S.north) |- (Tx);

  \draw[-{Latex[length=2mm]}, thick] (Y) -| (IR1) node[pos=0.75,right]{\footnotesize $Y$};
  \draw[-{Latex[length=2mm]}, thick] ($(S.south)+(-0.1,0)$) -- ($(S.south)+(-0.1,-1)$) -| (IR2) node[pos=0.75,right]{\footnotesize $S_{\tR}=S$};
  
  \draw[-{Latex[length=2mm]}, thick] ($(S.south)+(+0.1,0)$) |- (SN)   node[pos=0.75,right]{ };

  \draw[-{Latex[length=2mm]}, thick] (OR) -- (Rx) node[pos=0.5,right]{\footnotesize $\hat{W}$};
\end{scope}

\begin{scope}[shift={(9,0)}]
\node (Z) [rectangle, draw, thick, minimum width = 4.5cm, minimum height = 1.5cm] at (0,0) {};
  \draw[-, thick] (0,0.75) -- (0,-0.75);
  \node [rectangle, minimum height = 0.9cm, fill=gray!30] at (0,0) {\small $\mathsf{Z}''(x,\hat{w}\mid [w,s],y)$};
  
  \node (IT) [rectangle, minimum width = 0.6cm ]  at (-1.5,0.6) {\tiny $[w,s]$};

  \node (IR) [rectangle, minimum width=0.6cm ] at (1.5,-0.6) {\tiny $y$};
 
  \node [above =0cm of IR] { };
  
  \node (OT)  [] at ($(IT.south)+(0,-1)$){\tiny $x$};
  
  \node (OR)  [] at ($(IR.north)+(0,1.0)$){\tiny $\hat{w}$};
  
  \node (N) [rectangle, draw, thick ] at (0,-2) {\footnotesize $\mathsf{N}_{Y \mid XS}$};
  
  \node (S) [rectangle, draw, thick ] at (-3,-1.2) {\footnotesize $S$};

  \node (X)[rectangle, minimum height = 0.6cm] at ($(N.west)+(0.1,0.15)$){};
  
  \node (SN)[rectangle, minimum height = 0.6cm] at ($(N.west)+(0.1,-0.15)$){};
  
  \node (Y)[] at ($(N.east)+(-0.15,0)$){};

  \node (Tx) [above =1cm of IT] {\footnotesize \sc Transmitter};
  
  \node (Rx) [above =1cm of OR] {\footnotesize \sc Receiver};
  
  \node (C) [below =0.5cm of N] {\footnotesize (Without CSIR)};
  
  \draw[-{Latex[length=2mm]}, thick] (Tx) -- (IT) node[pos=0.5,left]{\footnotesize $[W,S]$};
  
  \draw[-{Latex[length=2mm]}, thick] (OT) |- (X) node[pos=0.25,left]{\footnotesize $X$};
  
  \draw[-{Latex[length=2mm]}, thick] (S.north) |- (Tx);
  
  \draw[-{Latex[length=2mm]}, thick] (S.south) |- (SN);

  \draw[-{Latex[length=2mm]}, thick] (Y) -| (IR) node[pos=0.75,right]{\footnotesize $Y$};

  \draw[-{Latex[length=2mm]}, thick] (OR) -- (Rx) node[pos=0.5,right]{\footnotesize $\hat{W}$};
\end{scope}

\node (Equiv) at (4.2,0){\huge $\equiv$};
\node [above =0cm of Equiv]{\footnotesize $\Pr(\hat{W}=W)$};
\end{tikzpicture}
\caption{Equivalence between $\mZ$ and $\mZ''$ in terms of the probability of successful decoding.}
\label{fig:csit_teleportation}
\end{figure}

\noindent ({\it Proof Sketch}): The direction $\max_{\mZ \in \mathcal{Z}^{\NS}(M,\mN)}\eta(\mZ) \leq \max_{\mZ \in \mathcal{Z}^{\NS}(M,\bar{\mN})}\eta(\mZ)$ is obvious, as giving the CSIT to the receiver cannot hurt the optimal probability of success. To show the other direction, consider any NS assisted coding scheme $\mZ$ with CSIR, i.e.,
  \begin{align*}
    \mZ(x,\hat{w}\mid [w,s],[y,s_{\tR}]) \in \mathcal{Z}^{\NS}(M, \bar{\mN}).
  \end{align*}
  The square brackets on $[w,s]$ and $[y,s_{\tR}]$ simply group inputs to the NS box that correspond to the same party.
  Such a scheme needs both $Y$ and $S_{\tR}$ at the receiver.
  What we will do next is to construct in two steps an NS assisted coding scheme $\mZ''(x,\hat{w} \mid [w,s], y) \in \mathcal{Z}^{\NS}(M, \mN)$ that achieves the same probability of success as $\mZ$, but only needs $Y$ at the receiver. This will prove the other direction.
  First, let $\mathbb{C}_M$ be the cyclic permutation group operating on $[M]$. Then construct
  \begin{align}
    \mZ'(x,\hat{w}\mid [w,s], [y,s_{\tR}]) = \frac{1}{M}\sum_{\pi \in \mathbb{C}_M}\mZ(x, \pi(\hat{w})\mid [\pi(w),s], [y,s_{\tR}])\label{eq:twirl}
  \end{align}
  for all $(w,\hat{w}, x,y,s, s_{\tR})\in \mathcal{M}^2\times \mathcal{X}\times \mathcal{Y} \times \mathcal{S}^2$. 
  We argue that $\eta(\mZ)=\eta(\mZ')$, which can be seen from the fact that $\mZ'$ is simply a convex combination (with uniform coefficients $1/M$) of $M$ schemes, each of which is $\mZ$ after relabeling the message indices according to the cyclic permutation $\pi$, and that relabeling does not affect the probability of success since the message is uniformly generated. 
  
  Next, construct
  \begin{align}
    \mZ''(x,\hat{w}\mid [w,s],y) = \mZ'(x,\hat{w} \mid [w,s],[y,s])
  \end{align}
  for all $(w, \hat{w}, x,y,s)\in \mathcal{M}^2\times \mathcal{X}\times \mathcal{Y} \times \mathcal{S}$. 
  One should verify that $\eta(\mZ'')=\eta(\mZ')$ simply by definition. The last thing is to verify that $\mZ''$ is indeed non-signaling, in particular that $\sum_{x}\mZ''(x,\hat{w} \mid [w,s],y) =1/M$ and that $\sum_{\hat{w}}\mZ''(x,\hat{w} \mid [w,s],y)$ only depends on $(x,w,s)$ (in fact only $(x,s)$). We illustrate the idea in Fig. \ref{fig:csit_teleportation}. The details appear in Appendix \ref{proof:csit_tp}.
  \hfill\qed

\begin{remark}
  The construction of $\mZ'$ in \eqref{eq:twirl} is similar to the twirling argument of \cite{cubitt2011zero, matthews2012linear}, but here it suffices to consider only the cyclic permutations instead of all permutations.
\end{remark}

\begin{theorem}[NS assisted capacity of channel with state] \label{thm:capacity_cws}
  The NS assisted capacity of $\mN=(\mN_{Y\mid X}, \mP_S)$ is,
  \begin{align} \label{eq:capacity_NS_cws}
    C^{\NS}(\mN) &= \max_{\mP_{X\mid S}} I(X;Y\mid S)\\
    &=C^{\C}(\bar\mN)\\
    &=C^{\NS}(\bar\mN), 
  \end{align}
  where the maximum is taken over all  $\mP_{X\mid S}\in \mathcal{P}(\mathcal{X} \mid \mathcal{S})$, and $\bar{\mN} = (\bar{\mN}_{YS_{\tR} \mid XS}, \mP_S)$ is  defined in \eqref{eq:defnbar}.
\end{theorem}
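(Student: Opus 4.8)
The plan is to establish the chain of equalities by combining Theorem \ref{thm:csit_tp} with a classical converse/achievability argument for $\bar{\mN}$. First, I would observe that the last two equalities, $C^{\C}(\bar{\mN}) = C^{\NS}(\bar{\mN})$ and the identification of both with $\max_{\mP_{X\mid S}} I(X;Y\mid S)$, follow essentially from known facts: $\bar{\mN}$ is a channel with state available at \emph{both} transmitter and receiver, for which the classical capacity is the well-known expression $\max_{\mP_{X\mid S}} I(X;Y\mid S)$ (e.g.\ \cite[Sec.~7.4.1]{NIT}); and since $\bar{\mN}$ has CSIR, the transmitter's knowledge of $S$ is irrelevant beyond what CSIR already provides, so $\bar{\mN}$ effectively reduces (after conditioning on the i.i.d.\ state sequence) to a collection of ordinary point-to-point DMCs, for which NS assistance is known not to help \cite{Bennett_Shor_Smolin_Thapliyal_PRL,matthews2012linear,Barman_Fawzi}. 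Hence $C^{\NS}(\bar{\mN}) = C^{\C}(\bar{\mN}) = \max_{\mP_{X\mid S}} I(X;Y\mid S)$. The only remaining equality to prove is $C^{\NS}(\mN) = C^{\NS}(\bar{\mN})$.

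For this equality I would invoke Theorem \ref{thm:csit_tp} applied to the $n$-fold channel $\mN^{\otimes n}$. Note that $\overline{\mN^{\otimes n}}$ --- the $n$-fold channel with the full state sequence $S^n$ handed to the receiver --- coincides with $(\bar{\mN})^{\otimes n}$, since appending $S_{\tR}=S$ to each output and then taking the product is the same as taking the product and then appending the whole state sequence. Therefore Theorem \ref{thm:csit_tp} gives, for every $n$ and every $M$,
\begin{align}
\max_{\mZ\in\mathcal{Z}^{\NS}(M,\,\mN^{\otimes n})}\eta(\mZ)
= \max_{\mZ\in\mathcal{Z}^{\NS}(M,\,(\bar{\mN})^{\otimes n})}\eta(\mZ).
\end{align}
Since a rate $R$ is NS-achievable over $\mN$ iff there is a sequence $M_n$ with $\tfrac{1}{n}\log_2 M_n \to R$ and the corresponding optimal success probabilities tend to $1$, and the displayed equality shows these optimal success probabilities are identical for $\mN$ and $\bar{\mN}$ at every blocklength, the two channels have exactly the same set of NS-achievable rates, hence $C^{\NS}(\mN)=C^{\NS}(\bar{\mN})$.

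Putting the pieces together yields $C^{\NS}(\mN) = C^{\NS}(\bar{\mN}) = C^{\C}(\bar{\mN}) = \max_{\mP_{X\mid S}} I(X;Y\mid S)$, which is the claim. I expect the main obstacle to be the careful justification that $C^{\NS}(\bar{\mN}) = C^{\C}(\bar{\mN})$: although intuitively CSIR ``trivializes'' the state so that the point-to-point no-NS-advantage result applies, one must argue this rigorously for a channel \emph{with} state and non-causal CSIT --- i.e.\ that NS assistance over $\bar{\mN}^{\otimes n}$ is no better than classical, e.g.\ by conditioning on $S^n$ to view the problem as coding over a DMC indexed by the (receiver-known) state and then applying \cite{matthews2012linear}, or alternatively by directly bounding the NS linear program. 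The identification with the single-letter expression $\max_{\mP_{X\mid S}}I(X;Y\mid S)$ is then standard Gelfand--Pinsker-with-CSIR material and should be cited rather than reproved.
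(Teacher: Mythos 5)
Your architecture matches the paper's: both reduce everything to (i) the known classical result $C^{\C}(\bar\mN)=\max_{\mP_{X\mid S}}I(X;Y\mid S)$, (ii) the blocklength-wise equivalence $C^{\NS}(\mN)=C^{\NS}(\bar\mN)$ from Theorem \ref{thm:csit_tp} (your observation that $\overline{\mN^{\otimes n}}=(\bar\mN)^{\otimes n}$ is exactly the needed compatibility), and (iii) a converse showing NS assistance cannot exceed $\max_{\mP_{X\mid S}}I(X;Y\mid S)$. The only structural difference is where the converse is anchored: you prove it for $\bar\mN$ and transfer back, whereas the paper proves it directly for $\mN$ via Lemma \ref{lem:NS_Fano}, which states $I(X;Y\mid S)\geq \eta(\mZ)\log_2 M - H_b(\eta(\mZ))$ for any $\mZ\in\mathcal{Z}^{\NS}(M,\mN)$ and then single-letterizes. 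Given Theorem \ref{thm:csit_tp} the two anchorings are interchangeable, and the method you sketch for the step you defer --- condition on the state and apply the point-to-point NS metaconverse of \cite{matthews2012linear} --- is precisely how Lemma \ref{lem:NS_Fano} is proved.

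Two cautions. First, your phrase ``the transmitter's knowledge of $S$ is irrelevant beyond what CSIR already provides'' is false as a statement about capacity: with CSIR only the capacity is $\max_{\mP_X}I(X;Y\mid S)$, while CSIT additionally lets the input distribution depend on the state, giving the strictly larger $\max_{\mP_{X\mid S}}I(X;Y\mid S)$ in general. This does not derail your argument (you never actually use the claim), but it should be excised. Second, the ``reduction to a collection of DMCs'' is not quite a black-box application of the known point-to-point result, because the NS box is a single global object shared across all state realizations, not one box per state. The step that makes the per-state metaconverse go through is the observation (the heart of Lemma \ref{lem:NS_Fano}) that the non-signaling marginal condition forces the success probability on the \emph{broken} channel --- i.e.\ under the product distribution $\Pr(X=x\mid S=s)\Pr(Y=y\mid S=s)$ --- to equal exactly $1/M$ \emph{conditioned on each} $S=s$; the data-processing inequality for KL divergence then yields $I(X;Y\mid S=s)\geq \eta(\mZ\mid s)\log_2 M-H_b(\eta(\mZ\mid s))$, and concavity of $H_b$ averages this over $s$. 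You correctly flag this as the main obstacle, so the plan is sound, but that specific observation is the missing ingredient you would need to supply.
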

\noindent The proof is provided in Appendix \ref{proof:capacity_cws}. Recall that in the channel $\bar{\mN}$ the state $S$ is available not only to the transmitter -- as CSIT, but also to the receiver -- as CSIR. It is known that the RHS of \eqref{eq:capacity_NS_cws} represents the classical capacity of $\bar{\mN}$, i.e., $C^{\C}(\bar\mN)=\max_{\mP_{X|S}}I(X;Y|S)$ \cite[Sec. 7.4.1]{NIT}. From Theorem \ref{thm:csit_tp} it follows immediately that $C^{\NS}(\mN)=C^{\NS}(\bar\mN)$.  The achievability argument for \eqref{eq:capacity_NS_cws} also follows immediately, as $C^{\NS}(\mN) = C^{\NS}(\bar{\mN}) \geq C^{\C}(\bar{\mN})= \mbox{RHS of } \eqref{eq:capacity_NS_cws}$. To complete the proof of Theorem \ref{thm:capacity_cws}, we only need to show the converse, that the RHS of \eqref{eq:capacity_NS_cws} serves also as an upper bound for $C^{\NS}(\mN)$.

This result stands in contrast with the capacity for the corresponding classical setting, known as the Gelfand-Pinsker Theorem \cite[Thm. 7.3]{NIT}, \cite{gel1980coding},
\begin{align}
  C^{\C}(\mN) = \max_{\mP_{X U\mid S}} \big(I(U;Y) - I(U;S)\big),
\end{align}
where the maximum is taken over all joint distributions of $(X,U)$ conditioned on $S$, $\mP_{XU\mid S} \in \mathcal{P}(\mathcal{X}\times \mathcal{U}\mid \mathcal{S})$ with $|\mathcal{U}|\leq \min\{|\mathcal{X}||\mathcal{S}|, |\mathcal{Y}|+ |\mathcal{S}|-1\}$. The classical capacity of $\mN$ can be strictly smaller than that of $\bar{\mN}$. In fact, in general the gap between them can be arbitrarily large (see Example \ref{ex:fadedirt}). However, Theorems \ref{thm:csit_tp} and \ref{thm:capacity_cws} imply that $\mN$ and $\bar{\mN}$ have  the same NS assisted capacity. Moreover, as an even stronger implication, $\mN$ and $\bar{\mN}$ have the same NS assisted optimal probability of success over any arbitrary fixed finite number of channel-uses.

\begin{example}[Fading dirt\cite{Yao_Jafar_NS_DoF}] \label{ex:fadedirt}
  The $\mathbb{F}_q$ fading dirt channel (analogous to the wireless fading dirt channel \cite{Rini_Shamai_fading_dirt}) $\mN=(\mN_{Y\mid XS}, \mP_S)$ is defined such that $Y= (\bar{Y},G)$, $\bar{Y}= X+G\times S$, where $X,\bar{Y}\in \mathbb{F}_q$, $G$ and $S$ are each uniformly distributed over $\mathbb{F}_q$, and $G$ is independent of $(X,S)$. All operations are over $\mathbb{F}_q$. Since $\log_2 q \geq \max_{\mP_{X\mid S}}H(X)  \geq \max_{\mP_{X\mid S}}I(X;Y\mid S) = \max_{\mP_{X\mid S}}I(X;\bar{Y},G \mid S)   \geq  \max_{\mP_{X\mid S}}H(X\mid S)= \log_2 q$, Theorem \ref{thm:capacity_cws} implies that $C^{\NS}(\mN) = \log_2 q$, recovering a result of \cite{Yao_Jafar_NS_DoF}. In fact, as shown in \cite{Yao_Jafar_NS_DoF}, in this case, the gap between $C^{\NS}(\mN) = C^{\C}(\bar\mN)$ and $C^{\C}(\mN)$ is unbounded as $q\rightarrow\infty$.
\end{example}

\section{Problem Formulation II: NS assistance for $2$-User Broadcast} \label{sec:BC}
In this section, we consider NS assisted coding for 2-user broadcast channels (BC), i.e., a broadcast channel with $2$ receivers.  The main result in this section is the complete characterization of the NS assisted capacity (region) of all 2-user BCs with bipartite NS assistance established between the transmitter and one of the receivers. Capacity is also characterized for a particular set of 2-user BCs with NS assistance among all parties. We begin with the formal definition of the framework of coding schemes.

A (2-user) broadcast channel is described by $\mN  \in \mathcal{P}(\mathcal{Y}_1\times \mathcal{Y}_2\mid \mathcal{X})$, which specifies the output distribution (at the $2$ users/receivers) given the channel's input. Two independent messages $W_1,W_2$ are generated at the transmitter, such that $W_i$ is to be communicated to User $i$, for $i\in \{1,2\}$.

\subsection{Coding schemes} \label{subsec:coding_schemes_NSBC}
\begin{figure}[htbp]
\center
\begin{tikzpicture}
  \node (Z) [rectangle, draw, thick, minimum width = 8cm, minimum height = 2cm] at (0,0) {};
  \draw[-, thick] (0,1) -- (0,-1);
  \draw[-, thick] (2,1) -- (2,-1);
  \node [rectangle, minimum height = 0.7cm, fill=gray!30] at (1,0) {\small $\mZ(x,\hat{w}_1,\hat{w}_2 \mid [w_1,w_2],y_1,y_2)$};
  
  \node (IT) [rectangle, minimum width = 0.6cm,  fill=black]  at (-3,0.9) {};
  \node [below =0cm of IT] {\small $[w_1,w_2]$};
  
  \node (IR1) [rectangle, minimum width=0.6cm, fill=black] at (1,-0.9) {};
  \node [above  =-0.1cm of IR1] {\small $y_1$};
  
  \node (IR2) [rectangle, minimum width=0.6cm, fill=black] at (3,-0.9) {};
  \node [above  =-0.1cm of IR2] {\small $y_2$};
  
  \node (OT)  [] at ($(IT.south)+(0,-1.6)$){\small $x$};
  
  \node (OR1)  [] at ($(IR1.north)+(0,1.5)$){\small $\hat{w}_1$};
  \node (OR2)  [] at ($(IR2.north)+(0,1.5)$){\small $\hat{w}_2$};

  \node (N) [rectangle, draw, thick, minimum width = 2.5cm, minimum height = 1.5cm] at (-1,-3) {};
  
  \node [rectangle, minimum height = 1cm, fill=gray!30] at (-1,-3) {\small $\mathsf{N}_{Y_1Y_2\mid X}$};
  
  \node (X)[rectangle, fill=black, minimum height = 0.6cm] at ($(N.west)+(0.1,0)$){};
  
  \node (Y1)[] at ($(N.east)+(-0.15,+0.5)$){};
  \node (Y2)[] at ($(N.east)+(-0.15,-0.5)$){};

  \node (Tx) [above =1cm of IT] {\sc Transmitter};
  
  \node (Rx1) [above =1cm of OR1] {\sc User $1$};
  \node (Rx2) [above =1cm of OR2] {\sc User $2$};
  
  \node (C) [below =0cm of N] {\small \sc Broadcast channel};
  
  \draw[-{Latex[length=2.5mm]}, thick] (Tx) -- (IT) node[pos=0.5,left]{$[W_1,W_2]$};
  
  \draw[-{Latex[length=2.5mm]}, thick] (OT) |- (X) node[pos=0.25,left]{$X$};
  
  \draw[-{Latex[length=2.5mm]}, thick] (Y1) -| (IR1) node[pos=0.66,right]{$Y_1$};
  \draw[-{Latex[length=2.5mm]}, thick] (Y2) -| (IR2) node[pos=0.795,right]{$Y_2$};
  
  \draw[-{Latex[length=2.5mm]}, thick] (OR1) -- (Rx1) node[pos=0.5,right]{$\hat{W}_1$};
  \draw[-{Latex[length=2.5mm]}, thick] (OR2) -- (Rx2) node[pos=0.5,right]{$\hat{W}_2$};
  
\end{tikzpicture}
\caption{NS-assisted coding scheme for broadcast channel} \label{fig:BC}
\end{figure}
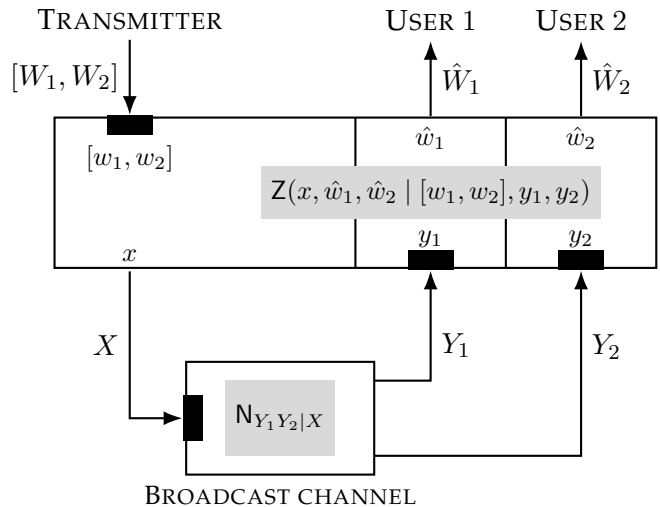

A coding scheme $\mZ$ over the BC is specified by $\mZ\in \mathcal{P}(\mathcal{X}\times \mathcal{M}_1\times \mathcal{M}_2 \mid \mathcal{M}_1\times \mathcal{M}_2\times \mathcal{Y}_1\times \mathcal{Y}_2)$. See Fig. \ref{fig:BC}. Without loss of generality, assume $\mathcal{M}_i = [M_i]$ for $i\in \{1,2\}$, where $M_i = |\mathcal{M}_i|$. We are interested in the following $5$ subsets of coding schemes, denoted as $\mathcal{Z}^{\C}(M_1,M_2,\mN), \mathcal{Z}^{\NS\mbox{-}i}(M_1,M_2,\mN)$ for $i\in \{0,1,2\}$ and $\mathcal{Z}^{\NS}(M_1,M_2,\mN)$, defined as follows. We will not write the domain of the variables explicitly as it is clear from the context.
\begin{itemize}[leftmargin=*]
  \item ``Classical" (C): $\mZ\in \mathcal{Z}^{\C}(M,\mN)$ iff there exists an encoder $\mE\in \mathcal{P}(\mathcal{X}\mid \mathcal{M}_1\times \mathcal{M}_2)$ and two separate decoders $\mD_i \in \mathcal{P}(\mathcal{M}_i \mid \mathcal{Y}_i)$ for $i\in \{1,2\}$, such that $\mZ(x,\hat{w}_1,\hat{w}_2 \mid [w_1,w_2],y_1,y_2)= \mE(x\mid w_1,w_2) \times \mD_1(\hat{w}_1\mid y_1) \times \mD_2(\hat{w}_2\mid y_2)$,  concisely written as $\mZ=\mE\times \mD_1\times \mD_2$.
  \item ``Bipartite NS assistance between the two users" (NS-0): $\mZ\in \mathcal{Z}^{\NSz}(M,\mN)$ iff there exists an encoder $\mE\in \mathcal{P}(\mathcal{X} \mid \mathcal{M}_1\times \mathcal{M}_2)$ and an  NS assisted  bipartite decoder $\mD\in \mathcal{P}^{\NS}_2(\mathcal{M}_1, \mathcal{M}_2\mid \mathcal{Y}_1, \mathcal{Y}_2)$, such that $\mZ(x,\hat{w}_1,\hat{w}_2\mid [w_1,w_2], y_1,y_2) = \mE(x\mid w_1,w_2)\times \mD(\hat{w}_1,\hat{w}_2\mid y_1,y_2)$, i.e., $\mZ=\mE\times \mD$.  This setting allows a bipartite NS box shared only between the two receivers. There is no communication between the two receivers.  
  \item ``Bipartite NS assistance between the transmitter and User $i$" (NS-$i$): For $i\in \{1,2\}$, \\$\mZ\in\mathcal{Z}^{\NS\text{-}{i}}(M,\mN)$ iff there exists $\mZ_i \in \mathcal{P}^{\NS}_2(\mathcal{X}, \mathcal{M}_i \mid \mathcal{M}_1\times \mathcal{M}_2, \mathcal{Y}_i)$ and a separate decoder $\mD_j\in \mathcal{P}(\mathcal{M}_j\mid \mathcal{Y}_j)$ ($\{j\}=\{1,2\}\setminus \{i\}$), such that $\mZ(x,\hat{w}_1,\hat{w}_2\mid [w_1,w_2],y_1,y_2) = \mZ_i(x,\hat{w}_i\mid [w_1,w_2], y_i)\times D_j(\hat{w}_j\mid y_j)$, i.e., $\mZ=\mZ_i\times \mD_j$.
  \item ``Full NS assistance between three parties" (NS): $\mZ\in \mathcal{Z}^{\NS}(M,\mN)$ iff $\mZ\in \mathcal{P}^{\NS}_3(\mathcal{X},\mathcal{M}_1,\mathcal{M}_2\mid \mathcal{M}_1\times\mathcal{M}_2,\mathcal{Y}_1,\mathcal{Y}_2)$. 
\end{itemize} 
Clearly, $\mathcal{Z}^{\C}(M_1,M_2,\mN) \subseteq \mathcal{Z}^{\NS\text{\tiny-}{i}}(M_1,M_2,\mN) \subseteq \mathcal{Z}^{\NS}(M_1,M_2,\mN)$ for $i\in \{0,1,2\}$. 
A coding scheme $\mZ \in \mathcal{Z}^{\NS}(M_1,M_2,\mN)$ works as follows. The transmitter inputs $(W_1,W_2)$ to $\mZ$, and obtains the output $X$. This $X$ is sent through the channel $\mN$. For $k\in \{1,2\}$, User $k$ inputs $Y_k$ to $\mZ$, and obtains the decoded message $\hat{W}_k$.
Given $\mN\in \mathcal{P}(\mathcal{Y}_1\times \mathcal{Y}_2\mid \mathcal{X})$ and $\mZ\in \mathcal{Z}^{\NS}(M_1,M_2,N)$, the joint  distribution of all variables of interest is,
\begin{align}
  &\Pr(\hat{W}_i=\hat{w}_i, W_j=w_j,X=x,Y_k=y_k,\forall (i,j,k)\in\{1,2\}^3) \\
  &=\frac{1}{M_1M_2} \Pr(\hat{W}_1=\hat{w}_1, \hat{W}_2= \hat{w}_2, X=x,Y_1=y_1,Y_2=y_2 \mid W_1=w_1,W_2=w_2)\\
  &=\frac{1}{M_1M_2}  \mZ(x,\hat{w}_1,\hat{w}_2\mid [w_1,w_2],y_1,y_2)\times \mN(y_1,y_2\mid x)
\end{align}
where the last step holds because $\mZ\in \mathcal{Z}^{\NS}(M_1,M_2,\mN)$ (cf. \cite[Sec. IV.B]{fawzi2024broadcast}).

The \emph{joint} probability of success associated with $\mZ\in \mathcal{Z}^{\NS}(M_1,M_2,\mN)$ is defined as
\begin{align}
  &\eta(\mZ) \triangleq \Pr(W_1=\hat{W}_1,W_2=\hat{W}_2) \notag \\
  &= \frac{1}{M_1M_2} \sum_{w_1,w_2,x,y_1,y_2} \mZ(x,w_1,w_2\mid [w_1,w_2],y_1,y_2)\times \mN(y_1,y_2\mid x)
\end{align}

The \emph{individual} probability of success associated with $\mZ\in \mathcal{Z}^{\NS}(M_1,M_2,\mN)$ for message $W_i$, $i\in \{1,2\}$ is similarly defined as, (with $\{j\}=\{1,2\}\setminus \{i\}$)
\begin{align}
  &\eta_i(\mZ) \triangleq \Pr(W_i=\hat{W}_i) \notag \\
  &= \frac{1}{M_1M_2} \sum_{w_i,w_j,\hat{w}_j,x,y_1,y_2}  N(y_1,y_2\mid x) \times 
  \begin{cases}
     \mZ(x,w_1,\hat{w}_2\mid [w_1,w_2],y_1,y_2), & i=1\\
     \mZ(x,\hat{w}_1,w_2\mid [w_1,w_2],y_1,y_2), & i=2
  \end{cases}
\end{align}
\begin{remark}
  Shared randomness is not considered for simplicity for the classes ''$\mathrm{C}$", ''$\mathrm{NS}\text{-}i$," $i\in \{0,1,2\}$. This is without loss of generality when considering the optimal probability of success, as is explained in Remark \ref{rem:shared_randomness}.
\end{remark}

\subsection{Achievable rate pairs and capacity region}
For $n\in \mathbb{N}$, let $\mN^{\otimes n} \in \mathcal{P}(\mathcal{Y}_1^n \times \mathcal{Y}_2^n \mid \mathcal{X}^n)$ be $n$ uses of the broadcast channel $\mN$, defined as
\begin{align}
  \mN^{\otimes n}(y_1^n, y_2^n \mid x^n) = \prod_{i=1}^n \mN(y_{1,i}, y_{2,i}\mid x_i).
\end{align}
For $\PH\in \{``\mathrm{C}", ``\mathrm{NS}\text{-}0", ``\mathrm{NS}\text{-}1", ``\mathrm{NS}\text{-}2", ``\mathrm{NS}"\}$,
a rate pair $(R_1,R_2) \in \mathbb{R}_+^2$ is said to be achievable
\begin{itemize}[leftmargin=*]
  \item classically, iff $\PH = ``\mathrm{C}"$
  \item by bipartite NS assistance between the two users iff $\PH=``\mathrm{NS}\text{-}0"$
  \item by bipartite NS assistance between the transmitter and User 1 iff $\PH=``\mathrm{NS}\text{-}1"$
  \item by bipartite NS assistance between the transmitter and User 2 iff $\PH=``\mathrm{NS}\text{-}2"$
  \item by full NS assistance among the three parties iff $\PH=``\mathrm{NS}"$
\end{itemize}
and iff there exists a sequence of coding schemes $\mZ_n\in \mathcal{Z}^{\PH}(M_{1,n},M_{2,n},\mN^{\otimes n})$ such that $\lim_{n\to \infty} \eta(\mZ) = 1$ and $\lim_{n\to \infty}\frac{\log_2(M_{i,n})}{n}\geq R_i$ for $i\in \{1,2\}$. Equivalently, one can replace the first condition with $\lim_{n\to \infty}\eta_i(\mZ) = 1$ for $i\in \{1,2\}$, as one  implies the other.

The capacity region $\mathcal{C}^{\PH}$ for $\PH\in  \{``\mathrm{C}", ``\mathrm{NS}\text{-}0", ``\mathrm{NS}\text{-}1", ``\mathrm{NS}\text{-}2", ``\mathrm{NS}" \}$ is defined as the closure of the set of  rate pairs achievable by the corresponding class of coding schemes.

\subsection{Coding with side information}
For our purpose, it is useful to introduce another setting where User $1$ is   provided with the message $W_2$, a special case of the problem studied in \cite{kramer2007capacity}. Let us refer to this setting as coding with side information (at User $1$). Given a broadcast channel $\mN$, a coding scheme with side information is specified by $\mZ\in \mathcal{P}(\mathcal{X}\times \mathcal{M}_1\times \mathcal{M}_2 \mid [\mathcal{M}_1\times \mathcal{M}_2] \times [\mathcal{Y}_1\times \mathcal{M}_2]\times \mathcal{Y}_2)$. The difference now is that User $1$ decodes the message $W_1$ from $Y_1$ and $W_2$ together, since $W_2$ is given as its side information. We can similarly introduce the classes of coding schemes according to the availability of NS assistance as in Section \ref{subsec:coding_schemes_NSBC}. In particular, 

\begin{itemize}[leftmargin=*]
  \item $\mathcal{Z}^{\C,\SIa}$ is defined such that $\mZ\in \mathcal{Z}^{\C,\SIa}$ iff there exists an encoder $\mE\in \mathcal{P}(\mathcal{X}\mid \mathcal{M}_1\times \mathcal{M}_2)$ and two separate decoders $\mD_1 \in \mathcal{P}(\mathcal{M}_1 \mid \mathcal{Y}_1\times \mathcal{M}_2)$, $\mD_2 \in \mathcal{P}(\mathcal{M}_2 \mid \mathcal{Y}_2)$ such that $\mZ(x,\hat{w}_1,\hat{w}_2 \mid [w_1,w_2],[y_1,w_2^{\tR}],y_2)= \mE(x\mid w_1,w_2) \times \mD_1(\hat{w}_1\mid y_1,w_2^{\tR}) \times \mD_2(\hat{w}_2\mid y_2)$, i.e., $\mZ=\mE\times \mD_1\times \mD_2$.
  \item $\mathcal{Z}^{\NSa,\SIa}$ is defined such that $\mZ\in \mathcal{Z}^{\NSa,\SIa}$ iff there exists $\mZ_1\in \mathcal{P}^{\NS}_2(\mathcal{X},\mathcal{M}_1\mid \mathcal{M}_1\times \mathcal{M}_2,\mathcal{Y}_1\times \mathcal{M}_2)$ and a separate decoder $\mD_2\in \mathcal{P}(\mathcal{M}_2\mid \mathcal{Y}_2)$, such that $\mZ(x,\hat{w}_1,\hat{w}_2\mid [w_1,w_2],[y_1,w_2^{\tR}],y_2) = \mZ_1(x,\hat{w}_1\mid [w_1,w_2],[y_1,w_2^{\tR}])\times \mD_2(\hat{w}_2\mid y_2)$, i.e., $\mZ=\mZ_1\times \mD_2$. 
  \item $\mathcal{Z}^{\NS,\SIa}=\mathcal{P}^{\NS}_3(\mathcal{X},\mathcal{M}_1,\mathcal{M}_2\mid \mathcal{M}_1\times \mathcal{M}_2,\mathcal{Y}_1\times \mathcal{M}_2,\mathcal{Y}_2)$.
\end{itemize}
The extra label ``$\mathrm{SI}\text{-}1$" indicates that the side information $W_2$ is available to User 1. One should note the extra input at User $1$ in the coding schemes. 
A coding scheme $\mZ \in \mathcal{Z}^{\NS,\SIa}(M_1,M_2,\mN)$ works as follows. The transmitter inputs $(W_1,W_2)$ to $\mZ$, and obtains the output $X$. This $X$ goes through the channel $\mN$.  User $1$ inputs $(Y_1,W_2)$ to $\mZ$, and obtains the decoded message $\hat{W}_1$. User $2$ inputs $Y_2$ to $\mZ$, and obtains the decoded message $\hat{W}_2$.
For $\mZ\in \mathcal{Z}^{\NS,\SIa}(M_1,M_2,\mN)$, 
the joint probability of success is
\begin{align}
  &\eta(\mZ) = \Pr(W_1=\hat{W}_1, W_2=\hat{W}_2)\notag \\
  &=\frac{1}{M_1M_2} \sum_{w_1,w_2,x,y_1,y_2}   \mZ(x,w_1,w_2\mid [w_1,w_2],[y_1,w_2],y_2) \times \mN(y_1,y_2\mid x)
\end{align}
and the individual probability of success is, for $i\in \{1,2\}$, $\{j\}=\{1,2\}\setminus\{i\}$,
\begin{align}
  &\eta_i(\mZ) = \Pr(W_i=\hat{W}_i) \notag \\
  &= \frac{1}{M_1M_2} \sum_{w_i,w_j,\hat{w}_j,x,y_1,y_2}  \mN(y_1,y_2\mid x) \times 
  \begin{cases}
     \mZ(x,w_1,\hat{w}_2\mid [w_1,w_2],[y_1,w_2],y_2), & i=1\\
     \mZ(x,\hat{w}_1,w_2\mid [w_1,w_2],[y_1,w_2],y_2), & i=2
  \end{cases}
\end{align}
We accordingly define the capacity regions with side information at User $1$ as $\mathcal{C}^{\PH,\SIa}$, for $\PH\in \{``\mathrm{C}", ``\mathrm{NS}\text{-}0", ``\mathrm{NS}\text{-}1", ``\mathrm{NS}\text{-}2", ``\mathrm{NS}"\}$.

\subsection{Results: NS Assisted $2$-User BC}
It suffices to present the results on the classes $``\mathrm{C}", ``\mathrm{NS}\text{-}0", ``\mathrm{NS}\text{-}1"$ and $``\mathrm{NS}"$ because the result of $``\mathrm{NS}\text{-}2"$ can be easily obtained by that of $``\mathrm{NS}\text{-}1"$ by exchanging the two users' indices. Given a broadcast channel $\mN$, it is useful to define the following two regions of rate pairs $(R_1,R_2)$. The first region is
\begin{align}
  \mathcal{R}_{\KS}(\mN) \triangleq \bigcup_{\mP_{XU}}\left\{
  \begin{array}{l}
    (R_1,R_2)\colon \\
    R_1 \leq I(X;Y_1) \\
    R_2 \leq I(U;Y_2) \\
    R_1 + R_2 \leq I(X;Y_1\mid U) + I(U;Y_2)
  \end{array}
  \right.
\end{align}
where the union is over all $\mP_{XU}\in \mathcal{P}(\mathcal{X}\times \mathcal{U})$. It suffices to consider $|\mathcal{U}| <  |\mathcal{X}|$.
We point out that $\mathcal{R}_{\KS}$ is the capacity region for  classical coding schemes when  $W_2$ is known to User $1$. This is established in \cite[Example 2]{kramer2007capacity}, and  `KS' stands for the initials of the authors (Kramer, Shamai).

The second region is,
\begin{align}
  \mathcal{R}_{\Sato}(\mN) \triangleq   \bigcup_{\mP_X} \left\{
  \begin{array}{l}
    (R_1,R_2)\colon \\
    R_1 \leq I(X;Y_1) \\
    R_2 \leq I(X;Y_2) \\
    R_1 + R_2 \leq \min_{\mN'} I(X;Y_1',Y_2')
  \end{array}
  \right.
\end{align}
where the union is over all $\mP_X\in \mathcal{P}(\mathcal{X})$, and the minimization of the bound on $R_1+R_2$ is taken over all channels $\mN'\in \mathcal{P}(\mathcal{Y}_1\times \mathcal{Y}_2\mid \mathcal{X})$ (with $Y_1',Y_2'$ denoting the outputs of $\mN'$) such that $\sum_{y_2}\mN'(y_1,y_2\mid x) = \sum_{y_2}\mN(y_1,y_2\mid x)$ and $\sum_{y_1}\mN'(y_1,y_2\mid x) = \sum_{y_1}\mN(y_1,y_2\mid x)$, meaning that $\mN$ and $\mN'$ have the same marginal distributions for each user individually.
We point out that $\mathcal{R}_{\Sato}$ serves as an outer bound of the classical capacity region $\mathcal{C}^{\C}(\mN)$ for any broadcast channel $\mN$ \cite[Prob. 8.8]{NIT}, but it is generally not tight.

Fawzi and Ferm{\'e} established in \cite[Thm. 1]{fawzi2024broadcast} that if NS assistance  is only available between the two receivers, then it does not improve the capacity region, i.e.,
\begin{align}
  \mathcal{C}^{\C}(\mN) = \mathcal{C}^{\NSz}(\mN).
\end{align}
This follows from the stronger result, also established in \cite{fawzi2024broadcast}, that the individual probabilities of success $(\eta_1(\mZ), \eta_2(\mZ))$ cannot be improved by the receiver-side NS assistance.\footnote{It is not difficult to see that if the probabilities of success could be improved  by NS-assistance, then it would allow signaling between the two receivers who otherwise have no channel between them, thus violating the NS constraint.} 

Since the classical capacity of a general broadcast channel $\mN$ is unknown, $\mathcal{C}^{\NSz}(\mN)$ is also open for general $\mN$. Fortunately, our next result  finds the solution for $\mathcal{C}^{\NSa}(\mN)$ for general $\mN$.

\begin{theorem}[NS-1] \label{thm:bipartite_NSBC}
  For a $2$ user broadcast channel $\mN$, with bipartite NS assistance between the transmitter and User $1$, the capacity region  is
  \begin{align}
    \mathcal{C}^{\NSa}(\mN) = \mathcal{C}^{\NSa,\SIa}(\mN) = \mathcal{C}^{\C,\SIa}(\mN) = \mathcal{R}_{\KS}(\mN).
  \end{align}
\end{theorem}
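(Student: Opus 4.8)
The target is a chain of four equalities whose rightmost link, $\mathcal{C}^{\C,\SIa}(\mN)=\mathcal{R}_{\KS}(\mN)$, is quoted from Kramer--Shamai \cite{kramer2007capacity}. Two of the remaining inclusions are immediate: $\mathcal{C}^{\C,\SIa}(\mN)\subseteq\mathcal{C}^{\NSa,\SIa}(\mN)$ since every classical scheme is non-signaling, and $\mathcal{C}^{\NSa}(\mN)\subseteq\mathcal{C}^{\NSa,\SIa}(\mN)$ since any $\mZ_1\in\mathcal{P}^{\NS}_2(\mathcal{X}^n,\mathcal{M}_1\mid\mathcal{M}_1\times\mathcal{M}_2,\mathcal{Y}_1^n)$ is, by ignoring an extra input slot, also a member of $\mathcal{P}^{\NS}_2(\mathcal{X}^n,\mathcal{M}_1\mid\mathcal{M}_1\times\mathcal{M}_2,\mathcal{Y}_1^n\times\mathcal{M}_2)$. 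So the work reduces to (a) the ``teleportation'' inclusion $\mathcal{C}^{\NSa,\SIa}(\mN)\subseteq\mathcal{C}^{\NSa}(\mN)$ and (b) the converse $\mathcal{C}^{\NSa}(\mN)\subseteq\mathcal{R}_{\KS}(\mN)$; granting the easy inclusions, (a) gives $\mathcal{C}^{\NSa}=\mathcal{C}^{\NSa,\SIa}\supseteq\mathcal{C}^{\C,\SIa}=\mathcal{R}_{\KS}$ and (b) closes the cycle.

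For (a) I would recycle, almost verbatim, the two-step construction behind Theorem \ref{thm:csit_tp}, now with $W_1$ playing the role of the ``message'' and $W_2$ that of the ``state.'' Given $\mZ=\mZ_1\times\mD_2\in\mathcal{Z}^{\NSa,\SIa}(M_1,M_2,\mN^{\otimes n})$, first twirl $\mZ_1$ over the cyclic group $\mathbb{C}_{M_1}$ acting jointly on $W_1$ and $\hat{W}_1$ to obtain an NS box $\mZ_1'$, then set $\mZ_1''(x^n,\hat{w}_1\mid[w_1,w_2],y_1^n):=\mZ_1'(x^n,\hat{w}_1\mid[w_1,w_2],[y_1^n,w_2])$ and $\mZ''=\mZ_1''\times\mD_2$. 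Two things must be checked. First, $\mZ''$ has the same success probabilities as $\mZ$ (joint as well as individual): the twirl is a uniform mixture of schemes obtained from $\mZ$ by relabeling $W_1$ and $\hat{W}_1$ consistently, which leaves every event $\{W_1=\hat{W}_1\}$ or $\{W_2=\hat{W}_2\}$ untouched (the messages are uniform, so the sums over $W_1$ are invariant under relabeling), and hardcoding the true $w_2$ into the side-information slot is exactly what the side-information protocol does. Second, $\mZ_1''\in\mathcal{P}^{\NS}_2(\mathcal{X}^n,\mathcal{M}_1\mid\mathcal{M}_1\times\mathcal{M}_2,\mathcal{Y}_1^n)$: this is where the twirl earns its keep, because afterwards the User-$1$ output marginal $\sum_{x^n}\mZ_1'(x^n,\hat{w}_1\mid[w_1,w_2],[y_1^n,w_2^{\tR}])$ collapses to the constant $1/M_1$, so deleting the $w_2^{\tR}$ slot cannot create any signaling from the transmitter's input to User $1$, while the transmitter-side condition is inherited because that marginal never depended on the User-$1$ slot. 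This is precisely the equivalence depicted in Fig. \ref{fig:csit_teleportation}.

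For (b) the plan is to follow the classical Kramer--Shamai converse: from an NS-$1$ code over $\mN^{\otimes n}$ with vanishing error, introduce the channel input $X_i$ and a suitable auxiliary, e.g. $U_i=(W_2,Y_1^{i-1},Y_{2,i+1}^n)$, and recover $R_1\le I(X;Y_1)$, $R_2\le I(U;Y_2)$, $R_1+R_2\le I(X;Y_1\mid U)+I(U;Y_2)$ by the usual single-letterization with a time-sharing variable. Because User $2$'s decoder is a genuine classical map $\hat{W}_2=\mD_2(Y_2^n)$, the $R_2$ bound and every Csisz\'ar--sum-identity step involving the $Y_2$-block is identical to the classical one. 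The hard part --- and, I expect, essentially all of the difficulty of the theorem --- is that under non-signaling assistance the guess $\hat{W}_1$ is \emph{not} a function of $(Y_1^n,W_2)$: User $1$'s box output is correlated with the transmitter's box output $X^n$, which is itself correlated with $Y_1^n$ through the channel, so one cannot simply write ``$H(W_1\mid Y_1^n,W_2)\le H(W_1\mid\hat{W}_1)\le n\epsilon_n$.'' (A purely finite-blocklength shortcut is ruled out, since non-signaling assistance does strictly increase one-shot success probability even for a point-to-point channel \cite{matthews2012linear}, so the statement is inherently asymptotic.) The fix must invoke the non-signaling condition itself: the marginal of $\hat{W}_1$ seen by User $1$'s box depends only on its own input, so whatever information the box conveys about $W_1$ has to be ``unlocked'' through $Y_1^n$, which carries at most $I(X^n;Y_1^n\mid W_2)$ bits about $(W_1,X^n)$. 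Concretely I would factor the bipartite box $\mZ_1$ into a local channel $q(\hat{w}_1\mid y_1^n)$ followed by a part that produces $X^n$, use this to control the stray term $I(W_1;\hat{W}_1\mid Y_1^n,W_2)$, and then single-letterize as in the classical argument; an alternative is to express the optimal NS-$1$ success probability as a linear program in the spirit of \cite{fawzi2024broadcast} and certify its asymptotic rate by a dual solution assembled from the same information quantities. Either way, (a) and the trivial inclusions are bookkeeping, and the converse (b) is where the real work lies.
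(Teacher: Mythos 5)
Your reduction of the theorem to (a) the side-information equivalence and (b) the converse $\mathcal{C}^{\NSa}(\mN)\subseteq\mathcal{R}_{\KS}(\mN)$ is exactly the paper's, and your part (a) is the paper's argument verbatim: twirl $\mZ_1$ over $\mathbb{C}_{M_1}$, hardcode $w_2$ into the side-information slot, and check the two non-signaling marginals (the constant $1/M_1$ output marginal at User 1, and the transmitter marginal that never depended on User 1's input). Your single-letterization in (b) — classical Fano for User 2 since $\mD_2$ is a genuine decoder, the auxiliary $U_i=(W_2,Y_1^{i-1},Y_{2,i+1}^n)$, the Csisz\'ar sum identity, and a time-sharing variable — also matches the paper.

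The genuine gap is the step you yourself flag as the hard part: establishing $nR_1\leq I(X^n;Y_1^n\mid W_2)+o(n)$ for an NS-assisted scheme. The paper supplies this as Lemma \ref{lem:NS_Fano_BC}, an NS analogue of Fano's inequality, $I(X;Y_1\mid W_2)\geq\eta_1(\mZ)\log_2(M_1)-H_b(\eta_1(\mZ))$, proved by the meta-converse/hypothesis-testing route of \cite{Polyanskiy_Poor_Verdu,matthews2012linear}: the conditional success probability $T^{w_2}_{x,y_1}=\Pr(W_1=\hat W_1\mid X=x,Y_1=y_1,W_2=w_2)$ does not depend on the channel; replacing $\mN$ by the product of the conditional marginals of $(X,Y_1)$ given $W_2=w_2$ forces the success probability to exactly $1/M_1$ \emph{because} the receiver-side marginal of the box depends only on $y_1$; and data processing of the KL divergence $D_{\mathrm{KL}}(\mP_{w_2}\Vert\mQ_{w_2})=I(X;Y_1\mid W_2=w_2)$ through the induced binary channel yields the bound. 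You correctly identify the essential NS ingredient (the output marginal at User 1 depends only on its own input), but your concrete plan --- factor $\mZ_1$ as a local channel $q(\hat w_1\mid y_1^n)$ followed by a part producing $X^n$ and use this to "control" $I(W_1;\hat W_1\mid Y_1^n,W_2)$ --- is not a proof: that factorization exists trivially for any conditional distribution, but the conditional law of $\hat W_1$ given $(W_1,W_2,X^n,Y_1^n)$ genuinely depends on $W_1$, so no bound on the stray term follows from the factorization alone. Your alternative (LP duality \`a la \cite{fawzi2024broadcast}) is plausible but undeveloped. As written, the converse is a correct outline with its central quantitative lemma missing.
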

\noindent Since $\mathcal{R}_{\KS}(\mN)$ has a computable form, Theorem \ref{thm:bipartite_NSBC} provides a computable form for $\mathcal{C}^{\NSa}(\mN)$ and $\mathcal{C}^{\NSa,\SIa}(\mN)$ in general.  
The proof is presented in Appendix \ref{proof:bipartite_NSBC}. 
Recall that $C^{\C,\SIa}(\mN) = \mathcal{R}_{\KS}(\mN)$ is established in \cite{kramer2007capacity}, and $\mathcal{C}^{\NSa,\SIa}(\mN) \supseteq \mathcal{C}^{\C,\SIa}(\mN)$ by definition. Therefore, it suffices to prove $\mathcal{C}^{\NSa}(\mN)=\mathcal{C}^{\NSa,\SIa}(\mN)$, and that $\mathcal{C}^{\NSa}(\mN) \subseteq \mathcal{R}_{\KS}(\mN)$.

Next, we shift our focus to the fully NS assisted capacity region.

\begin{theorem}[Sato's outer bound] \label{thm:Sato}
  For a broadcast channel $\mN$, the capacity region with full NS assistance satisfies
  \begin{align}
    \mathcal{C}^{\NS}(\mN) \subseteq \mathcal{R}_{\Sato}(\mN).
  \end{align}
\end{theorem}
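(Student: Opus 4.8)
The plan is to show that any rate pair achievable with full NS assistance lies in $\mathcal{R}_{\Sato}(\mN)$, by combining two ingredients: (i) the per-user marginal bounds $R_i \le I(X;Y_i)$, which follow because NS assistance between the transmitter and User $i$ alone cannot beat the capacity of the point-to-point channel $\mN_{Y_i|X}$ (indeed, by Theorem \ref{thm:csit_tp} applied with trivial state, or simply the known fact that NS assistance does not help a point-to-point DMC \cite{matthews2012linear}), together with the sum-rate bound $R_1+R_2 \le \min_{\mN'} I(X;Y_1',Y_2')$; and (ii) the observation that the entire NS-assisted problem over $\mN$ only sees $\mN$ through its single-user marginals, so that replacing $\mN$ by any $\mN'$ with the same marginals leaves the set of achievable rates — and the finite-blocklength success probabilities — unchanged. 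I would carry this out as follows. First, fix a sequence of NS coding schemes $\mZ_n \in \mathcal{Z}^{\NS}(M_{1,n},M_{2,n},\mN^{\otimes n})$ with $\eta(\mZ_n)\to 1$ and rates approaching $(R_1,R_2)$; it suffices to handle one blocklength $n$ and push $n\to\infty$ at the end, so I will suppress $n$ and write $\mN$ for $\mN^{\otimes n}$, noting that the $n$-fold channel has single-user marginals equal to the $n$-fold products of the single-user marginals of the one-shot $\mN$.

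The key structural step is the following. Because $\mZ \in \mathcal{P}^{\NS}_3(\mathcal{X},\mathcal{M}_1,\mathcal{M}_2 \mid \mathcal{M}_1\times\mathcal{M}_2,\mathcal{Y}_1,\mathcal{Y}_2)$, the no-signaling constraint across the bipartition $\{1,2\}$ (transmitter, User $1$) versus $\{3\}$ (User $2$) forces $\sum_{\hat w_2}\mZ(x,\hat w_1,\hat w_2\mid [w_1,w_2],y_1,y_2)$ to be independent of $y_2$, and similarly the other way for $y_1$. Hence, when we compute $\eta_1(\mZ)$ — the individual success probability for User $1$ — the sum over $y_2$ can be pushed onto the channel, and we get $\eta_1(\mZ) = \frac{1}{M_1M_2}\sum_{w_1,w_2,x,y_1}\big(\sum_{\hat w_2}\mZ(x,w_1,\hat w_2\mid[w_1,w_2],y_1,\cdot)\big)\,\mN_{Y_1|X}(y_1\mid x)$, which depends on $\mN$ only through the marginal $\mN_{Y_1|X}$. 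The same holds for $\eta_2$. On the other hand the \emph{joint} success probability $\eta(\mZ) = \frac{1}{M_1M_2}\sum \mZ(x,w_1,w_2\mid[w_1,w_2],y_1,y_2)\,\mN(y_1,y_2\mid x)$ genuinely uses the joint channel; but since $\eta\le \min\{\eta_1,\eta_2\}$ and, in the asymptotic regime, $\eta\to 1$ forces $\eta_1,\eta_2\to 1$, I can analyze $\eta_1,\eta_2$ under the replacement of $\mN$ by \emph{any} marginal-preserving $\mN'$ at no cost — in fact $\eta_1,\eta_2$ are literally unchanged. Now run the NS scheme $\mZ$ over $\mN'$: the induced joint distribution of $(X,\hat W_1,\hat W_2,W_1,W_2,Y_1',Y_2')$ is a legitimate NS-assisted execution over the BC $\mN'$, and Fano's inequality applied to this execution — using $\eta_i \to 1$, hence $\Pr(\hat W_i \ne W_i)\to 0$ — yields $R_1 \le I(X;Y_1') + o(1)$, $R_2\le I(X;Y_2') + o(1)$, and $R_1+R_2 \le I(X;Y_1',Y_2') + o(1)$ for the single-letter input distribution $\mP_X$ obtained by the standard time-sharing/convexification over the $n$ coordinates (identifying a time-sharing variable and absorbing it, since $\mathcal{R}_{\Sato}$ is already a union over $\mP_X$ and is convex). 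The first two bounds are unaffected by the choice of $\mN'$ because they only see marginals; the sum-rate bound holds for \emph{every} marginal-preserving $\mN'$, so we may take the infimum, giving $R_1+R_2 \le \min_{\mN'} I(X;Y_1',Y_2')$. Taking $n\to\infty$ and closure gives $(R_1,R_2)\in\mathcal{R}_{\Sato}(\mN)$.

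The main obstacle, and the step I would be most careful about, is the single-letterization: the scheme $\mZ_n$ is NS over the $n$-letter BC and need not decompose across coordinates, so the bounds $I(X^n;Y_1^n),\ I(X^n;Y_2^n),\ I(X^n;Y_1^n,Y_2^n)$ obtained from Fano must be chain-ruled and reduced to single-letter mutual informations with a common auxiliary/time-sharing variable, and one must check that the marginal-preservation property used to invoke $\min_{\mN'}$ is compatible coordinate-by-coordinate (it is, since $\mN^{\otimes n}$'s marginals are product marginals and $\mN'$ can be chosen as an $n$-fold product of one-shot marginal-preserving channels, or one simply proves the one-shot bound and tensorizes). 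A secondary subtlety is justifying $R_i \le I(X;Y_i)$ rigorously in the NS-assisted setting; the cleanest route is to note that discarding User $j$'s decoder and message turns the problem into NS-assisted point-to-point coding over $\mN_{Y_i|X}$ (with $W_j$ acting as extra randomness at the encoder, which cannot help), whose capacity is the classical $\max_{\mP_X} I(X;Y_i)$ by \cite{matthews2012linear} — alternatively this is subsumed by the $\min_{\mN'}$ bound together with the trivial single-user bounds and need not be argued separately if one is willing to weaken to exactly the three inequalities defining $\mathcal{R}_{\Sato}$.
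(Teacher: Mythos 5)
Your overall route is the same as the paper's: bound $R_1,R_2$ via the point-to-point sub-channels, bound $R_1+R_2$ via the combined output, observe that an NS scheme's individual success probabilities depend on $\mN$ only through its single-user marginals (which you correctly derive from the NS constraint across the bipartition separating User $2$ from the transmitter and User $1$ --- the paper instead cites this as the ``same-marginals'' property of \cite{Yao_Jafar_NS_DoF}), then minimize the sum-rate bound over marginal-preserving $\mN'$ and single-letterize with a time-sharing variable. All of that matches the paper's Appendix \ref{proof:Sato}.

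The one step that would fail as written is your repeated appeal to ``Fano's inequality applied to this execution'' to conclude $nR \leq I(X^n;Y^n)+o(n)$. Fano gives $H(W\mid \hat W)\leq o(n)$, but passing from $I(W;\hat W)$ to $I(X^n;Y^n)$ requires the data-processing inequality through the Markov chain $W\leftrightarrow X^n\leftrightarrow Y^n\leftrightarrow \hat W$, and this chain does \emph{not} hold for an NS-assisted scheme: $\hat W$ is correlated with $(W,X^n)$ through the box, not only through $Y^n$. The correct substitute is the NS meta-converse, i.e., Lemma \ref{lem:NS_Fano} with $|\mathcal{S}|=1$ (equivalently \cite[Thm.~9]{matthews2012linear}): the broken-channel argument shows directly that $I(X^n;Y^n)\geq \eta\log_2 M - H_b(\eta)$ for any bipartite NS scheme. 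You do cite this fact as an alternative justification for the individual bounds $R_i\leq I(X;Y_i)$, so that part is fine once you commit to it; but for the sum-rate bound you must also make explicit the reduction the paper uses --- merge the two receivers into a single party with message $(W_1,W_2)$ and output $(Y_1^n,Y_2^n)$, check that the resulting marginal of $\mZ_n$ is a valid bipartite NS point-to-point scheme over the channel $X^n \to ({Y_1'}^n,{Y_2'}^n)$, and only then invoke the NS meta-converse to get $n(R_1+R_2)\leq I(X^n;{Y_1'}^n,{Y_2'}^n)+o(n)$. With ``Fano'' replaced by this argument throughout, your proof is complete and coincides with the paper's.
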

\noindent The proof is given in Appendix \ref{proof:Sato}. Note that Theorem 4 says that the well-known Sato's outer bound for the classical capacity region holds for fully NS assisted capacity region as well.

Before we present the next result, let $\mathcal{N}^{\semidet}$ be the set of 2-user BCs where User $2$ has a deterministic channel, i.e., $Y_2=f(X)$ is a function of $X$. This is known as the set of semi-deterministic broadcast channels (for which User $2$ has the deterministic channel, whereas the channel for User $1$ remains general). 
The following corollary is the characterization of the fully NS assisted capacity region for semi-deterministic broadcast channels.
\begin{corollary}[Semi-Det] \label{cor:semi_det}
  For a semi-deterministic broadcast channel $\mN^{\semidet} \in \mathcal{N}^{\semidet}$, we have
  \begin{align}
    &\mathcal{C}^{\NS} \big( \mN^{\semidet} \big)\notag \\
    &= \mathcal{R}_{\KS} \big( \mN^{\semidet} \big) \\
    &= \mathcal{R}_{\Sato} \big( \mN^{\semidet} \big) \\
    &= \bigcup_{\mP_X}
    \left\{
    \begin{array}{l}
      (R_1,R_2)\colon\\
      R_1\leq I(X;Y_1) \\
      R_2 \leq H(Y_2) \\
      R_1+R_2 \leq I(X;Y_1\mid Y_2) +  H(Y_2) 
    \end{array}
    \right.
  \end{align}
\end{corollary}
\noindent We are able to generalize the result beyond semi-deterministic BCs (Corollary \ref{cor:semi_det}) to a subset of semi-deterministic erasure BCs (Corollary \ref{cor:semi_det_E}) and to parallel reversely semi-deterministic BCs (Corollary \ref{cor:PRSD}). 

Let us first define the set of semi-deterministic erasure BCs. For $0\leq \gamma_1, \gamma_2\leq 1$, let $\mathcal{N}^{\semidetE}(\gamma_1,\gamma_2)$ be the set of semi-deterministic erasure broadcast channels, obtained by concatenating erasure channels, with erasure probability $\gamma_i$ to User $i$'s output of a semi-deterministic broadcast channel, for $i\in \{1,2\}$.  
Formally, a channel $\mN^{\semidetE}_{\tilde{Y}_1\tilde{Y}_2\mid X} \in \mathcal{N}^{\semidetE}(\gamma_1,\gamma_2) \subseteq \mathcal{P}((\mathcal{Y}_1\cup \{\phi\}) \times (\mathcal{Y}_2\cup \{\phi\}) \mid \mathcal{X})$ iff there exists $\mN^{\semidet} \in \mathcal{N}^{\semidet} \subseteq \mathcal{P}(\mathcal{Y}_1\times \mathcal{Y}_2 \mid \mathcal{X})$, $\phi\not\in \mathcal{X}\cup \mathcal{Y}_1\cup \mathcal{Y}_2$, such that for $i\in \{1,2\}$,
\begin{align}
  \mN^{\semidetE}_{\tilde{Y}_i\mid X}(\tilde{y}_i \mid x) = 
  \begin{cases}
    \mN_{Y_i\mid X}^{\semidet}(\tilde{y}_i  \mid x) \times (1-\gamma_i), & \tilde{y}_i \in \mathcal{Y}_i\\
    \gamma_i, & \tilde{y}_i  = \phi
  \end{cases}
\end{align}
where $\mN^{\semidet}_{Y_i\mid X}$ and $\mN^{\semidetE}_{Y_i\mid X}$ are the marginal distributions for User $i$ of channels $\mN^{\semidet}_{Y_i\mid X}$ and $\mN^{\semidetE}_{\tilde{Y}_i\mid X}$ , respectively.

\begin{corollary}[Semi-Det-Erasure] \label{cor:semi_det_E}
  For $\gamma_1\geq \gamma_2$, and for a channel $\mN^{\semidetE}_{\tilde{Y}_1\tilde{Y}_2\mid X}\in \mathcal{N}^{\semidetE}(\gamma_1,\gamma_2)$, obtained by concatenating the erasure channel with erasure probability $\gamma_i$ to the output $Y_i$ of the channel $\mN_{Y_1Y_2\mid X}^{\semidet} \in \mathcal{N}^{\semidet}$, for $i\in \{1,2\}$, we have
  \begin{align}
    &\mathcal{C}^{\NS}\big(\mN^{\semidetE}_{\tilde{Y}_1\tilde{Y}_2\mid X}\big) \notag \\
    &= \mathcal{R}_{\KS}\big(\mN^{\semidetE}_{\tilde{Y}_1\tilde{Y}_2\mid X}\big) \\
    &= \mathcal{R}_{\Sato}(\mN^{\semidetE}_{\tilde{Y}_1\tilde{Y}_2\mid X})\\
    &= \bigcup_{\mP_X}
    \left\{
    \begin{array}{l}
      (R_1,R_2)\colon\\
      R_1\leq (1-\gamma_1)\times I(X;Y_1) \\
      R_2 \leq (1-\gamma_2) \times H(Y_2) \\
      R_1+R_2 \leq  (1-\gamma_1)\times I(X;Y_1\mid Y_2) + (1-\gamma_2) \times H(Y_2) 
    \end{array}
    \right. \label{eq:semidetE_explicit}
  \end{align}
\end{corollary}
\noindent Theorem \ref{thm:bipartite_NSBC} and Theorem \ref{thm:Sato} already showed that $\mathcal{R}_{\KS}=C^{\NSa}\subseteq C^{\NS}\subseteq \mathcal{R}_{\Sato}$ for any channel. Therefore, the proof of Corollary \ref{cor:semi_det_E} can be completed by showing that $\mathcal{R}_{\KS}\big(\mN^{\semidetE}_{\tilde{Y}_1\tilde{Y}_2\mid X}\big)$ equals $\mathcal{R}_{\Sato}\big(\mN^{\semidetE}_{\tilde{Y}_1\tilde{Y}_2\mid X}\big)$, with both evaluating to \eqref{eq:semidetE_explicit}.
 The details appear in Appendix \ref{proof:semi_det_E}.

\begin{example}[BLEC]
  The Blackwell channel ($\mN_{\mathrm{\scriptscriptstyle BLC}}$) has input $X\in \{0,1,2\}$, outputs $(Y_1,Y_2) \in \{0,1\}^2$ such that when $X=0,Y_1 =Y_2 =0$; when $X=1,Y_1 =Y_2 =1$; when $X=2,Y_1 =0,Y_2 =1$. The classical capacity region of $\mN_{\mathrm{\scriptscriptstyle BLC}}$ is known to be the set of $(R_1,R_2)$ such that $R_1\leq H(Y_1), R_2\leq H(Y_2), R_1+R_2 \leq H(Y_1,Y_2)$ for some $\mP_X$. The Blackwell erasure channel ($\mN_{\mathrm{\scriptscriptstyle BLEC}}$) is defined by concatenating two erasure channels to $\mN_{\mathrm{\scriptscriptstyle BLC}}$ with the probability of erasure $\gamma_1=\gamma_2=\gamma$. Corollary \ref{cor:semi_det_E} implies that the NS assisted capacity region for $\mN_{\mathrm{\scriptscriptstyle BLEC}}$ is the classical capacity region of $\mN_{\mathrm{\scriptscriptstyle BLC}}$ scaled by $(1-\gamma)$.  Remarkably, it is known that the classical capacity region of $\mN_{\mathrm{\scriptscriptstyle BLEC}}$ is strictly smaller than $(1-\gamma)\times \mathcal{C}^{\C}(\mN_{\mathrm{\scriptscriptstyle BLC}})$ \cite[Lem. 2]{gohari2021outer}, for all $0<\gamma < 1$. This implies that NS assistance increases the capacity region of the Blackwell erasure channel, for all $0<\gamma < 1$.
\end{example}
 
Let us next define the set of parallel reversely semi-deterministic BCs. See Fig. \ref{fig:PRSD} for an illustration.
Let ${\sf N}'_{Y_1'Y_2'\mid X'}$ be a semi-deterministic BC where User $2$'s output $Y_2'$ is deterministic given $X'$, (i.e., $Y_2' = f(X')$), and let ${\sf N}''_{Y_1''Y_2''\mid X''}$ be another semi-deterministic BC where the output $Y_1''$ of User $1$  is deterministic given $X''$, (i.e., $Y_1'' = g(X'')$). Now let ${\sf N}_{Y_1Y_2\mid X}$ be a BC for which $X = (X',X'')$, $Y_1 = (Y_1',Y_1'')$ and $Y_2 = (Y_2',Y_2'')$, are obtained by placing ${\sf N}'_{Y_1'Y_2'\mid X'}$ and ${\sf N}''_{Y_1''Y_2''\mid X''}$ in parallel. Formally,
\begin{align}
  {\sf N}_{Y_1Y_2\mid X}(y_1, y_2\mid x) = {\sf N}'_{Y_1'Y_2'\mid X'}(y_1', y_2'\mid x') \times {\sf N}''_{Y_1''Y_2''\mid X''}(y_1'', y_2''\mid x''),
\end{align}
where $x=(x',x''), y_1 = (y_1',y_1''), y_2 = (y_2', y_2'')$.
We say that the BC ${\sf N}_{Y_1Y_2\mid X}$ is parallel reversely semi-deterministic.
\begin{figure}[htbp]
\centering
\begin{tikzpicture}
\node (Xp) at (0,0) {\small $X'$};
\node (Xpp) at (0,-2.5) {\small $X''$};
\node (Y1p) at (4,0) {\small $Y_1'$};
\node (Y1pp) at (4.75,-0.65) {\small $Y_1''=g(X'')$};
\node (Y2p) at (4.65,-1.85) {\small $Y_2'=f(X')$};
\node (Y2pp) at (4,-2.5) {\small $Y_2''$};
\draw[thick] (Xp) --  node[midway, sloped, draw, rectangle, thick, fill=white, inner sep=2pt, font=\scriptsize] {$\mN'_{Y_1'\mid X'}$} (Y1p);
\draw[thick] (Xpp) --  node[midway, sloped, draw, rectangle, thick, fill=white, inner sep=2pt, font=\scriptsize] {$\mN''_{Y_2''\mid X''}$} (Y2pp);
\draw[thick] (Xp) --  node[pos=0.4, sloped, draw, rectangle, thick, fill=white, inner sep=2pt, font=\small] {$f$} (Y2p.west);
\draw[thick] (Xpp) --  node[pos=0.4, sloped, draw, rectangle, thick, fill=white, inner sep=2pt, font=\small] {$g$} (Y1pp.west);

\draw[decorate, decoration={brace, amplitude=5pt}, thick] (-0.35,-2.65) -- (-0.35,0.1);
\node (X) at (-0.8,-1.25) {$X$};

\draw[decorate, decoration={brace, amplitude=5pt}, thick] (6,0.2) -- (6,-0.85);
\node (Y1) at (6.5,-0.35) {$Y_1$};

\draw[decorate, decoration={brace, amplitude=5pt}, thick] (6,-1.65) -- (6,-2.7);
\node (Y2) at (6.5,-2.2) {$Y_2$};
\end{tikzpicture}
\caption{Parallel reversely semi-deterministic BC. $X=(X',X'')$ is the input at the transmitter. $Y_k = (Y_k', Y_k'')$ is the output at User $k$, for $k\in \{1,2\}$.}
\label{fig:PRSD}
\end{figure}

\begin{corollary}[Parallel Reversely Semi-Det] \label{cor:PRSD}
  For a parallel reversely semi-deterministic broadcast channel ${\sf N}_{Y_1Y_2\mid X}$, for which $X=(X',X''), Y_1 = (Y_1',Y_1''), Y_2=(Y_2',Y_2'')$ and $Y_2' = f(X')$, $Y_1''=g(X'')$, we have
  \begin{align}
    &\mathcal{C}^{\NS}\big( {\sf N}_{Y_1Y_2\mid X} \big)\notag \\
    &= \mathcal{R}_{\Sato} \big( {\sf N}_{Y_1Y_2\mid X} \big) \\
    &= \bigcup_{\mP_X = \mP_{X'}\times\mP_{X''}} 
    \left\{
    \begin{array}{l}
      (R_1,R_2)\colon\\
      R_1\leq   I(X';Y_1')+H(Y_1'') \\
      R_2 \leq  H(Y_2')+I(X'';Y_2'') \\
      R_1+R_2 \leq   H(Y_1'')+ H(Y_2') + I(X';Y_1'\mid Y_2') + I(X'';Y_2''\mid Y_1'')
    \end{array}
    \right.
  \end{align}
\end{corollary}
\noindent The proof of Corollary \ref{cor:PRSD} is done by observing that for a parallel reversely semi-deterministic BC, $\mathcal{R}_{\Sato}$ is achieved by independently applying Theorem \ref{thm:bipartite_NSBC} to ${\sf N}'_{Y_1'Y_2'\mid X'}$ and ${\sf N}''_{Y_1''Y_2''\mid X''}$ and taking their Minkowski sum, for every $\mP_{X'}$ and $\mP_{X''}$. We present the details in Appendix \ref{proof:PRSD}.

\section{Extension: NS-assisted $K$-user Broadcast with User Side information}
In this section, we apply the insights gained from previous sections to the $K$-user broadcast channels with user side information. A $K$-user broadcast channel is described by $\mN\in \mathcal{P}(\mathcal{Y}_1\times \mathcal{Y}_2\times \cdots \times \mathcal{Y}_K\mid \mathcal{X})$. For $k\in [K]$, let $W_k$ be an independent message for User $k$, and let $\mathcal{W}_k \subseteq [K]\setminus \{k\}$, so that for $k\in [K]$, User $k$ has $\{W_j\}_{j\in \mathcal{W}_k}$ as its side information.   In this section let us only consider full NS assistance among all $K+1$ parties (including the transmitter and $K$ users).

To have a concise notation, let $\mathcal{S}_k \triangleq \prod_{j\in \mathcal{W}_k}\mathcal{M}_j$ for $k\in [K]$.
Without loss of generality, assume $\mathcal{M}_k = [M_k]$, so that $M_k = |\mathcal{M}_k|$, for $k\in [K]$.  The set of fully NS assisted coding schemes with side information is defined as,
\begin{align}
  \mathcal{Z}^{\NS}(\mathrm{SI}=[K];M_1,\cdots, M_K) \triangleq \mathcal{P}^{\NS}_{K+1}
  {\footnotesize\left(
  \begin{matrix}
    \mathcal{X} \\ \mathcal{M}_1 \\ \mathcal{M}_2 \\ \vdots \\ \mathcal{M}_K
  \end{matrix}
  \left|
  \begin{matrix}
    \prod_{k=1}^K \mathcal{M}_k  \\
    \mathcal{Y}_1 \times \mathcal{S}_1 \\
    \mathcal{Y}_2 \times \mathcal{S}_2 \\
    \vdots \\
    \mathcal{Y}_K \times \mathcal{S}_K
  \end{matrix}
  \right.
  \right)}.
\end{align}
Note that we write the parameters of $\mathcal{P}^{\NS}_{K+1}$ in a way that each party appears in a separate row. The first parameter `$\mathrm{SI}=[K]$' indicates that all $K$ users have side information available. A coding scheme $\mZ\in \mathcal{Z}^{\NS}(\mathrm{SI}=[K];M_1,\cdots, M_K)$ works as follows. Let $\mathbf{W}=[W_1,W_2,\cdots, W_K]$, $S_k = (W_j)_{j\in \mathcal{W}_k}$ for $k\in [K]$. The transmitter inputs $\mathbf{W}$ to $\mZ$, and obtains the output $X$. This $X$ is sent through the channel $\mN$. For $k\in [K]$, User $k$ inputs $(Y_k, S_k)$ to $\mZ$, and obtains the decoded message $\hat{W}_k$.
Given $\mN$ and $\mZ\in \mathcal{Z}^{\NS}(\mathrm{SI}=[K];M_1,\cdots, M_K)$, the joint  distribution of all variables of interest is (using the `bold' notation, e.g.,  $\mathbf{A}$ to represent $[A_1,A_2,\cdots, A_K]$), 
\begin{align}
  &\Pr(\hat{\mathbf{W}} =\hat{\mathbf{w}}, {\mathbf{W}}=\mathbf{w},X=x,{\bf Y}={\bf y}) \notag \\
  &=\frac{1}{\prod_{k=1}^K M_k}  \Pr(\hat{\mathbf{W}}=\hat{\mathbf{w}}, X=x, \mathbf{Y}=\mathbf{y} \mid \mathbf{W}=\mathbf{w})\\
  &= \frac{1}{\prod_{k=1}^K M_k}   \mN(\mathbf{y}\mid x) \times \mZ {\footnotesize \left(
  \begin{matrix}
    x \\ \hat{w}_1 \\ \hat{w}_2 \\ \vdots \\ \hat{w}_K
  \end{matrix}
  \left|
  \begin{matrix}
     \mathbf{w}  \\
    [y_1,s_1] \\
    [y_2,s_2] \\
    \vdots \\
    [y_K,s_K]
  \end{matrix}
  \right.
  \right)}
\end{align}
where $s_k \triangleq (w_j)_{j\in \mathcal{W}_k} \in \mathcal{S}_k$ for $k\in [K]$.

We also consider the cases when the side information becomes unavailable at some users. To this end we accordingly define $\mathcal{Z}^{\NS}(\mathrm{SI}=\mathcal{K})$ (the parameters $M_1,\cdots, M_K$ are suppressed), for $\mathcal{K}\subseteq [K]$, indicating that the side information $S_k$ is only available for User $k\in \mathcal{K}$. Equivalently, the side information is removed from User $j$ for $j\in [K]\setminus \mathcal{K}$.  As an example,

\begin{align}
  \mathcal{Z}^{\NS}(\mathrm{SI}=\{2,\cdots, K\})\triangleq \mathcal{P}^{\NS}_{K+1}
  {\footnotesize \left(
  \begin{matrix}
    \mathcal{X} \\ \mathcal{M}_1 \\ \mathcal{M}_2 \\ \vdots \\ \mathcal{M}_K
  \end{matrix}
  \left|
  \begin{matrix}
    \prod_{k=1}^K \mathcal{M}_k  \\
    \mathcal{Y}_1  \\
    \mathcal{Y}_2 \times \mathcal{S}_2 \\
    \vdots \\
    \mathcal{Y}_K \times \mathcal{S}_K
  \end{matrix}
  \right.
  \right)}
\end{align}
and note that $\mZ\in \mathcal{Z}^{\NS}(\mathrm{SI}=\{2,\cdots, K\})$ does not allow an input of $S_1$ at User $1$.

Our main goal of this section is to study and compare the optimal probability of success corresponding to different side information availabilities at the users, assuming full NS assistance. To this end, let $\eta(\mZ)=\Pr(\hat{\mathbf{W}}=\mathbf{W})$ and for $\mathcal{K} \subseteq [K]$, let 
\begin{align}
  \eta_{\NS}^*(\mathrm{SI}=\mathcal{K}) \triangleq \max_{\mZ \in \mathcal{Z}^{\NS}(\mathrm{SI}=\mathcal{K})} \eta(\mZ)
\end{align}
be the optimal (joint) probability of success of the fully NS-assisted coding schemes with user side information only available to each User $k$, such that $k\in \mathcal{K}$. Note that this value also depends on $(M_1,\cdots, M_K)$ and the channel $\mN$ but those parameters are suppressed for compact notation. Certainly, since user side information cannot hurt the optimal probability of success, we have
\begin{align}
  \eta_{\NS}^*(\mathrm{SI}=\mathcal{K}) \geq \eta_{\NS}^*(\mathrm{SI}=\mathcal{K}')
\end{align}
as long as $\mathcal{K} \supseteq \mathcal{K}'$.

The main theorem of this section is the following.

\begin{theorem} \label{thm:extension}
  For $k\in [K]$, if $k\not\in  \mathcal{W}_1\cup \mathcal{W}_2 \cup \cdots \cup \mathcal{W}_K$, then
  \begin{align}
    \eta_{\NS}^*(\mathrm{SI}=[K]) = \eta_{\NS}^*(\mathrm{SI}=[K]\setminus \{k\}).
  \end{align}
\end{theorem}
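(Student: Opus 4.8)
The plan is to mimic the two-step construction used in the proof sketch of Theorem~\ref{thm:csit_tp}, now applied party-by-party to User~$k$. Fix $\mathcal{K} = [K]\setminus\{k\}$ and suppose $k\notin \mathcal{W}_1\cup\cdots\cup\mathcal{W}_K$. The direction $\eta_{\NS}^*(\mathrm{SI}=[K]) \geq \eta_{\NS}^*(\mathrm{SI}=\mathcal{K})$ is immediate since extra side information cannot hurt. For the reverse, take an optimal $\mZ \in \mathcal{Z}^{\NS}(\mathrm{SI}=[K])$. The only user that differs between the two settings is User~$k$, whose side information is $S_k = (W_j)_{j\in\mathcal{W}_k}$. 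I would argue that, because $k\notin\mathcal{W}_j$ for any $j$, the message $W_k$ does not appear in any \emph{other} user's side information, and also not in User~$k$'s own side information (since $\mathcal{W}_k\subseteq[K]\setminus\{k\}$). This is the key structural fact: relabeling $W_k$ by a cyclic permutation affects only (i) the transmitter's input coordinate $W_k$, (ii) User~$k$'s decoded output $\hat W_k$, and nothing else on the input side of the box.

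Step~1 (twirling over User~$k$'s message): Let $\mathbb{C}_{M_k}$ be the cyclic group on $[M_k]$ and set
\begin{align}
	\mZ'\!\left(\!
	\begin{matrix}x\\ \hat w_1\\ \vdots\\ \hat w_K\end{matrix}
	\,\middle|\,
	\begin{matrix}\mathbf{w}\\ {[y_1,s_1]}\\ \vdots\\ {[y_K,s_K]}\end{matrix}
	\!\right)
	= \frac{1}{M_k}\sum_{\pi\in\mathbb{C}_{M_k}} \mZ\!\left(\!
	\begin{matrix}x\\ \hat w_1\\ \vdots\\ \pi(\hat w_k)\\ \vdots\\ \hat w_K\end{matrix}
	\,\middle|\,
	\begin{matrix}\pi(\mathbf{w})\\ {[y_1,s_1]}\\ \vdots\\ {[y_k,s_k]}\\ \vdots\\ {[y_K,s_K]}\end{matrix}
	\!\right),
\end{align}
where $\pi(\mathbf{w})$ means $W_k$ is replaced by $\pi(w_k)$ and the other coordinates are untouched; crucially $s_j$ is unchanged for every $j$ because $W_k$ never appears in any $s_j$. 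As in the sketch, $\mZ'$ is a uniform convex combination of relabelings of $\mZ$, each achieving the same success probability (the messages are uniform), so $\eta(\mZ')=\eta(\mZ)$; and $\mZ'$ remains non-signaling (a convex combination of NS boxes post-composed with permutations is NS). The new feature of $\mZ'$ is a symmetry: the marginal $\sum_{\hat w_k}\mZ'(\cdots)$ is invariant under the $\mathbb{C}_{M_k}$-action on $w_k$, hence independent of $w_k$; and $\sum_x \mZ'(\cdots)$ correspondingly behaves like $1/M_k$ in the $\hat w_k$-coordinate.

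Step~2 (feeding User~$k$ its own side information internally is unnecessary): Define $\mZ''$ by deleting the $S_k$ input of User~$k$ and instead hard-wiring nothing --- formally, since in Step~1 we only used symmetry in $W_k$, and $W_k\notin S_k$, the box $\mZ'$ never actually needed $S_k$; more precisely, one sets $\mZ''(\cdots\mid \ldots,y_k,\ldots) = \mZ'(\cdots\mid\ldots,[y_k,s_k^0],\ldots)$ for a fixed reference $s_k^0$, and checks $\eta(\mZ'')=\eta(\mZ')$ because the joint law of $(W_k,\hat W_k)$ feeding the success event does not depend on which $s_k$-slice is used --- this is where the symmetry from Step~1 is spent. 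Then verify $\mZ''\in\mathcal{Z}^{\NS}(\mathrm{SI}=\mathcal{K})$: the marginal conditions for all bipartitions reduce, exactly as in Appendix~\ref{proof:csit_tp}, to $\sum_x \mZ''=1/M_k$ on User~$k$'s side and $\sum_{\hat w_k}\mZ''$ depending only on $(x,\mathbf{w}_{-k})$ (in fact only on $x$ and the other coordinates that were already NS-consistent). Then $\eta_{\NS}^*(\mathrm{SI}=\mathcal{K}) \geq \eta(\mZ'')=\eta(\mZ)=\eta_{\NS}^*(\mathrm{SI}=[K])$, closing the argument.

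The main obstacle is Step~2: one must be careful about \emph{which} slice $s_k^0$ of User~$k$'s input to use and to confirm that the resulting box is genuinely non-signaling across \emph{every} bipartition of the $K+1$ parties, not just the one separating User~$k$. The cleanest route is to first establish, from the NS property of $\mZ'$ and the $W_k$-symmetry, that $\mZ'(\cdots\mid\ldots,[y_k,s_k],\ldots)$ is in fact \emph{independent of $s_k$} when restricted to the quantities that matter (the full conditional may depend on $s_k$, but its relevant marginals do not), so that the choice of $s_k^0$ is immaterial; this mirrors the ``$\mZ''(x,\hat w\mid[w,s],y)=\mZ'(x,\hat w\mid[w,s],[y,s])$'' trick in the sketch of Theorem~\ref{thm:csit_tp}, where the receiver's state-input is identified with the transmitter's. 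Checking the bipartition conditions is then routine bookkeeping using that the map $\mZ'\mapsto\mZ''$ only touches User~$k$'s input coordinate, and User~$k$'s output marginal was already shown to be the uniform $1/M_k$ and input-independent in the relevant sense after twirling.
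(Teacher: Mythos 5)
Your Step 1 (twirling User $k$'s message over the cyclic group, which is legitimate precisely because $W_k$ appears in no user's side information) matches the paper's construction and is fine. The gap is in Step 2. You define $\mZ''$ by freezing User $k$'s side-information input to a \emph{fixed reference} $s_k^0$ and then claim the choice of slice is immaterial because ``the relevant marginals'' of $\mZ'$ do not depend on $s_k$. That claim is false, and the construction does not preserve $\eta$ in general: the success probability $\Pr(\hat{\mathbf{W}}=\mathbf{W})$ depends on the full conditional distribution of the box (specifically on the diagonal $\hat{\mathbf{w}}=\mathbf{w}$, correlated with the channel), not merely on the output marginals that the non-signaling conditions constrain, and the twirl over $W_k$ induces no symmetry whatsoever in $S_k$ (note $k\notin\mathcal{W}_k$, so $s_k$ is untouched by the permutation of $w_k$). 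A classical scheme in which User $k$'s decoder genuinely uses its side information --- e.g.\ $K=2$, $\mathcal{W}_1=\{2\}$, with $\mD_1(\hat w_1\mid y_1,w_2)$ cancelling the interference caused by $W_2$ --- is a concrete counterexample: hard-wiring $w_2:=w_2^0$ makes User $1$ decode as if $W_2=w_2^0$ and strictly degrades $\eta$.

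The correct substitution, which you mention only in passing in your final sentence, is the one from Theorem \ref{thm:csit_tp}: set User $k$'s side-information slot equal to the \emph{matching coordinates of the transmitter's input}, i.e.\ $\mZ''(\cdots\mid \mathbf{w};\ldots;y_k;\ldots)\triangleq\mZ'(\cdots\mid\mathbf{w};\ldots;[y_k,(w_j)_{j\in\mathcal{W}_k}];\ldots)$. With this choice $\eta(\mZ'')=\eta(\mZ')$ holds by definition, since in actual operation $S_k$ always equals $(W_j)_{j\in\mathcal{W}_k}$; no symmetry argument is needed there. The twirl is instead spent on the one non-trivial non-signaling condition for $\mZ''$, namely tracing out the transmitter's output $x$: after the substitution, $s_k$ is a function of $\mathbf{w}$, so $\sum_x\mZ''$ a priori depends on $\mathbf{w}$ through $s_k$; averaging over the cyclic permutations turns the User-$k$ output coordinate into a full trace-out of $\hat w_k$, after which the non-signaling property of the original $\mZ$ eliminates the dependence on $[y_k,s_k]$ and hence on $\mathbf{w}$. (Your statement that the conditions reduce to ``$\sum_x\mZ''=1/M_k$'' is also off for the $(K+1)$-party case: that trace-out yields a joint distribution over all $K$ users' outputs, not a constant.) Your proof becomes correct if you replace the fixed slice $s_k^0$ by $(w_j)_{j\in\mathcal{W}_k}$ and relocate where the twirl is actually used.
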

\noindent The theorem says that the optimal probability of success of fully NS-assisted coding schemes does not change (decrease) if $S_k$ is removed from User $k$, provided that $W_k$ is not available as side information to any user. The proof is presented in Appendix \ref{proof:extension}. The next corollary follows directly from Theorem \ref{thm:extension}.
\begin{corollary} \label{cor:removing_SI}
  If $\mathcal{W}_k = \{k+1,k+2,\cdots, K\}$ for all $k\in [K]$, then
  \begin{align}
    \eta^*(\mathrm{SI}=\emptyset) = \eta^*(\mathrm{SI}=[K])
  \end{align}
  and therefore $\eta^*(\mathrm{SI}=\mathcal{K}) = \eta^*(\mathrm{SI}=\mathcal{K}')$ for every $\mathcal{K}, \mathcal{K}'\subseteq [K]$. 
\end{corollary}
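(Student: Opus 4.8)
The plan is to derive the corollary from Theorem~\ref{thm:extension} by removing the side information from Users $1,2,\dots,K$ one at a time, in increasing order of index, with each removal being a single application of Theorem~\ref{thm:extension}. The structural feature that makes this work is that the sets $\mathcal{W}_k=\{k+1,\dots,K\}$ are nested: for every $j$ we have $\mathcal{W}_j\subseteq\{j+1,\dots,K\}$, hence $1\notin\mathcal{W}_j$, i.e.\ $W_1$ is not side information for anybody, so Theorem~\ref{thm:extension} with $k=1$ applies right away. To organize the iteration, for $m\in\{0,1,\dots,K\}$ let $\mathcal{W}^{(m)}$ be the structure with $\mathcal{W}^{(m)}_k=\emptyset$ for $k\le m$ and $\mathcal{W}^{(m)}_k=\{k+1,\dots,K\}$ for $k>m$, so that $\mathcal{W}^{(0)}$ is the hypothesized structure and $\mathcal{W}^{(K)}$ is the all-empty structure; write $\eta_{\NS}^*\big(\mathrm{SI}=\mathcal{K};\mathcal{W}\big)$ for the optimal NS-assisted success probability when the side-information sets are $\mathcal{W}$ and precisely the users in $\mathcal{K}$ receive their side information. (Each $\mathcal{W}^{(m)}$ is a legitimate side-information structure since $\{k+1,\dots,K\}\subseteq[K]\setminus\{k\}$, so Theorem~\ref{thm:extension} may be invoked for any of them.)

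The inductive step to establish is $\eta_{\NS}^*\big(\mathrm{SI}=[K];\mathcal{W}^{(m-1)}\big)=\eta_{\NS}^*\big(\mathrm{SI}=[K];\mathcal{W}^{(m)}\big)$ for each $m\in\{1,\dots,K\}$. First, $\bigcup_{j}\mathcal{W}^{(m-1)}_j=\bigcup_{j\ge m}\{j+1,\dots,K\}=\{m+1,\dots,K\}$, which does not contain $m$; hence Theorem~\ref{thm:extension}, applied with index $k=m$ under the structure $\mathcal{W}^{(m-1)}$, gives $\eta_{\NS}^*\big(\mathrm{SI}=[K];\mathcal{W}^{(m-1)}\big)=\eta_{\NS}^*\big(\mathrm{SI}=[K]\setminus\{m\};\mathcal{W}^{(m-1)}\big)$. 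Second, the coding-scheme class on the right---every user except User $m$ receives its side information---coincides verbatim with $\mathcal{Z}^{\NS}\big(\mathrm{SI}=[K];\mathcal{W}^{(m)}\big)$, because $\mathcal{W}^{(m)}$ differs from $\mathcal{W}^{(m-1)}$ only in setting $\mathcal{W}^{(m)}_m=\emptyset$, and receiving the trivial (one-point) empty-product input $\prod_{j\in\emptyset}\mathcal{M}_j$ as side information is the same as receiving no side-information input at all. Combining the two displays proves the step.

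Chaining the step over $m=1,\dots,K$ yields $\eta_{\NS}^*\big(\mathrm{SI}=[K];\mathcal{W}^{(0)}\big)=\eta_{\NS}^*\big(\mathrm{SI}=[K];\mathcal{W}^{(K)}\big)$, and since under $\mathcal{W}^{(K)}$ every user's side-information input is trivial, the right-hand side equals $\eta_{\NS}^*\big(\mathrm{SI}=\emptyset;\mathcal{W}^{(0)}\big)$; that is, $\eta^*(\mathrm{SI}=[K])=\eta^*(\mathrm{SI}=\emptyset)$ for the structure of the hypothesis. For the ``therefore'' clause, note that for any $\mathcal{K}\subseteq[K]$ the monotonicity already recorded in the excerpt gives $\eta^*(\mathrm{SI}=\emptyset)\le\eta^*(\mathrm{SI}=\mathcal{K})\le\eta^*(\mathrm{SI}=[K])$, and the two ends are now equal, so $\eta^*(\mathrm{SI}=\mathcal{K})$ is independent of $\mathcal{K}$, which gives $\eta^*(\mathrm{SI}=\mathcal{K})=\eta^*(\mathrm{SI}=\mathcal{K}')$ for all $\mathcal{K},\mathcal{K}'\subseteq[K]$. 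The only point needing care is the re-indexing in the inductive step---identifying ``drop $W_m$'s side information from everyone, under structure $\mathcal{W}^{(m-1)}$'' with ``$\mathrm{SI}=[K]$ under the truncated structure $\mathcal{W}^{(m)}$''---together with the routine check that at each stage the index being peeled is absent from the shrinking union of the remaining side-information sets; beyond this bookkeeping there is no analytic obstacle, since the corollary is just a $K$-fold iteration of Theorem~\ref{thm:extension}.
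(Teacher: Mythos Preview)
Your proposal is correct and follows essentially the same approach as the paper: sequentially peel off side information from Users $1,2,\ldots,K$ via repeated applications of Theorem~\ref{thm:extension}, then use monotonicity to get the ``therefore'' clause. Your introduction of the intermediate structures $\mathcal{W}^{(m)}$ is a more careful formalization of what the paper handles informally (``after this removal, $W_2$ is not known by any users''), since Theorem~\ref{thm:extension} as stated compares $\mathrm{SI}=[K]$ to $\mathrm{SI}=[K]\setminus\{k\}$ rather than arbitrary pairs.
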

The corollary says that if the side information structure is defined such that each User $k$ has $\{W_j\}_{j>k}$ as side information, for all $k\in [K]$, then such side information does not help the probability of success at all, with full NS assistance.
\begin{proof}
  Since $W_1$ is not known by any users, Theorem \ref{thm:extension} implies that removing its side information from User $1$ does not hurt the optimal probability of success. Note that User $1$ was the only user that knows $W_2$, so after this removal, $W_2$ is not known by any users. Then removing the side information from User $2$ does not hurt the optimal probability of success. Continue this argument until we remove the side information from User $K-1$. Since User $K$ does not have any side information initially, we obtain a setting where no user has side information, but the optimal probability of success (under full NS assistance) is not affected.
\end{proof}

The immediate consequence of Corollary \ref{cor:removing_SI} is that the (fully) NS assisted capacity region of a $K$-user broadcast channel does not change if in addition each User $k, k\in[K]$ is provided in advance all $\{W_j\}_{j>k}$. This equivalence may provide useful insights for  future studies of  NS assisted index coding \cite{Birk_Kol_Trans}, topological interference management \cite{Jafar_TIM}, linear computation broadcast \cite{Sun_Jafar_CBC, Yao_Jafar_3LCBC}, coded caching \cite{Maddah_Ali_Niesen}, and network coding problems \cite{Ahlswede_network_information_flow}.

\section{Conclusion}
The key insight of this work, encapsulated as `\emph{virtual CSIT signaling via NS assistance},'  emerged from our study of the NS-assisted capacity of a point to point channel with non-causal CSIT. Following this  insight, we found explicit computable NS-assisted capacity (region) expressions for a series of network communication problems. Let us conclude by pointing out a few promising directions for future work. First, an important open question is to determine if Sato's outer bound \emph{always} matches the fully NS assisted capacity region of a BC. Note that this possibility is supported by all available results thus far. Second, a promising avenue, based on Theorem \ref{thm:extension}, is to explore the NS assisted capacity of a BC with various forms of classical side information at each receiver, which includes the NS-assisted index coding problem as a most interesting special case. Third, another fundamental open problem is to characterize how the `virtual CSIT signaling' result generalizes if the message is not uniformly distributed, which would impact the twirling argument. Last but not the least, an important direction for future work is to use the strongest `metal detector' signals produced by NS-assisted capacity analysis, to search for the most significant capacity advantages achievable with \emph{quantum}-correlations.

\appendix
\section{Proof of Theorem  \ref{thm:csit_tp}} \label{proof:csit_tp}
We only need to show that $\max_{\mZ\in \mathcal{Z}^{\NS}(M,~\mN)}\eta(\mZ) \geq \max_{\mZ \in \mathcal{Z}^{\NS}(M,~\bar{\mN})}\eta(\mZ)$. 
Given any coding scheme $\mZ\in \mathcal{Z}^{\NS}(M, \bar{\mN})$, define
\begin{align}
  \mZ'(x,\hat{w}\mid [w,s], [y,s_{\tR}]) \triangleq \frac{1}{M}\sum_{\pi \in \mathbb{C}_M}\mZ(x, \pi(\hat{w})\mid [\pi(w),s], [y,s_{\tR}])
\end{align}
for $(w,\hat{w}, x,y,s,s_{\tR})\in \mathcal{M}^2\times \mathcal{X}\times \mathcal{Y} \times \mathcal{S}^2$,
and 
\begin{align}
    \mZ''(x,\hat{w}\mid [w,s],y) \triangleq \mZ'(x,\hat{w} \mid [w,s],[y,s]),
\end{align}
for $(w,\hat{w}, x,y,s)\in \mathcal{M}^2\times \mathcal{X}\times \mathcal{Y} \times \mathcal{S}$,
where $\mathbb{C}_M$ is the cyclic permutation group operating on $[M]$. 

We claim that $\mZ'$ and $\mZ''$ are non-signaling, and leave the proof to the end of this section.
Recall that 
\begin{align}
  \bar{\mN}_{YS_{\tR}\mid XS}(y,s_{\tR} \mid x,s) \triangleq \mN_{Y\mid XS}(y\mid x,s) \times \mathbb{I}(s_{\tR}=s).
\end{align}
In the following, we omit writing the domain of the variables as it is clear from the context.
By the definitions of the probability of success and $\bar{\mN}$,
\begin{align}
  \eta(\mZ) &= \frac{1}{M} \sum_{w, s, x,y,s_{\tR}}  \mP_S(s) \times \bar{\mN}_{YS_{\tR}\mid XS}(y,s_{\tR}\mid x,s) \times \mZ(x,w \mid [w,s], [y,s_{\tR}])  \\
  &= \frac{1}{M} \sum_{s , x,y}   \mP_S(s)\times \mN_{Y\mid XS}(y\mid x,s) \times  \sum_{w\in [M]} \mZ(x,w \mid [w,s], [y,s]) 
\end{align}
whereas
\begin{align}
  \eta(\mZ') &= \frac{1}{M} \sum_{w, s, x,y,s_{\tR}}  \mP_S(s) \times \bar{\mN}_{YS_{\tR}\mid XS}(y,s_{\tR}\mid x,s) \times \mZ'(x,w \mid [w,s], [y,s_{\tR}])  \\
  &= \frac{1}{M} \sum_{s,x,y}   \mP_S(s)\times \mN_{Y\mid XS}(y\mid x,s) \times  \sum_{w\in[M]} \mZ'(x,w \mid [w,s], [y,s]).
\end{align}
Now note that, for any $s,x,y$,
\begin{align}
  &\sum_{w\in[M]}\mZ'(x,w \mid [w,s], [y,s]) \notag \\
  &= \frac{1}{M}\sum_{w\in [M]} \sum_{\pi\in \mathbb{C}_M} \mZ(x,\pi(w) \mid [\pi(w),s],[y,s]) \\
  &= \frac{1}{M}\sum_{w\in [M]} \sum_{i\in [M]} \mZ(x, i \mid [i,s],[y,s]) \\
  &= \sum_{i\in [M]} \mZ(x, i \mid [i,s],[y,s])\\
  &= \sum_{w\in [M]} \mZ(x, w \mid [w,s],[y,s])
\end{align}
It follows that $\eta(\mZ) = \eta(\mZ')$. Again by definition,
\begin{align}
  \eta(\mZ'') &= \frac{1}{M} \sum_{s,x,y}\mP_S(s)\times \mN_{Y\mid XS}(y\mid x,s) \times \sum_{w\in [M]} \mZ''(x,w\mid [w,s],y) \\
  &=\frac{1}{M} \sum_{s,x,y} \mP_S(s)\times \mN_{Y\mid XS}(y\mid x,s) \times \sum_{w\in [M]} \mZ'(x,w\mid [w,s],[y,s])
\end{align}
and therefore $\eta(\mZ'') = \eta(\mZ')$. This shows that $\eta(\mZ'')=\eta(\mZ)$.

Let us now verify that $\mZ'$ and $\mZ''$ are non-signaling. It is not difficult to verify that $\mZ' \in \mathcal{P}(\mathcal{X}\times \mathcal{M}\mid \mathcal{M}\times \mathcal{S}\times \mathcal{Y}\times \mathcal{S})$ and $\mZ'' \in \mathcal{P}(\mathcal{X}\times \mathcal{M}\mid \mathcal{M}\times \mathcal{S}\times \mathcal{Y})$ are valid conditional distributions. Then, since $\mZ$ is non-signaling, 
\begin{align}
  \sum_{x}\mZ(x,\hat{w} \mid [w,s], [y,s_{\tR}]) \triangleq \mZ_{\tR}(\hat{w}\mid [y,s_{\tR}])
\end{align} 
is a function of only $(\hat{w},y,s_{\tR})$, and
\begin{align}
  \sum_{\hat{w}}\mZ(x,\hat{w} \mid [w,s], [y,s_{\tR}]) \triangleq \mZ_{\tT}(x\mid [w,s])
\end{align}
is a function of only $(x,w,s)$. We then have
\begin{align}
  &\sum_{x}\mZ'(x,\hat{w} \mid [w,s],[y,s_{\tR}])\notag \\
  &= \frac{1}{M}\sum_{\pi\in \mathbb{C}_M} \sum_{x}  \mZ(x, \pi(\hat{w}) \mid [\pi(w),s],[y,s_{\tR}]) \\
  &= \frac{1}{M}\sum_{\pi\in \mathbb{C}_M}  \mZ_{\tR}(\pi(\hat{w})\mid [y,s_{\tR}]) \\
  &= \frac{1}{M} \label{eq:csittp_1}
\end{align}
which is a constant, so it does not depend on $(w,s)$, and 
\begin{align}
  &\sum_{\hat{w}} \mZ'(x,\hat{w}\mid [w,s], [y,s_{\tR}]) \notag \\
  & = \frac{1}{M}\sum_{\pi\in \mathbb{C}_M} \sum_{\hat{w}}  \mZ(x, \pi(\hat{w}) \mid [\pi(w),s],[y,s_{\tR}]) \\
  & = \frac{1}{M}\sum_{\pi\in \mathbb{C}_M} \sum_{\hat{w}}  \mZ(x, \hat{w} \mid [\pi(w),s],[y,s_{\tR}]) \\
  & = \frac{1}{M}\sum_{\pi\in \mathbb{C}_M}   \mZ_{\tT}(x\mid [\pi(w),s]) \label{eq:csittp_2}
\end{align}
which does not depend on $(y,s_{\tR})$ (and also not on $w$ because of the summation over $\pi$). This proves that $\mZ'$ is non-signaling. Finally, let us note that \eqref{eq:csittp_1} implies
\begin{align}
  &\sum_{x}\mZ''(x,\hat{w}\mid [w,s], y) \notag \\
  &= \sum_{x}\mZ'(x,\hat{w} \mid [w,s],[y,s]) \\
  &= \frac{1}{M}
\end{align}
which does not depend on $(w,s)$ and that \eqref{eq:csittp_2} implies
\begin{align}
  &\sum_{\hat{w}}\mZ''(x,\hat{w}\mid [w,s], y) \notag \\
  &= \sum_{\hat{w}}\mZ'(x,\hat{w}\mid [w,s], [y,s]) \\
  &= \frac{1}{M}\sum_{\pi\in \mathbb{C}_M}   \mZ_{\tT}(x\mid [\pi(w),s])
\end{align}
which does not depend on $y$ (and $w$). This completes the proof that $\mZ'$ and $\mZ''$ are non-signaling. It follows that given $\mZ$, there exists a non-signaling $\mZ''$ which needs no input $S_{\tR}$ but still achieves the same probability of success, as desired. \hfill \qed

\section{Proof of Theorem \ref{thm:capacity_cws}} \label{proof:capacity_cws}
We shall show that $C^{\NS}(\mN) = \max_{\mP_{X\mid S}}I(X;Y\mid S)$, for a channel with state, $\mN=(\mN_{Y\mid XS}, \mP_S)$. The direction $C^{\NS}(\mN) \geq \max_{\mP_{X\mid S}}I(X;Y\mid S)$ is a direct implication of Theorem \ref{thm:csit_tp}, that $C^{\NS}(\mN)=C^{\NS}(\bar{\mN})$, and since classical coding schemes are contained in NS assisted coding schemes, we have $C^{\NS}(\bar{\mN}) \geq C^{\C}(\bar{\mN}) = \max_{\mP_{X\mid S}}I(X;Y\mid S)$.

To show the converse, that $C^{\NS}(\mN) \leq \max_{\mP_{X\mid S}}I(X;Y\mid S)$, we use a similar argument to \cite[Thm. 9]{matthews2012linear} to first show the following lemma.
\begin{lemma}\label{lem:NS_Fano}
For any NS assisted coding scheme $\mZ\in \mathcal{Z}^{\NS}(M,\mN)$,
  \begin{align}
  I(X;Y\mid S) \geq \eta(\mZ) \log_2(M) - H_b(\eta(\mZ)).
\end{align}
\end{lemma}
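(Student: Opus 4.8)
The plan is to mimic the linear-programming / information-theoretic converse of \cite[Thm.~9]{matthews2012linear}, adapted to the channel-with-state setting. The starting point is the joint distribution induced by $\mZ \in \mathcal{Z}^{\NS}(M,\mN)$, namely $\Pr(w,\hat w,s,x,y) = \frac1M \mP_S(s)\,\mN_{Y\mid XS}(y\mid x,s)\,\mZ(x,\hat w\mid w,s,y)$, and the observation that on this distribution we have a Markov-type structure to exploit. First I would write the mutual information $I(X;Y\mid S)$ and lower-bound it using the data processing / chain rule machinery together with the fact that $\hat W$ is produced by the NS box from $(Y,S$-free$)$; the non-signaling constraint is exactly what lets us treat the decoder output $\hat W$ as being generated in a way compatible with $\hat W \to (Y) \to \cdots$ without the state leaking in improperly. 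Concretely, the key is that by the NS condition the marginal $\sum_x \mZ(x,\hat w\mid w,s,y)$ depends only on $(\hat w, y)$, so the pair $(Y,\hat W)$ has a conditional law that does not depend on $(W,S)$ beyond what passes through $Y$.

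The core chain of inequalities I would run is: $I(X;Y\mid S) \ge I(X;Y,\hat W\mid S) - (\text{small correction})$ is the wrong direction, so instead I would go $I(X;Y\mid S) \ge I(W;Y\mid S)$ is also not immediate since $W\to X$ need not be Markov under NS; the right move, following Matthews, is to bound $I(X;Y\mid S) \ge I(\hat W; W \mid S)$ via a sequence of steps that use (i) the NS marginal constraints to insert $\hat W$, and (ii) standard identities $I(X;Y\mid S)\ge I(X;\hat W\mid S)\ge I(W;\hat W\mid S)$ where each step is justified either by a genuine Markov chain that \emph{does} hold ($\hat W$ is a function of the box output which only sees $Y$, given $S$) or by the non-signaling structure. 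Then I would invoke Fano's inequality on $(W,\hat W)$: since $W$ is uniform on $[M]$ and $\Pr(W=\hat W)=\eta(\mZ)$, we get $H(W\mid \hat W) \le H_b(\eta(\mZ)) + (1-\eta(\mZ))\log_2 M$, equivalently $I(W;\hat W) \ge \eta(\mZ)\log_2 M - H_b(\eta(\mZ))$. Combining with $I(W;\hat W\mid S) \ge I(W;\hat W)$... actually I must be careful about conditioning on $S$: since $W$ is independent of $S$, $H(W\mid S) = \log_2 M$, and Fano conditioned on $S$ gives $H(W\mid \hat W, S)\le H_b(\eta)+(1-\eta)\log_2 M$, hence $I(W;\hat W\mid S)\ge \eta\log_2 M - H_b(\eta)$, which is what I want to chain to $I(X;Y\mid S)$.

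The main obstacle, and the place where the non-signaling hypothesis must be used in an essential (not merely cosmetic) way, is justifying the data-processing step $I(X;Y\mid S) \ge I(W;\hat W\mid S)$. In the classical case this is the Markov chain $W \to X \to Y \to \hat W$ (conditioned on $S$), but under NS assistance $X$ and $\hat W$ are jointly produced by a single box from $(W,S,Y)$, so neither $W\to X$ nor $Y\to\hat W$ is a clean Markov chain. The resolution — exactly the subtlety handled in \cite{matthews2012linear} — is to use the two NS marginal conditions: the transmitter-side marginal $\sum_{\hat w}\mZ(x,\hat w\mid w,s,y)$ depends only on $(x,w,s)$, and the receiver-side marginal $\sum_x \mZ(x,\hat w\mid w,s,y)$ depends only on $(\hat w,y)$. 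The first lets us write $X$ as generated from $(W,S)$ alone (not $Y$), giving the genuine chain $(S) : W\to X\to Y$; the second lets us write $\hat W$ as generated from $(Y)$ alone given the box, but crucially \emph{not} depending on $W$ or $S$ directly, which is what is needed so that conditioning on $S$ is benign and $\hat W$ is a stochastic function of $Y$. With both facts in hand, $(S): W\to X\to Y\to\hat W$ holds as a conditional Markov chain and the data-processing inequality applies. I would lay out these two marginalization identities explicitly at the start of the proof, then the rest is the standard Fano + DPI assembly.
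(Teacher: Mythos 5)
Your overall architecture (reduce to a Fano-type bound via a data-processing step, with the non-signaling conditions doing the real work) is in the right family, but the specific step you rely on is false, and it is exactly the step the paper's proof is designed to avoid. You claim that the two NS marginal identities restore the conditional Markov chain $W\to X\to Y\to\hat W$ given $S$, so that $I(X;Y\mid S)\ge I(W;\hat W\mid S)$ follows from data processing. The transmitter-side half is fine: since $\sum_{\hat w}\mZ(x,\hat w\mid w,s,y)$ depends only on $(x,w,s)$, one gets $\Pr(y\mid x,w,s)=\mN_{Y\mid XS}(y\mid x,s)$, i.e.\ $W\to X\to Y$ given $S$. But the receiver-side condition constrains only the \emph{marginal} $\sum_x\mZ(x,\hat w\mid w,s,y)$; it says nothing about the conditional law of $\hat W$ given $(X,W,S,Y)$, which for a genuinely nonlocal box can depend on $W$ and $X$. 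For instance, with $\mathcal{M}=\mathcal{X}=\mathcal{Y}=\{0,1\}$ and trivial state, $\mZ(x,\hat w\mid w,y)=\tfrac12\,\mathbb{I}(\hat w=w\oplus x\oplus y)$ is a valid bipartite NS box (both conditional marginals are uniform), yet $\Pr(\hat w\mid x,y,w)=\mathbb{I}(\hat w=w\oplus x\oplus y)$ depends on $w$ and $x$, so $(W,X)\to Y\to\hat W$ fails. If non-signaling really implied your Markov chain, the box would reduce to a shared-randomness code and NS assistance could never affect one-shot success probabilities, contradicting known separations; the correlation between the two outputs conditioned on both inputs is precisely what the NS conditions do \emph{not} forbid. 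So the inequality $I(X;Y\mid S)\ge I(W;\hat W\mid S)$ is left unjustified, and Fano's inequality on $(W,\hat W)$ alone does not connect to $I(X;Y\mid S)$.

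The paper's proof takes a ``meta-converse'' route that sidesteps the Markov chain entirely. Conditioning on $S=s$, it defines $T^s_{x,y}=\Pr(W=\hat W\mid X=x,Y=y,S=s)$ and establishes two facts: (i) $T^s_{x,y}$ depends only on the box $\mZ$ and not on $\mN_{Y\mid XS}$, so the map $(x,y)\mapsto\mathbb{I}(W=\hat W)$ is a fixed binary-output channel $\mB_s$ through which one may push both the true joint $\mP_s(x,y)$ and the product of its marginals $\mQ_s(x,y)$; and (ii) under the product distribution the success probability is exactly $1/M$ --- this is where non-signaling enters essentially, via $\sum_x\mZ(x,w\mid [w,s],y)=\mZ_{\tR}(w\mid y)$ and $\sum_w\mZ_{\tR}(w\mid y)=1$. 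The data-processing inequality for KL divergence then gives
\begin{align}
I(X;Y\mid S=s)=D_{\mathrm{KL}}(\mP_s\Vert\mQ_s)\ge D_{\mathrm{KL}}\big([1-\eta(\mZ\mid s),\eta(\mZ\mid s)]\,\big\Vert\,[1-\tfrac1M,\tfrac1M]\big)\ge\eta(\mZ\mid s)\log_2 M-H_b(\eta(\mZ\mid s)),
\end{align}
and concavity of $H_b$ handles the averaging over $s$. To repair your argument, replace the Markov-chain/Fano step with this broken-channel comparison.
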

The proof is given in Appendix \ref{proof:NS_Fano}. Applying Lemma \ref{lem:NS_Fano} to $\mZ_n \in \mathcal{Z}^{\NS}(M_n,N^{\otimes n})$, we have that 
\begin{align}
  I(X^n;Y^n\mid S^n) \geq \eta(\mZ_n)\log_2(M_n) - H_b(\eta(\mZ_n)).
\end{align}
Then for any $R$ that is achievable by NS assistance,
\begin{align}
  R\leq \lim_{n\to \infty} \frac{1}{n}\log_2 (M_n) \leq  \lim_{n\to \infty} \frac{1}{n}I(X^n;Y^n\mid S^n) 
\end{align}
because $\lim_{n\to \infty} \eta(\mZ_n) = 1$.
The last step is to observe that
\begin{align} \label{eq:single_letter_cws}
  \frac{1}{n}I(X^n;Y^n\mid S^n)
  \leq \max_{\mP_{X\mid S}} I(X;Y\mid S).
\end{align}
Indeed, if \eqref{eq:single_letter_cws} could be violated, then  classical random coding over multiple blocks of $N^{\otimes \ell}$ for some $\ell >1$ would have exceeded the capacity of the channel with state when CSIT and CSIR are both available.
This completes the proof of the converse. \hfill \qed

\section{Proof of Theorem \ref{thm:bipartite_NSBC}} \label{proof:bipartite_NSBC}
Given a 2-user broadcast channel $\mN$, recall that $\mathcal{C}^{\NSa}(\mN)$ is the capacity region with bipartite NS assistance between the transmitter and User 1, and $\mathcal{C}^{\NSa,\SIa}(\mN)$ is the capacity region when, additionally, User 1 is given $W_2$.
Let us first show that $\mathcal{C}^{\NSa}(\mN) = \mathcal{C}^{\NSa,\SIa}(\mN)$. This is done by showing a stronger result, that
\begin{align} \label{eq:SI_equals_woSI}
  \max_{\mZ\in \mathcal{Z}^{\NSa}(M_1,M_2,\mN)} \eta(\mZ) = \max_{\mZ \in \mathcal{Z}^{\NSa,\SIa}(M_1,M_2,\mN)} \eta(\mZ).
\end{align}
This is saying that the joint probability of success cannot be improved even if $W_2$ is made available freely at User $1$, provided that NS assistance is available between the transmitter and User $1$. This is argued as follows. For $\mZ\in \mathcal{Z}^{\NSa,\SIa}(M_1,M_2,\mN)$, note that $\mZ=\mZ_1\times \mD_2$.
Construct $\mZ_1',\mZ_1''$ as
\begin{align}
  & \mZ_1'(x,\hat{w}_1\mid [w_1,w_2], [y_1,w_2^{\tR}])= \frac{1}{M_1}\sum_{\pi \in \mathbb{C}_{M_1}}\mZ_1(x,\pi(\hat{w}_1) \mid [\pi(w_1),w_2],[y_1,w_2^{\tR}]), \\
  & \mZ_1''(x,\hat{w}_1\mid [w_1,w_2],y_1) = \mZ_1'(x,\hat{w}_1\mid [w_1,w_2],[y_1,w_2]).
\end{align}
Following the same argument as in Appendix \ref{proof:csit_tp} by replacing $\hat{w}_1 \to \hat{w}, w_1\to w, w_2\to s, w_2^{\tR}\to s_{\tR}$, it can be shown that $\mZ_1'$ and $\mZ_1''$ are non-signaling. One can then similarly verify that $\eta(\mZ) = \eta(\mZ_1\times \mD)=\eta(\mZ_1'\times \mD) = \eta(\mZ_1''\times \mD)$, essentially because relabeling the message $W_1$ does not affect the (joint) probability of success, as $W_1$ is uniformly distributed.  The desired claim \eqref{eq:SI_equals_woSI} then follows as $\mZ_1''\times \mD \in \mathcal{Z}^{\NSa}(M_1,M_2,\mN)$.

It remains to show that $\mathcal{C}^{\NSa}(\mN) \subseteq \mathcal{R}_{\KS}(\mN)$. We need the following lemma.
\begin{lemma} \label{lem:NS_Fano_BC}
  For $\mZ\in \mathcal{Z}^{\NS}(M_1,M_2,\mN)$,
  \begin{align}
    &I(X;Y_1\mid W_2) \geq \eta_1(\mZ) \log_2(M_1) - H_b(\eta_1(\mZ)).
  \end{align}
\end{lemma}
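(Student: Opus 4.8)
The plan is to obtain Lemma \ref{lem:NS_Fano_BC} as a corollary of the channel-with-state bound already established in Lemma \ref{lem:NS_Fano}, by viewing User $2$'s (non-causally known) message $W_2$ as a channel state for the point-to-point link from the transmitter to User $1$.

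First I would distill, out of the tripartite box $\mZ\in\mathcal{Z}^{\NS}(M_1,M_2,\mN)$, the bipartite box shared between the transmitter (inputs $[w_1,w_2]$, output $x$) and User $1$ (input $y_1$, output $\hat w_1$) by marginalizing away User $2$'s decoded message:
\begin{align}
	\mZ_1(x,\hat w_1\mid [w_1,w_2],y_1)\triangleq \sum_{\hat w_2}\mZ(x,\hat w_1,\hat w_2\mid [w_1,w_2],y_1,y_2).
\end{align}
The tripartite non-signaling conditions are exactly what makes this legitimate: marginalizing over party ``User $2$'s'' output forces the sum to be independent of $y_2$; marginalizing further over $x$ forces $\sum_{x}\mZ_1(x,\hat w_1\mid[w_1,w_2],y_1)$ to depend only on $(\hat w_1,y_1)$; and marginalizing over $\hat w_1$ forces $\sum_{\hat w_1}\mZ_1(x,\hat w_1\mid[w_1,w_2],y_1)$ to depend only on $(x,w_1,w_2)$. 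Hence $\mZ_1\in\mathcal{P}_2^{\NS}(\mathcal{X},\mathcal{M}_1\mid\mathcal{M}_1\times\mathcal{M}_2,\mathcal{Y}_1)$, i.e., a bona fide NS code for a point-to-point channel with state, with ``message'' $W_1\sim\mathrm{Unif}(\mathcal{M}_1)$ and ``state'' $W_2\sim\mathrm{Unif}(\mathcal{M}_2)$ both available at the transmitter. The relevant channel with state is the degenerate one $\tilde{\mN}=(\tilde{\mN}_{Y_1\mid XW_2},\mP_{W_2})$ with $\tilde{\mN}_{Y_1\mid XW_2}(y_1\mid x,w_2)\triangleq \mN_{Y_1\mid X}(y_1\mid x)$ (User $1$'s marginal channel, which simply ignores the state) and $\mP_{W_2}$ uniform, so that $\mZ_1\in\mathcal{Z}^{\NS}(M_1,\tilde{\mN})$.

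Two routine identifications then close the argument. (i) The success probabilities agree, $\eta(\mZ_1)=\eta_1(\mZ)$: starting from the BC expression for $\eta_1(\mZ)$ and summing over $(\hat w_2,y_2)$, the key point is that $\sum_{\hat w_2}\mZ(x,w_1,\hat w_2\mid[w_1,w_2],y_1,y_2)=\mZ_1(x,w_1\mid[w_1,w_2],y_1)$ is independent of $y_2$, so $\sum_{y_2}\mN(y_1,y_2\mid x)$ collapses to $\mN_{Y_1\mid X}(y_1\mid x)$ and one recovers precisely the joint distribution that $\mZ_1$ induces on $\tilde{\mN}$. (ii) The mutual informations agree: the $(W_2,X,Y_1)$-marginal of the BC code factors as $\mP_{W_2}(w_2)\bigl(\tfrac{1}{M_1}\sum_{w_1}\sum_{\hat w_1}\mZ_1(x,\hat w_1\mid[w_1,w_2],y_1)\bigr)\mN_{Y_1\mid X}(y_1\mid x)$, which is exactly the $(S,X,Y)$-marginal for $\tilde{\mN}$ under $\mZ_1$, so the quantity $I(X;Y_1\mid W_2)$ in the lemma coincides with $I(X;Y\mid S)$ for $\tilde{\mN}$. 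Applying Lemma \ref{lem:NS_Fano} to $\mZ_1$ over $\tilde{\mN}$ then yields
\begin{align}
	I(X;Y_1\mid W_2)\;\geq\; \eta(\mZ_1)\log_2(M_1)-H_b(\eta(\mZ_1))\;=\;\eta_1(\mZ)\log_2(M_1)-H_b(\eta_1(\mZ)),
\end{align}
as desired.

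Equivalently, one may simply rerun the proof of Lemma \ref{lem:NS_Fano} verbatim under the substitution $(W,S,Y,M,\mN_{Y\mid XS})\mapsto(W_1,W_2,Y_1,M_1,\mN_{Y_1\mid X})$; the reduction above just spells out why that substitution is sound. I expect the only place requiring care to be the non-signaling bookkeeping --- confirming that marginalizing the tripartite box over User $2$'s output produces a legitimate bipartite non-signaling box, and that this marginalization commutes with the channel's $y_2$-sum --- after which the statement is an immediate consequence of Lemma \ref{lem:NS_Fano}.
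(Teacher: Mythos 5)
Your proposal is correct and follows essentially the same route as the paper, which likewise obtains Lemma \ref{lem:NS_Fano_BC} from Lemma \ref{lem:NS_Fano} via the identification $(W_1,W_2,Y_1)\mapsto(W,S,Y)$. The only cosmetic difference is that you package the reduction as a black-box application of Lemma \ref{lem:NS_Fano} to an explicitly constructed bipartite box $\mZ_1$ and a degenerate channel with state, whereas the paper re-verifies the key ``broken-channel success probability equals $1/M_1$'' step directly for the tripartite box using its marginal $\mZ_{\tR\mathrm{\scriptscriptstyle 1}\tR\mathrm{\scriptscriptstyle 2}}$; both hinge on the same non-signaling marginalization facts.
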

The proof of Lemma \ref{lem:NS_Fano_BC} is left to Appendix \ref{proof:NS_Fano}. One may view this lemma as an application of Lemma \ref{lem:NS_Fano} by replacing $(W_1,W_2,Y_1)$ with $(W,S,Y)$.

The rest of the proof is to apply Lemma \ref{lem:NS_Fano_BC}, Fano's inequality, and the same single-letterization steps as in \cite{kramer2007capacity}, \cite[Sec. 5.6.1 and Thm. 8.5]{NIT} by identifying the relevant auxiliary random variables.

Applying Lemma \ref{lem:NS_Fano_BC} to $\mZ_n \in \mathcal{Z}^{\NS}(M_{1,n}, M_{2,n}, \mN^{\otimes n})$, we have 
\begin{align}
  I(X^n;Y_1^n\mid W_2) \geq \eta_1(\mZ_n) \log_2(M_{1,n}) - H_b(\eta_1(\mZ_n)).
\end{align}
Then for any $(R_1,R_2)$ that is achievable by fully NS assisted coding schemes,
\begin{align}
  R_1 \leq \lim_{n\to \infty} \frac{1}{n} \log_2(M_{1,n}) \leq \lim_{n\to \infty} \frac{1}{n} I(X^n;Y_1^n\mid W_2),
\end{align}
since $\lim_{n\to \infty} \eta_1(n) = 1$. We alternatively write it as
\begin{align} \label{eq:NSfano_R1}
  nR_1 \leq I(X^n;Y_1^n\mid W_2) + o(n).
\end{align}
\eqref{eq:NSfano_R1} also holds for $(R_1,R_2)\in \mathcal{C}^{\NSa}$, since $\mathcal{Z}^{\NSa} \subseteq \mathcal{Z}^{\NS}$.

Meanwhile, since there is no NS assistance to User $2$, $W_2\leftrightarrow Y_2^n \leftrightarrow \hat{W}_2$ form a Markov chain. By the data processing inequality and Fano's inequality,
\begin{align} \label{eq:fano_R2}
  n R_2\leq  I(W_2;Y_2^n) + o(n).
\end{align}

It is useful to have the joint distribution of $(W_1,W_2,X^n,Y_1^n, Y_2^n)$ written explicitly as,
\begin{align}
  &\Pr(W_1=w_1,W_2=w_2,X^n=x^n,Y_1^n=y_1^n,Y_2^n = y_2^n)\notag \\
  &=\frac{1}{M_1M_2}\mZ_{X^n\mid W_1W_2}(x^n\mid w_1,w_2)\times 
  \prod_{i=1}^n \mN(y_{1,i},y_{2,i}\mid x_i)
\end{align}
where $\mZ_{X^n\mid W_1W_2}$ is the marginal distribution of $\mZ_n$ for the transmitter.

Now, define $U_i \triangleq (W_2,Y_1^{i-1}, Y_{2,i+1}^n)$. Note that $U_i \leftrightarrow X_i \leftrightarrow (Y_{1,i},Y_{2,i})$ form a Markov chain for $i\in [n]$. We have
\begin{align}
  &I(X^n;Y_1^n\mid W_2) \notag \\
  &= \sum_{i=1}^n I(X^n;Y_{1,i}\mid W_2, Y_1^{i-1})\\
  &\leq \sum_{i=1}^n I(X^n,Y_{2,i+1}^n;Y_{1,i}\mid W_2, Y_1^{i-1})\\
  &= \sum_{i=1}^n I(Y_{2,i+1}^n;Y_{1,i}\mid W_2, Y_1^{i-1}) + \sum_{i=1}^n I(X^n;Y_{1,i} \mid \underbrace{W_2, Y_1^{i-1}, Y_{2,i+1}^n}_{= U_i})
\end{align}
and
\begin{align}
  &I(W_2;Y_2^n)\notag \\
  &=\sum_{i=1}^n I(W_2;Y_{2,i} \mid Y_{2,i+1}^n)  \\
  &=\sum_{i=1}^n I(W_2, Y_1^{i-1};Y_{2,i} \mid Y_{2,i+1}^n)  -   \sum_{i=1}^n I(Y_1^{i-1};Y_{2,i} \mid W_2, Y_{2,i+1}^n) \\
  &\leq \sum_{i=1}^n I(\underbrace{W_2, Y_1^{i-1},Y_{2,i+1}^n}_{= U_i};Y_{2,i})  -   \sum_{i=1}^n I(Y_1^{i-1};Y_{2,i} \mid W_2, Y_{2,i+1}^n) 
\end{align}
Therefore,
\begin{align}
  &I(X^n;Y_1^n\mid W_2)  + I(W_2;Y_2^n)   \notag \\
  &\leq  \sum_{i=1}^n I(X_i;Y_{1,i} \mid U_i) + \sum_{i=1}^n  I(U_i;Y_{2,i})  \label{eq:KornerMarton} \\
  &= n\big( I(X_T;Y_{1,T}\mid U_T,T) + I(U_T;Y_{2,T}\mid T) \big)\label{eq:time_sharing_variable}\\
  &\leq n\big( I(X_T;Y_{1,T}\mid U_T,T) + I(U_T,T;Y_{2,T})\big)
\end{align}
Step \eqref{eq:KornerMarton} is because $\sum_{i=1}^n I(Y_{2,i+1}^n;Y_{1,i}\mid W_2, Y_1^{i-1})=\sum_{i=1}^n I(Y_1^{i-1};Y_{2,i} \mid W_2, Y_{2,i+1}^n)$ by the Korner-Marton identity (Csisz{\'a}r sum identity), and $X^n \leftrightarrow (U_i,X_i) \leftrightarrow Y_{1,i}$. In \eqref{eq:time_sharing_variable} we define a time sharing variable $T$ that is uniformly distributed over $[n]$ and is independent of $W_1W_2X^nY_1^nY_2^n$. Note that $(U_T,T)\leftrightarrow X_T \leftrightarrow (Y_{1,T}, Y_{2,T})$.

On the other hand, from \eqref{eq:fano_R2} we have
\begin{align}
  &nR_2 -o(n) \notag \\
  &\leq I(W_2;Y_2^n) \\
  &= \sum_{i=1}^n I(W_2;Y_{2,i}\mid Y_{2,i+1}^n) \\
  &\leq \sum_{i=1}^n I(U_i;Y_{2,i})\\
  &\leq n\sum_{i=1}^n I(U_T,T;Y_{2,T})
\end{align}
Also, from \eqref{eq:NSfano_R1} we have
\begin{align}
  &nR_1 - o(n) \notag \\
  &\leq I(X^n;Y_1^n\mid W_2)\\
  &\leq I(X^n;Y_1^n) \label{eq:dropW1} \\
  &\leq \sum_{i=1}^n I(X_i;Y_{1,i}) \label{eq:XnYn2XiYi} \\
  &\leq n I(X_T;Y_{1,T})
\end{align}
where Step \eqref{eq:dropW1} is because $W_2\leftrightarrow X^n\leftrightarrow Y_1^n$, and Step \eqref{eq:XnYn2XiYi} is because $(X^n, Y_1^{i-1}) \leftrightarrow X_i \leftrightarrow Y_{1,i}$. By viewing $X_T$ as $X$, $Y_{1,T}$ as $Y_1$, $Y_{2,T}$ as $Y_2$, $(U_T,T)$ as $U$, we are able to show that any $(R_1,R_2)$ achievable with bipartite NS assistance between the transmitter and User 1 must satisfy
\begin{align}
  R_1 \leq I(X;Y_1), ~~R_2 \leq I(U;Y_2), ~~R_1+R_2 \leq I(X;Y_1\mid U)+I(U;Y_2)
\end{align}
for some $\mP_{XU}$ such that $U\leftrightarrow X \leftrightarrow (Y_1,Y_2)$. This proves that $\mathcal{C}^{\NSa} \subseteq \mathcal{R}_{\KS}$.\hfill \qed

\section{Proofs of Lemma \ref{lem:NS_Fano} and Lemma \ref{lem:NS_Fano_BC}} \label{proof:NS_Fano}
We first prove Lemma \ref{lem:NS_Fano}.
The proof is essentially the derivation of  \cite[Thm. 27 and Eq. (157)]{Polyanskiy_Poor_Verdu} (also see \cite[Thm. 9]{matthews2012linear}), by conditioning on each $S=s$ for $s\in \mathcal{S}$.
The key of the proof is identifying that $\mZ$ has probability of success equal to $1/M$ if the channel is broken,  conditioned on each $S=s$ for $s\in \mathcal{S}$. This observation is also useful for the generalization to broadcast channels, i.e., Lemma \ref{lem:NS_Fano_BC}. The rest of the proof then follows from an application of the data processing inequality.

Let $\eta(\mZ\mid s) \triangleq \Pr(W=\hat{W}\mid S=s) = \frac{1}{M}\sum_{w,x,y} \mN_{Y\mid XS}(y\mid x,s)\mZ(x,w \mid [w,s],y)$ be the probability of success conditioned on $S=s$, for $s\in \mathcal{S}$. Let
\begin{align}
T_{x,y}^s \triangleq \Pr(W=\hat{W}\mid X=x,Y=y,S=s)
\end{align}
for $s\in \mathcal{S}, x\in \mathcal{X}, y\in \mathcal{Y}$. By the law of total probability,
\begin{align} \label{eq:prob_success_closed}
  \eta (\mZ\mid s) = \sum_{x,y} \Pr(X=x,Y=y\mid S=s) \times T_{x,y}^s.
\end{align}
We point out that $T_{x,y}^s$ does not depend on $\mN_{Y|XS}$.  Indeed, $T_{x,y}^{s}$ is calculated as,
\begin{align}
  &T_{x,y}^s = \frac{\Pr(W=\hat{W}, X=x,Y=y\mid S=s)}{\Pr(X=x,Y=y\mid S=s)}\\
  &= \frac{\frac{1}{M}\sum_{w}\mZ(x,w \mid [w,s],y)N_{Y\mid XS}(y\mid x,s)}{\frac{1}{M}\sum_{w,\hat{w}}\mZ(x,\hat{w} \mid [w,s],y)N_{Y\mid XS}(y\mid x,s)}\\
  &=\frac{\sum_{w}\mZ(x,w \mid [w,s],y)}{\sum_{w,\hat{w}}\mZ(x,\hat{w} \mid [w,s],y)}
\end{align}
which does not depend on $\mN_{Y\mid XS}$.

On the other hand, we argue that
\begin{align} \label{eq:prob_success_broken}
  \sum_{x,y} \Pr(X=x\mid S=s) \times \Pr(Y=y\mid S=s) \times T_{x,y}^s = \frac{1}{M}.
\end{align}
This is because \eqref{eq:prob_success_broken} is equal to the probability of success of $\mZ$ if we replace $\mN_{Y\mid XS}$ with $\mP_{Y\mid S}$, i.e., break the channel, which makes $X$ and $Y$ independent conditioned on $S=s$, and then the probability of success must be equal to $1/M$ due to the fact that $\mZ$ is non-signaling. Indeed, conditioned on $S=s$, if we calculate the probability of success for $\mZ$ when the channel is broken, it equals
\begin{align}
  &\frac{1}{M}\sum_{w,x,y}\mZ(x,w\mid [w,s], y) \mP_{Y\mid S}(y\mid s) \\
  &=\frac{1}{M} \sum_{w,y} \mZ_{\tR}(w\mid y) \mP_{Y\mid S}(y\mid s)\\
  &= \frac{1}{M}
\end{align}
where we define $\sum_x \mZ(x,w \mid [w,s],y)\triangleq \mZ_{\tR}(w\mid y)$ as $\mZ$ is non-signaling.

Now for $s\in \mathcal{S}$, define a channel $\mB_s \in \mathcal{P}(\{0,1\}\mid \mathcal{X}\times \mathcal{Y})$ such that
$\mB_s(1\mid x,y)=T_{x,y}^s, \mB_s(0\mid x,y)=1-T_{x,y}^s$, for $(x,y) \in \mathcal{X}\times \mathcal{Y}$. In words, given that the input to the channel $\mB_s$ is $(x,y)$, the channel outputs $1$ with probability $T_{x,y}^s$ and outputs $0$ with probability $1-T_{x,y}^s$.  Let $\mB(\mP)\in \mathcal{P}(\{0,1\})$ denote the output distribution of channel $\mB$ with respect to input distribution $\mP$. Let $\mP_s, \mQ_s \in \mathcal{P}(\mathcal{X}\times\mathcal{Y})$ such that $\mP_s(x,y) \triangleq \Pr(X=x,Y=y\mid S=s)$ and $\mQ_s(x,y) \triangleq \Pr(X=x\mid S=s)\times \Pr(Y=y\mid S=s)$. Then \eqref{eq:prob_success_closed} implies that $\mB(\mP_s) = [1-\eta(\mZ\mid s), \eta(\mZ\mid s)]$ and \eqref{eq:prob_success_broken} implies that $\mB(\mathsf{Q}_s) = [1-\frac{1}{M}, \frac{1}{M}]$. It follows from the data processing inequality  that
\begin{align}
  I(X;Y\mid S=s) = D_{\mathrm{KL}}(\mP_s\Vert \mQ_s) \geq D_{\mathrm{KL}}(\mB(\mP_s) \Vert \mB(\mQ_s)) \geq \eta(\mZ\mid s)  \log_2(M) - H_b(\eta(\mZ\mid s)),
\end{align}
where $D_{\mathrm{KL}}(\cdot \Vert \cdot)$ is the KL divergence and $H_b(\cdot)$ is the binary entropy function. Taking the convex combination of both sides, we obtain
\begin{align}
  &I(X;Y\mid S) = \sum_{s} \mP_S(s) I(X;Y\mid S=s)\\
  &\geq \log_2 (M) \times \sum_{s} \mP_S(s) \eta(\mZ\mid s) -\sum_{s} \mP_S(s) H_b(\eta(\mZ\mid s))\\
  &= \eta(\mZ)  \log_2 (M) - \sum_{s} \mP_S(s) H_b(\eta(\mZ\mid s))\\
  &\geq \eta(\mZ)  \log_2 (M) - H_b(\eta(\mZ))
\end{align}
where the last step is because $H_b$ is concave. \hfill \qed

To prove Lemma \ref{lem:NS_Fano_BC}, we apply the same argument that was used to prove Lemma \ref{lem:NS_Fano}. We first show that, given a 2-user broadcast channel $\mN$ and a coding scheme $\mZ\in \mathcal{Z}^{\NS}(M_1,M_2,\mN)$, if the channel $\mN$ is replaced with the broken channel, $\mP_{Y_1Y_2}$, defined such that $\mP_{Y_1Y_2}(y_1,y_2)=\Pr(Y_1=y_1,Y_2=y_2)$, then conditioned on $W_2=w_2$, the probability of success for Message $1$ must equal $1/M_1$, for every $w_2\in [M_2]$. This is explicitly calculated as follows.
\begin{align}
  &\frac{1}{M_1} \sum_{w_1,\hat{w}_2,x,y_1,y_2} \mZ(x,w_1,\hat{w}_2\mid [w_1,w_2],y_1,y_2)\times \mP_{Y_1Y_2}(y_1,y_2)\\
  &=\frac{1}{M_1} \sum_{w_1,\hat{w}_2,y_1,y_2} \mZ_{\tR_\mathrm{\scriptscriptstyle  1}\tR_\mathrm{\scriptscriptstyle  2}}(w_1,\hat{w}_2\mid y_1,y_2)\times \mP_{Y_1Y_2}(y_1,y_2)\\
  &=\frac{1}{M_1}
\end{align}
where $\mZ_{\tR_\mathrm{\scriptscriptstyle  1}\tR_\mathrm{\scriptscriptstyle 2}}$ is the marginal distribution of $\mZ$ for User $1$ and User $2$. Then by viewing $W_2$ as $S$, $Y_1$ as $Y$, we conclude that $I(X;Y_1\mid W_2) \geq \eta_1(\mZ)\log_2(M_1) - H_b(\eta_1(\mZ))$ as in the proof of Lemma \ref{lem:NS_Fano}. \hfill \qed

\section{Proof of Theorem \ref{thm:Sato}} \label{proof:Sato}
The proof is simply a combination of the converse argument for the NS assisted capacity of a point-to-point channel (without states) and the same-marginals property of NS assisted coding schemes for general broadcast channels (see \cite[Thm. 2]{Yao_Jafar_NS_DoF}).

First let us recall the converse for the NS assisted capacity of a point-to-point channel (without states). This can be deduced from a special case of Lemma \ref{lem:NS_Fano} by making $|\mathcal{S}|=1$ (i.e., no channel state $S$), and applying it to $n$-th extension of a point-to-point channel $\mN_{Y\mid X}^{\otimes n}$. We obtain $I(X^n;Y^n) \geq \eta(\mZ_n) \log_2(M_n) - H_b(\eta(\mZ_n))$, for $\mZ_n\in \mathcal{Z}^{\NS}(M_n,N_{Y\mid X}^{\otimes n})$, and therefore $nR\leq I(X^n;Y^n)+o(n)$ if $R$ is achievable by NS assistance over the channel $\mN_{Y\mid X}$.

Now consider the broadcast channel. 
Given $\mN_{Y_1Y_2\mid X} \in \mathcal{P}(\mathcal{Y}_1 \times \mathcal{Y}_2 \mid \mathcal{X})$, let $\mN_{Y_1\mid X}$ and $\mN_{Y_2\mid X}$ be the marginal distributions of $\mN_{Y_1Y_2\mid X}$ for User $1$ and User $2$, respectively.
Let $\mathcal{N}'(\mN_{Y_1Y_2\mid X})$ be the subset of channels that has the same marginals as $\mN_{Y_1Y_2\mid X}$, i.e., $\mN'_{Y_1'Y_2'}\in \mathcal{N}'(\mN_{Y_1Y_2\mid X})$  iff $\sum_{y_i}\mN'_{Y_1'Y_2'}(y_1,y_2\mid x) = \sum_{y_i}\mN_{Y_1Y_2\mid X}(y_1,y_2\mid x)$, for all $x\in \mathcal{X}, y_j\in\mathcal{Y}_j$, $\{i,j\}=\{1,2\}$.

Now say we are given a sequence of fully NS assisted coding schemes $\mZ_n \in \mathcal{Z}^{\NS}(M_{1,n},M_{2,n},$ $\mN_{Y_1Y_2\mid X}^{\otimes n})$. Denote $\mZ_{\tT\tR_\mathrm{\scriptscriptstyle  1},n}$ as the marginal distribution of $\mZ_n$ for the transmitter and User 1. It is not difficult to see that $\mZ_{\tT\tR_\mathrm{\scriptscriptstyle  1},n} \in \mathcal{Z}^{\NS}(M_{1,n},\mN_{Y_1\mid X}^{\otimes n})$ and therefore it is a valid NS assisted coding scheme over the sub-channel from the transmitter to User $1$. If $(R_1,R_2)$ is achievable by NS assisted coding schemes, it follows that $nR_1\leq I(X^n;Y_1^n) + o(n)$. By symmetry, we obtain that $nR_2\leq I(X^n;Y_2^n) + o(n)$. 

Similarly, if we treat User $1$ and User $2$ as one receiver by grouping their message $W_1,W_2$ as one message $W$, and letting $Y_1,Y_2$ to be processed together, then the argument also implies $n(R_1+R_2) \leq I(X^n;Y_1^n,Y_2^n)+o(n)$. Now according to \cite[Thm. 2]{Yao_Jafar_NS_DoF}, $\mZ_n$ achieves the same (individual) probability of success on any channel ${\mN'}_{Y_1'Y_2'\mid X}^{\otimes n}$ as long as $\mN'_{Y_1'Y_2'\mid X}\in \mathcal{N}'(\mN_{Y_1Y_2\mid X})$. It follows that, for $k\in \{1,2\}$,
\begin{align}
  &nR_k-o(n) \notag \\
  &\leq I(X^n;{Y_k'}^n)   \\
  &= I(X^n;Y_k^n)\\
  & = \sum_{i=1}^n I(X^n; Y_{k,i} \mid Y_k^{i-1})     \\
  &\stackrel{(a)}{\leq} \sum_{i=1}^n I(X_i;Y_{k,i})\\
  &=nI(X_T;Y_{k,T}\mid T)\\
  &\stackrel{(b)}{\leq} nI(X_T;Y_{k,T})
\end{align}
and
\begin{align}
  &n(R_1+R_2)-o(n) \notag \\
  &\leq \min_{\mN'_{Y_1'Y_2'\mid X}}I(X^n; {Y_1'}^n,{Y_2'}^n)\\
  &= \min_{\mN'_{Y_1'Y_2'\mid X}} \sum_{i=1}^nI(X^n;  Y_{1,i}' , Y_{2,i}'  \mid {Y_1'}^{i-1},{Y_2'}^{i-1})\\
  &\stackrel{(a)}{\leq} \min_{\mN'_{Y_1'Y_2'\mid X}}\sum_{i=1}^n I(X_i; Y'_{1,i}, Y'_{2,i})\\
  &=n \times \min_{\mN'_{Y_1'Y_2'\mid X}}  I(X_T; Y'_{1,T}, Y'_{2,T}\mid T) \\
  &\stackrel{(b)}{\leq} n \times \min_{\mN'_{Y_1'Y_2'\mid X}}I(X_T; Y'_{1,T}, Y'_{2,T})
\end{align}
where $T$ is uniformly distributed over $[n]$ and is independent of $W_1W_2X^n{Y_1'}^n{Y_2'}^n$. Steps labeled by $(a)$ follow from the Markov chain $(X^n,{Y'_1}^{i-1},{Y'_2}^{i-1}) \leftrightarrow X_i\leftrightarrow (Y_{1,i}',Y_{2,i}')$, and steps labeled by $(b)$ follow from the Markov chain $T\leftrightarrow X_T \leftrightarrow (Y'_{1,T},Y'_{2,T})$. We have also used the fact that if $X\leftrightarrow Y \leftrightarrow Z$ then $I(Y;Z\mid X)\leq I(Y;Z)$.
Thus, we obtain $(R_1,R_2) \in \mathcal{R}_{\Sato}(\mN_{Y_1Y_2\mid X})$,  by identifying $X= X_T, Y_1=Y_{1,T}, Y_2=Y_{2,T}, Y_1'=Y_{1,T}', Y_2'=Y_{2,T}'$.  \hfill \qed

\section{Proof of Corollary \ref{cor:semi_det_E}} \label{proof:semi_det_E}
The following lemma will be useful.
\begin{lemma}\label{lem:erasure_channel}
Consider any channel $\mN\in \mathcal{P}(\mathcal{Y}\mid \mathcal{X})$, and define the erasure symbol $\phi\notin\mathcal{X}\cup\mathcal{Y}$. Let $\mN_{\rm E}^\gamma \in \mathcal{P}(\mathcal{Y}\cup\{\phi\}\mid \mathcal{Y}\cup\{\phi\})$, denote the erasure channel with erasure probability $\gamma$, i.e., 
\begin{align}
\mN_{\rm E}^\gamma({y}_2 \mid y_1)&=\left\{\begin{array}{ll}(1-\gamma)\mathbb{I}(y_2=y_1)+\gamma\mathbb{I}(y_2=\phi), &y_1\not= \phi\\
\mathbb{I}(y_2=\phi),&y_1=\phi.
\end{array}
\right.
\end{align}
Consider $X\stackrel{\mN}{\longrightarrow} Y \stackrel{\mN_{\rm E}^{\alpha}}{\longrightarrow} \tilde{Y}' \stackrel{\mN_{\rm E}^{\beta}}{\longrightarrow} \tilde{Y}$, where $A \stackrel{\mN}{\longrightarrow} B$ means that $A,B$ are the input and the output of channel $\mN$, respectively. Then,
\begin{align}
  I(X;\tilde{Y}') = (1-\alpha)I(X;Y)
\end{align}
and
\begin{align}
  I(X;\tilde{Y}) = (1-\alpha)(1-\beta)I(X;Y)= (1-\beta) I(X;\tilde{Y}')
\end{align}
\end{lemma}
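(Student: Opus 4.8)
The plan is to prove the first identity by isolating the erasure event of $\mN_{\rm E}^\alpha$ as an auxiliary random variable, and then to deduce the second identity by composing the two erasure channels. Since $\phi\notin\mathcal{Y}$, the event $\{\tilde Y'=\phi\}$ occurs precisely when $\mN_{\rm E}^\alpha$ erases, so the indicator $E\triangleq\mathbb{I}(\tilde Y'=\phi)$ is Bernoulli$(\alpha)$, independent of $(X,Y)$, and a deterministic function of $\tilde Y'$ (hence $H(E\mid\tilde Y')=0$). I would then write, using the chain rule for mutual information,
\begin{align*}
	I(X;\tilde Y') = I(X;\tilde Y',E) = I(X;E) + I(X;\tilde Y'\mid E) = I(X;\tilde Y'\mid E),
\end{align*}
and evaluate the last term by conditioning on the value of $E$: when $E=1$ we have $\tilde Y'=\phi$ constant, contributing $0$; when $E=0$ we have $\tilde Y'=Y$, and since $E$ is independent of $(X,Y)$ the conditional joint law of $(X,Y)$ is unchanged, so the contribution is $I(X;Y)$. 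Averaging with weights $\alpha,1-\alpha$ gives $I(X;\tilde Y')=(1-\alpha)I(X;Y)$. (An equivalent one-line route: $H(\tilde Y')=H_b(\alpha)+(1-\alpha)H(Y)$ and $H(\tilde Y'\mid X)=H_b(\alpha)+(1-\alpha)H(Y\mid X)$; subtract.)

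For the second identity I would first observe that the cascade of $\mN_{\rm E}^\alpha$ followed by $\mN_{\rm E}^\beta$ is again an erasure channel on $\mathcal{Y}\cup\{\phi\}$: a symbol in $\mathcal{Y}$ survives both stages with probability $(1-\alpha)(1-\beta)$ and is mapped to $\phi$ otherwise, while $\phi\mapsto\phi$. Thus $X\stackrel{\mN}{\longrightarrow}Y\stackrel{\mN_{\rm E}^{\gamma}}{\longrightarrow}\tilde Y$ with $1-\gamma=(1-\alpha)(1-\beta)$, and applying the first part with $\alpha$ replaced by $\gamma$ (again legitimate since $\phi\notin\mathcal{Y}$) yields $I(X;\tilde Y)=(1-\gamma)I(X;Y)=(1-\alpha)(1-\beta)I(X;Y)$. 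Since $(1-\alpha)(1-\beta)I(X;Y)=(1-\beta)I(X;\tilde Y')$ by the first identity, this also gives $I(X;\tilde Y)=(1-\beta)I(X;\tilde Y')$.

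No step here is genuinely difficult; the one place that needs care is the claim that the erasure indicator is independent of $(X,Y)$, which is exactly where the hypothesis $\phi\notin\mathcal{X}\cup\mathcal{Y}$ is used — it guarantees that $Y$ never already equals the erasure symbol, both for $\mN_{\rm E}^\alpha$ acting on $Y$ and for the composed channel $\mN_{\rm E}^\gamma$ acting on $Y$. If one preferred to avoid the ``composition is an erasure channel'' observation, the same auxiliary-variable argument can be applied directly to the link $\tilde Y'\to\tilde Y$ using a fresh Bernoulli$(\beta)$ erasure indicator, at the cost of a slightly longer verification that this indicator is conditionally independent of $X$ given $\tilde Y$.
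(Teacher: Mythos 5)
Your proposal is correct and follows essentially the same route as the paper: the paper encodes $\phi$ as $0$ and writes $\tilde Y'=E_\alpha\times Y$ with an independent Bernoulli erasure indicator $E_\alpha$ (a function of $\tilde Y'$ since $\phi\notin\mathcal{Y}$), then uses $I(X;\tilde Y')=I(X;\tilde Y',E_\alpha)=I(X;\tilde Y'\mid E_\alpha)$ and conditions on the erasure event, exactly as you do; for the second identity it uses the combined indicator $E_\alpha\times E_\beta$, which is your ``cascade of erasure channels is an erasure channel'' observation in different notation. No gaps.
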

\begin{proof}
  Without loss of generality, let us map $\mathcal{X}\mapsto \{1,2,\cdots, |\mathcal{X}|\}$, $\mathcal{Y} \mapsto \{1,2,\cdots, |\mathcal{Y}|\}$, and $\phi \mapsto 0$. Then $\tilde{Y}' = E_{\alpha} \times Y$, $\tilde{Y}=E_{\beta}\times \tilde{Y}'$, where $E_\alpha, E_\beta$ are independent random variables that are also independent of $(X,Y)$ such that $[\Pr(E_x=0),\Pr(E_x=1)]=[x, 1-x]$ for $x\in \{\alpha, \beta\}$. Now
  \begin{align}
    &I(X;\tilde{Y}')= I(X;\tilde{Y}',E_\alpha)\\
    &=I(X;\tilde{Y}'\mid E_\alpha) \\
    &=(1-\alpha)I(X;\tilde{Y}' \mid E_\alpha=1)\\
    &=(1-\alpha) I(X;Y \mid E_\alpha=1)\\
    &=(1-\alpha) I(X;Y)
  \end{align}
  Similarly, 
  \begin{align}
    &I(X;\tilde{Y})= I(X;\tilde{Y}, E_\alpha\times E_\beta)\\
    &=(1-\alpha)(1-\beta)I(X;\tilde{Y}\mid E_\alpha \times E_\beta = 1)\\
    &= (1-\alpha)(1-\beta) I(X;Y)\\
    &= (1-\beta)I(X;\tilde{Y}')
  \end{align}
  which concludes the proof of the lemma.
\end{proof}
According to Theorem \ref{thm:bipartite_NSBC}, with $U = f(X)=Y_2$, it follows that any $(R_1,R_2)$ is achievable by full NS assistance if
\begin{align}
  R_1 &\leq I(X;\tilde{Y}_1) \\
  &\stackrel{(a)}{=} (1-\gamma_1) I(X;Y_1)
\end{align}
\begin{align}
  R_2 &\leq I(Y_2;\tilde{Y}_2) \\
  &= I(X;\tilde{Y}_2) \\
  &\stackrel{(a)}{=} (1-\gamma_2) I(X;Y_2) \\
  &= (1-\gamma_2) H(Y_2)
\end{align}
and
\begin{align}
  &R_1+R_2 \notag \\
  &\leq I(X;\tilde{Y}_1\mid Y_2) + I(Y_2;\tilde{Y}_2)\\
  &=I(X;\tilde{Y}_1\mid Y_2) + I(X;\tilde{Y}_2)\\
  &= \sum_{y_2\in \mathcal{Y}_2}\Pr(Y_2=y_2) \times I(X;\tilde{Y}_1\mid Y_2=y_2) + I(X;\tilde{Y}_2) \\
  &\stackrel{(a)}{=} (1-\gamma_1) I(X;Y_1\mid Y_2) + (1-\gamma_2) I(X;Y_2)\\
  &= (1-\gamma_1) I(X;Y_1\mid Y_2) + (1-\gamma_2) H(Y_2)
\end{align}
for some $P_X\in \mathcal{P}(\mathcal{X})$.
Steps labeled by $(a)$ follow from Lemma \ref{lem:erasure_channel}. This provides an inner bound on $\mathcal{C}^{\NS}\big(\mN^{\semidetE}_{\tilde{Y}_1\tilde{Y}_2\mid X}\big)$.

On the other hand, the sub-channel to User 1, $\mN^{\semidetE}_{Y_1\mid X}$ can be equivalently viewed as $\mN^{\semidet}_{Y_1\mid X}$ followed by a sequence of two erasure channels, with erasure probabilities $\gamma'$ and $\gamma_2$ such that $1-\gamma_1= (1-\gamma')(1-\gamma_2)$ (recall that we need $0\leq \gamma_2 \leq \gamma_1 \leq 1)$. Denote $\tilde{Y}_1'$ as the output of the first erasure channel (the one associated with $\gamma'$). Note that in computing $\mathcal{R}_{\Sato}\big(\mN^{\semidetE}_{\tilde{Y}_1\tilde{Y}_2\mid X}\big)$, we are free to choose the joint distribution subject to the same-marginals conditions, and any choice provides a valid outer bound on $\mathcal{C}^{\NS}\big(\mN^{\semidetE}_{\tilde{Y}_1\tilde{Y}_2\mid X}\big)$. Let us choose the joint distributions such that the erasure at User $2$ happens simultaneously with the second erasure at User $1$ (both erasure channels are associated with $\gamma_2$), so that $[\tilde{Y}_1', Y_2]$ undergoes a block erasure channel with erasure probability $\gamma_2$.
The outer bound then implies that any achievable $(R_1,R_2)$ must satisfy
\begin{align}
  R_1 &\leq I(X;\tilde{Y}_1) \\
  &\stackrel{(a)}{=} (1-\gamma_1) I(X;Y_1)
\end{align}
\begin{align}
  R_2 &\leq I(X;\tilde{Y}_2) \\
  &\stackrel{(a)}{=} (1-\gamma_2)  I(X;Y_2) \\
  &= (1-\gamma_2)  H(Y_2)
\end{align}
and
\begin{align}
  &R_1+R_2 \notag \\
  &\stackrel{(a)}{\leq} (1-\gamma_2) I(X;\tilde{Y}_1',Y_2)\\
  &=(1-\gamma_2)  I(X;Y_2) + (1-\gamma_2) I(X;\tilde{Y}_1'\mid Y_2)\\
  &=(1-\gamma_2) H(Y_2) + (1-\gamma_2) \sum_{y_2\in \mathcal{Y}_2}\Pr(Y_2=y_2)\times I(X;\tilde{Y}_1'\mid Y_2=y_2)\\
  &\stackrel{(a)}{=} (1-\gamma_2) H(Y_2) + (1-\gamma_2)(1-\gamma') I(X;Y_1\mid Y_2)\\
  &= (1-\gamma_1) I(X;Y_1\mid Y_2) + (1-\gamma_2)  H(Y_2)
\end{align}
for some $\mP_X\in \mathcal{P}(\mathcal{X})$. Steps labeled by $(a)$ follow from Lemma \ref{lem:erasure_channel}. Note that the outer bound matches the inner bound, which completes the proof. \hfill \qed

\section{Proof of Corollary \ref{cor:PRSD}} \label{proof:PRSD}
To apply Sato's outer bound (Theorem \ref{thm:Sato}) for the channel ${\sf N}_{Y_1Y_2\mid X}$, let us first note that it suffices to consider $\mP_{X'X''} = \mP_{X'}\times \mP_{X''}$ for which $X'$ and $X''$ are independent, due to the fact that the product distribution maximizes the mutual information over a product channel (cf. \cite[Problem 7.5]{Cover_book}). It follows that $\mathcal{C}^{\NS}\big({\sf N}_{Y_1Y_2\mid X}\big)$ is contained in the set of $(R_1,R_2)$ such that
\begin{align}
  R_1 &\leq   I(X';Y_1') + H(Y_1''), \label{eq:PRSD_outer_region_first} \\
  R_2 &\leq   H(Y_2') + I(X'';Y_2''), \\
  R_1+R_2 &\leq  I(X';Y_1'\mid Y_2') + H(Y_2')  + H(Y_1'')+ I(X'';Y_2''\mid Y_1'') \label{eq:PRSD_outer_region_end}
\end{align}
for some $\mP_{X'X''} = \mP_{X'}\times \mP_{X''}$, as $Y_2' = f(X')$, $Y_1''=g(X'')$.

In the following, we show that this region is achievable with full NS assistance by independently operating over ${\sf N}'_{Y_1'Y_2'\mid X'}$ and ${\sf N}''_{Y_1''Y_2''\mid X''}$. First, Theorem \ref{thm:bipartite_NSBC} shows that (by letting $U = f(X')$), $(R_1', R_2')$ is achievable with NS assistance for the BC ${\sf N}'_{Y_1'Y_2'\mid X'}$ if
\begin{align}
  R_1' &\leq I(X';Y_1') \label{eq:PRSD_region1_first} \\
  R_2' &\leq H(Y_2') \\
  R_1'+R_2' &\leq I(X';Y_1'\mid Y_2') + H(Y_2') \label{eq:PRSD_region1_end}
\end{align}
for some $\mP_{X'}$. By switching the indices of the two users, Theorem \ref{thm:bipartite_NSBC}  also shows that (by letting $U = g(X'')$), $(R_1'', R_2'')$ is achievable with NS assistance for the BC ${\sf N}''_{Y_1''Y_2''\mid X''}$ if
\begin{align}
  R_1'' &\leq H(Y_1'') \label{eq:PRSD_region2_first}\\
  R_2'' &\leq I(X'';Y_2'')\\
  R_1''+R_2'' &\leq H(Y_1'') + I(X'';Y_2''\mid Y_1'') \label{eq:PRSD_region2_end}
\end{align}
for some $\mP_{X''}$. Then, let $\mathcal{R}'(\mP_{X'})$ denote the region of $(R_1', R_2')$ specified by \eqref{eq:PRSD_region1_first} to \eqref{eq:PRSD_region1_end} given $\mP_{X'}$, and let $\mathcal{R}''(\mP_{X''})$ denote the region of $(R_1'', R_2'')$ specified by \eqref{eq:PRSD_region2_first} to \eqref{eq:PRSD_region2_end} given $\mP_{X''}$. 
It follows that $(R_1, R_2)$ is achievable for the BC ${\sf N}_{Y_1Y_2\mid X}$ if $R_1= R_1'+R_1''$, $R_2=R_2'+R_2''$, $(R_1',R_2')\in \mathcal{R}'(\mP_{X'})$, $(R_1'',R_2'')\in \mathcal{R}''(\mP_{X''})$ for some $\mP_{X'}$ and $\mP_{X''}$. This can be thought of as splitting the message $W_k$ into $(W'_k,W''_k)$ for $k\in \{1,2\}$, sending $(W_1', W_2')$ through ${\sf N}'_{Y_1'Y_2'\mid X'}$, and sending $(W_1'', W_2'')$ through ${\sf N}''_{Y_1''Y_2''\mid X''}$. Note that given $\mP_{X'}$ and $\mP_{X''}$, the region of $(R_1,R_2)$ thus achieved is the Minkowski sum $\mathcal{R}'(\mP_{X'})+\mathcal{R}''(\mP_{X''})$. Since $\mathcal{R}'(\mP_{X'})$ and $\mathcal{R}''(\mP_{X''})$ are polymatroids,\footnote{It suffices to check that the bound for $R_1'+R_2'$ is not greater than the sum, of the bound for $R_1'$, and the bound for $R_2'$, and similarly the bound for $R_1''+R_2''$ is not greater than the sum, of the bound for $R_1''$, and the bound for $R_2''$.} their Minkowski sum is obtained as the set of $(R_1,R_2)$ such that
\begin{align}
  R_1 &\leq I(X';Y_1')+H(Y_1')\\
  R_2 &\leq H(Y_2')+ I(X'';Y_2'')\\
  R_1+R_2 &\leq I(X';Y_1'\mid Y_2') + H(Y_2') + H(Y_1'') + I(X'';Y_2''\mid Y_1'')
\end{align}
which is again a polymatroid by adding the bounds individually (see e.g., \cite[Thm. 3]{mcdiarmid1975rado}). Comparing with the bounds from \eqref{eq:PRSD_outer_region_first} to \eqref{eq:PRSD_outer_region_end}, we conclude that $\mathcal{R}_{\Sato}\big( {\sf N}_{Y_1Y_2\mid X} \big)$ is achievable (with full NS assistance), thus equal to the fully NS assisted capacity region $\mathcal{C}^{\NS}\big( {\sf N}_{Y_1Y_2\mid X} \big)$. \hfill \qed

\section{Proof of Theorem \ref{thm:extension}} \label{proof:extension}
Without loss of generality, say $k=1$. We shall show that, if $W_1$ is not known by any users, then removing the side information of User $1$ does not affect the optimal probability of success (assuming full NS assistance). Given an NS assisted coding scheme where the side information is available at all users, i.e., $\mZ  \in \mathcal{Z}^{\NS}(\mathrm{SI}=[K])$, let
\begin{align}
  \mZ' \in \mathcal{P}\Bigg(\mathcal{X} \times \prod_{k=1}^K\mathcal{M}_k ~\Bigg|~ \prod_{k=1}^K(\mathcal{M}_k\times \mathcal{Y}_k \times \mathcal{S}_k)  \Bigg)
\end{align}
such that
\begin{align}
  \mZ'{\footnotesize \left(
  \begin{matrix}
    x\\ \hat{w}_1 \\ \hat{w}_2 \\ \vdots \\ \hat{w}_K
  \end{matrix} \left|  
  \begin{matrix}
    w_1,w_2,\cdots, w_K \\
    y_1,s_1^{\tR}\\
    y_2,s_2^{\tR}\\
    \vdots \\
    y_K,s_K^{\tR}
  \end{matrix}
  \right.
  \right)}
  \triangleq
  \frac{1}{M_1}\sum_{\pi \in \mathbb{C}_{M_1}}\mZ {\footnotesize \left(
  \begin{matrix}
    x\\ \pi(\hat{w}_1) \\ \hat{w}_2 \\ \vdots \\ \hat{w}_K
  \end{matrix} \left|  
  \begin{matrix}
    \pi(w_1),w_2,\cdots, w_K \\
    y_1,s_1^{\tR}\\
    y_2,s_2^{\tR}\\
    \vdots \\
    y_K,s_K^{\tR}
  \end{matrix}
  \right.
  \right)}
\end{align}
for all $x\in \mathcal{X}$, $\hat{w}_k\in \mathcal{M}_k, w_k \in \mathcal{M}_k, y_k \in \mathcal{Y}_k, s_k^{\tR} \in \mathcal{S}_k, k\in [K]$, where $\mathbb{C}_{M_1}$ is the cyclic permutation group operating on $[M_1]$. It is not difficult to verify that $\mZ'\in \mathcal{Z}^{\NS}(\mathrm{SI}=[K])$.
We then show that $\eta(\mZ) = \eta(\mZ')$, meaning that twirling $W_1$ does not affect the probability of success. This is because $W_1$ is uniformly distributed and no user has $W_1$. Specifically, recall the concise notation $s_k\triangleq (w_j)_{j\in \mathcal{W}_k}$. Then,
\begin{align}
  \eta(\mZ')&=\frac{1}{\prod_{k=1}^KM_k} \sum_{x, \mathbf{y}}\mN(\mathbf{y}\mid x) \sum_{w_2,\cdots, w_K} \frac{1}{M_1}\sum_{w_1}   \sum_{\pi\in \mathbb{C}_{M_1}}\mZ {\footnotesize \left(
  \begin{matrix}
    x \\ \pi(w_1) \\ w_2 \\ \vdots \\ w_K
  \end{matrix}
  \left|
  \begin{matrix}
     \pi(w_1),w_2,\cdots, w_K  \\
     y_1,s_1 \\
     y_2,s_2 \\
     \vdots \\
     y_K,s_K
  \end{matrix}
  \right.
  \right)}\\
  &=\frac{1}{\prod_{k=1}^KM_k} \sum_{x, \mathbf{y}}\mN(\mathbf{y}\mid x) \sum_{w_2,\cdots, w_K}  \sum_{w_1}    \mZ {\footnotesize \left(
  \begin{matrix}
    x \\ w_1 \\ w_2 \\ \vdots \\ w_K
  \end{matrix}
  \left|
  \begin{matrix}
     w_1,w_2,\cdots, w_K  \\
     y_1,s_1 \\
     y_2,s_2 \\
     \vdots \\
     y_K,s_K
  \end{matrix}
  \right.
  \right)}\\
  &= \eta(\mZ)
\end{align}

Now let 
\begin{align}
  \mZ'' \in \mathcal{P}\Bigg(\mathcal{X} \times \prod_{k=1}^K\mathcal{M}_k ~\Bigg|~ \prod_{k=1}^K(\mathcal{M}_k\times \mathcal{Y}_k) \times \prod_{j=2}^K \mathcal{S}_j \Bigg)
\end{align}
such that
\begin{align} \label{eq:def_Zpp_extention}
  \mZ''{\footnotesize \left(
  \begin{matrix}
    x \\ \hat{w}_1 \\ \hat{w}_2 \\ \vdots \\ \hat{w}_K
  \end{matrix}
  \left|
  \begin{matrix}
     w_1,w_2,\cdots, w_K  \\
     y_1\\
     y_2,s_2^{\tR} \\
     \vdots \\
     y_K,s_K^{\tR}
  \end{matrix}
  \right.
  \right)}
  \triangleq
  \mZ'{\footnotesize \left(
  \begin{matrix}
    x \\ \hat{w}_1 \\ \hat{w}_2 \\ \vdots \\ \hat{w}_K
  \end{matrix}
  \left|
  \begin{matrix}
     w_1,w_2,\cdots, w_K  \\
     y_1,s_1 \\
     y_2,s_2^{\tR} \\
     \vdots \\
     y_K,s_K^{\tR}
  \end{matrix}
  \right.
  \right)}
\end{align}
for all $x\in \mathcal{X}$, $\hat{w}_k\in \mathcal{M}_k, w_k \in \mathcal{M}_k, y_k \in \mathcal{Y}_k, k\in [K], s_j^{\tR} \in \mathcal{S}_j, j\in \{2,3,\cdots, K\}$.
We claim that $\mZ'' \in \mathcal{Z}^{\NS}(\mathrm{SI}=[K]\setminus \{1\})$, meaning that $\mZ''$ is also a valid NS assisted coding scheme, for which User $1$ does not input its side information. To show it, we need to show that $\mZ''$ satisfies in total $K+1$ conditions, each corresponding to tracing out the output of one of the $K+1$ parties. It turns out that the non-trivial part corresponds to tracing out the first party (the transmitter), shown as follows.

\begin{align}
  &\sum_{x\in \mathcal{X}}\mZ''{\footnotesize\left(
  \begin{matrix}
    x\\ \hat{w}_1 \\ \hat{w}_2 \\ \vdots \\ \hat{w}_K
  \end{matrix} \left|  
  \begin{matrix}
    w_1,w_2,\cdots, w_K\\
    y_1  \\
    y_2,s_2^{\tR}\\
    \vdots \\
    y_K,s_K^{\tR}
  \end{matrix}
  \right.
  \right)} \notag \\
  &= \frac{1}{M_1} \sum_{\pi \in \mathbb{C}_{M_1}}
  \sum_{x\in \mathcal{X}}\mZ {\footnotesize \left(
  \begin{matrix}
    x\\ \pi(\hat{w}_1) \\ \hat{w}_2 \\ \vdots \\ \hat{w}_K
  \end{matrix} \left|  
  \begin{matrix}
    \pi(w_1),w_2,\cdots,w_K\\
    y_1,s_1\\
    y_2,s_2^{\tR}\\
    \vdots \\
    y_K,s_K^{\tR}
  \end{matrix}
  \right.
  \right)}\\
  &= \frac{1}{M_1} \sum_{\pi \in \mathbb{C}_{M_1}}
  \mZ_{\mathrmtiny{R_2...R_K}} {\footnotesize \left(
  \begin{matrix}
     \pi(\hat{w}_1) \\ \hat{w}_2 \\ \vdots \\ \hat{w}_K
  \end{matrix} \left|  
  \begin{matrix}
    y_1,s_1\\
    y_2,s_2^{\tR}\\
    \vdots \\
    y_K,s_K^{\tR}
  \end{matrix}
  \right.
  \right)} \label{eq:lemma_extension_1}\\
  &= \frac{1}{M_1} \sum_{\pi \in \mathbb{C}_{M_1}}
  \mZ_{\mathrmtiny{R_2...R_K}} {\footnotesize \left(
  \begin{matrix}
     \hat{w}_2 \\ \vdots \\ \hat{w}_K
  \end{matrix} \left|  
  \begin{matrix}
    y_2, s_2^{\tR}\\
    \vdots \\
    y_K, s_K^{\tR}
  \end{matrix} \label{eq:lemma_extension_2}
  \right.
  \right)}
\end{align}
which does not depend on $(w_1,w_2,\cdots, w_K)$, for all $\hat{w}_k\in \mathcal{M}_k, w_k \in \mathcal{M}_k, y_k \in \mathcal{Y}_k, k\in [K], s_j^{\tR} \in \mathcal{S}_j, j\in \{2,3,\cdots, K\}$. In Step \eqref{eq:lemma_extension_1},  $\mZ_{\mathrmtiny{R_1...R_K}}$ is the marginal distribution of $\mZ$ for User $1$ to User $K$. In Step \eqref{eq:lemma_extension_2}, $\mZ_{\mathrmtiny{R_2...R_K}}$ is the marginal distribution of $\mZ$ for User $2$ to User $K$, and note that $\pi(\hat{w}_1)$ iterates over $[M_1]$ in the summation for every $\hat{w}_1$.

Let us then verify the remaining $K$ NS conditions. Tracing out the output of the second party (User $1$), we have

\begin{align}
  &\sum_{\hat{w}_1}\mZ''{\footnotesize\left(
  \begin{matrix}
    x\\ \hat{w}_1 \\ \hat{w}_2 \\ \vdots \\ \hat{w}_K
  \end{matrix} \left|  
  \begin{matrix}
    w_1,w_2,\cdots, w_K\\
    y_1  \\
    y_2,s_2^{\tR}\\
    \vdots \\
    y_K,s_K^{\tR}
  \end{matrix}
  \right.
  \right)}\\
  &= \frac{1}{M_1} \sum_{\pi \in \mathbb{C}_{M_1}}
  \sum_{\hat{w}_1}\mZ {\footnotesize\left(
  \begin{matrix}
    x\\ \pi(\hat{w}_1) \\ \hat{w}_2 \\ \vdots \\ \hat{w}_K
  \end{matrix} \left|  
  \begin{matrix}
    \pi(w_1),w_2,\cdots, w_K\\
    y_1,s_1\\
    y_2,s_2^{\tR}\\
    \vdots \\
    y_K,s_K^{\tR}
  \end{matrix}
  \right.
  \right)}\\
  &=\frac{1}{M_1} \sum_{\pi \in \mathbb{C}_{M_1}}
  \sum_{\hat{w}_1}\mZ_{\mathrmtiny{TR_2...R_K}} {\footnotesize\left(
  \begin{matrix}
    x   \\ \hat{w}_2 \\ \vdots \\ \hat{w}_K
  \end{matrix} \left|  
  \begin{matrix}
    \pi(w_1),w_2,\cdots, w_K\\
    y_2,s_2^{\tR}\\
    \vdots \\
    y_K,s_K^{\tR}
  \end{matrix}
  \right.
  \right)} \label{eq:lemma_extension_3}
\end{align}%
which does not depend on $y_1$, for all $x\in \mathcal{X}, \hat{w}_j\in \mathcal{M}_j, w_k \in \mathcal{M}_k, y_k \in \mathcal{Y}_k,  s_j^{\tR} \in \mathcal{S}_j, k\in [K], j\in \{2,3,\cdots, K\}$. In Step \eqref{eq:lemma_extension_3}, $\mZ_{\mathrmtiny{TR_2...R_K}}$ is the marginal distribution of $\mZ$ for the transmitter together with User $2$ to User $K$. Similarly one can verify that tracing out $\hat{w}_j$ of $\mZ''$ the distribution does not depend on $(y_j, s_j^{\tR})$, for $j\in \{2,3,\cdots, K\}$.

Finally, by the definition of $\mZ''$ in \eqref{eq:def_Zpp_extention} it follows  that $\eta(\mZ'')=\eta(\mZ')$ and thus equal to $\eta(\mZ)$, which completes the proof of the theorem. \hfill \qed

\bibliographystyle{IEEEtran}
\bibliography{../../bib_file/yy.bib}

% Generated by IEEEtran.bst, version: 1.14 (2015/08/26)
\begin{thebibliography}{10}
\providecommand{\url}[1]{#1}
\csname url@samestyle\endcsname
\providecommand{\newblock}{\relax}
\providecommand{\bibinfo}[2]{#2}
\providecommand{\BIBentrySTDinterwordspacing}{\spaceskip=0pt\relax}
\providecommand{\BIBentryALTinterwordstretchfactor}{4}
\providecommand{\BIBentryALTinterwordspacing}{\spaceskip=\fontdimen2\font plus
\BIBentryALTinterwordstretchfactor\fontdimen3\font minus \fontdimen4\font\relax}
\providecommand{\BIBforeignlanguage}[2]{{%
\expandafter\ifx\csname l@#1\endcsname\relax
\typeout{** WARNING: IEEEtran.bst: No hyphenation pattern has been}%
\typeout{** loaded for the language `#1'. Using the pattern for}%
\typeout{** the default language instead.}%
\else
\language=\csname l@#1\endcsname
\fi
#2}}
\providecommand{\BIBdecl}{\relax}
\BIBdecl

\bibitem{fawzi2024MAC}
O.~Fawzi and P.~Fermé, ``Multiple-access channel coding with non-signaling correlations,'' \emph{IEEE Transactions on Information Theory}, vol.~70, no.~3, pp. 1693--1719, 2024.

\bibitem{pereg2021quantumBC}
U.~Pereg, C.~Deppe, and H.~Boche, ``Quantum broadcast channels with cooperating decoders: An information-theoretic perspective on quantum repeaters,'' \emph{Journal of Mathematical Physics}, vol.~62, no.~6, 2021.

\bibitem{fawzi2024broadcast}
O.~Fawzi and P.~Fermé, ``Broadcast channel coding: Algorithmic aspects and non-signaling assistance,'' \emph{IEEE Transactions on Information Theory}, vol.~70, no.~11, pp. 7563--7580, 2024.

\bibitem{Yao_Jafar_NS_DoF}
Y.~Yao and S.~A. Jafar, ``Can non-signaling assistance increase the degrees of freedom of a wireless network?'' \emph{IEEE Transactions on Information Theory}, vol.~72, no.~2, pp. 844--864, 2026.

\bibitem{GPT2007}
\BIBentryALTinterwordspacing
J.~Barrett, ``Information processing in generalized probabilistic theories,'' \emph{Phys. Rev. A}, vol.~75, p. 032304, Mar 2007. [Online]. Available: \url{https://link.aps.org/doi/10.1103/PhysRevA.75.032304}
\BIBentrySTDinterwordspacing

\bibitem{GPT2023}
\BIBentryALTinterwordspacing
M.~Plávala, ``General probabilistic theories: An introduction,'' \emph{Physics Reports}, vol. 1033, pp. 1--64, 2023, general probabilistic theories: An introduction. [Online]. Available: \url{https://www.sciencedirect.com/science/article/pii/S0370157323002752}
\BIBentrySTDinterwordspacing

\bibitem{Polyanskiy_Poor_Verdu}
Y.~Polyanskiy, H.~V. Poor, and S.~Verdu, ``Channel coding rate in the finite blocklength regime,'' \emph{IEEE Transactions on Information Theory}, vol.~56, no.~5, pp. 2307--2359, May 2010.

\bibitem{matthews2012linear}
W.~Matthews, ``A linear program for the finite block length converse of {Polyanskiy–Poor–Verdú} via nonsignaling codes,'' \emph{IEEE Transactions on Information Theory}, vol.~58, no.~12, pp. 7036--7044, 2012.

\bibitem{Barman_Fawzi}
S.~Barman and O.~Fawzi, ``Algorithmic aspects of optimal channel coding,'' \emph{IEEE Transactions on Information Theory}, vol.~64, no.~2, pp. 1038--1045, February 2018.

\bibitem{Jose_Kulkarni_2019}
S.~T. Jose and A.~A. Kulkarni, ``Improved finite blocklength converses for {Slepian--Wolf} coding via linear programming,'' \emph{IEEE Trans. Inf. Theory}, vol.~65, no.~4, pp. 2423--2441, Apr. 2019.

\bibitem{Venkat_NS_2007}
\BIBentryALTinterwordspacing
V.~Anantharam and V.~Borkar, ``Common randomness and distributed control: A counterexample,'' \emph{Systems \& Control Letters}, vol.~56, no.~7, pp. 568--572, 2007. [Online]. Available: \url{https://www.sciencedirect.com/science/article/pii/S0167691107000540}
\BIBentrySTDinterwordspacing

\bibitem{Jose_Kulkarni_2015}
S.~T. Jose and A.~A. Kulkarni, ``A linear programming relaxation for stochastic control problems with non-classical information patterns,'' in \emph{2015 54th {IEEE} Conference on Decision and Control ({CDC})}.\hskip 1em plus 0.5em minus 0.4em\relax IEEE, Dec. 2015.

\bibitem{Deshpande_2023}
\BIBentryALTinterwordspacing
S.~A. Deshpande and A.~A. Kulkarni, ``The quantum advantage in decentralized control,'' 2023. [Online]. Available: \url{https://arxiv.org/abs/2207.12075}
\BIBentrySTDinterwordspacing

\bibitem{Dhingra_Kulkarni_2024}
\BIBentryALTinterwordspacing
A.~Dhingra and A.~A. Kulkarni, ``Revisiting common randomness, no-signaling and information structure in decentralized control,'' \emph{ArXiv}, vol. abs/2402.16862, 2024. [Online]. Available: \url{https://api.semanticscholar.org/CorpusID:268032378}
\BIBentrySTDinterwordspacing

\bibitem{Bennett_Shor_Smolin_Thapliyal_PRL}
\BIBentryALTinterwordspacing
C.~H. Bennett, P.~W. Shor, J.~A. Smolin, and A.~V. Thapliyal, ``Entanglement-assisted classical capacity of noisy quantum channels,'' \emph{Phys. Rev. Lett.}, vol.~83, pp. 3081--3084, Oct 1999. [Online]. Available: \url{https://link.aps.org/doi/10.1103/PhysRevLett.83.3081}
\BIBentrySTDinterwordspacing

\bibitem{Quek_Shor}
\BIBentryALTinterwordspacing
Y.~Quek and P.~W. Shor, ``Quantum and superquantum enhancements to two-sender, two-receiver channels,'' \emph{Phys. Rev. A}, vol.~95, p. 052329, May 2017. [Online]. Available: \url{https://link.aps.org/doi/10.1103/PhysRevA.95.052329}
\BIBentrySTDinterwordspacing

\bibitem{leditzky2020playing}
F.~Leditzky, M.~A. Alhejji, J.~Levin, and G.~Smith, ``Playing games with multiple access channels,'' \emph{Nature communications}, vol.~11, no.~1, p. 1497, 2020.

\bibitem{seshadri2023separation}
A.~Seshadri, F.~Leditzky, V.~Siddhu, and G.~Smith, ``On the separation of correlation-assisted sum capacities of multiple access channels,'' \emph{IEEE Transactions on Information Theory}, vol.~69, no.~9, pp. 5805--5844, 2023.

\bibitem{pereg2024MAC_QEassist}
U.~Pereg, C.~Deppe, and H.~Boche, ``The multiple-access channel with entangled transmitters,'' \emph{IEEE Transactions on Information Theory}, vol.~71, no.~2, pp. 1096--1120, 2025.

\bibitem{hawellek2024interferencechannelentangledtransmitters}
\BIBentryALTinterwordspacing
J.~Hawellek, A.~Mohan, H.~Aghaee, and C.~Deppe, ``The interference channel with entangled transmitters,'' 2024. [Online]. Available: \url{https://arxiv.org/abs/2411.10067}
\BIBentrySTDinterwordspacing

\bibitem{cubitt2011zero}
T.~S. Cubitt, D.~Leung, W.~Matthews, and A.~Winter, ``Zero-error channel capacity and simulation assisted by non-local correlations,'' \emph{IEEE Transactions on Information Theory}, vol.~57, no.~8, pp. 5509--5523, 2011.

\bibitem{cubitt2010improving}
------, ``Improving zero-error classical communication with entanglement,'' \emph{Physical Review Letters}, vol. 104, no.~23, p. 230503, 2010.

\bibitem{Notzel}
J.~Notzel, ``Entanglement-enabled communication,'' \emph{IEEE Journal on Selected Areas in Information Theory}, vol.~1, no.~2, pp. 401 -- 415, August 2020.

\bibitem{van2013implausible}
W.~Van~Dam, ``Implausible consequences of superstrong nonlocality,'' \emph{Natural Computing}, vol.~12, pp. 9--12, 2013.

\bibitem{gel1980coding}
S.~Gel'fand and M.~Pinsker, ``Coding for channels with random parameters,'' \emph{Probl. Contr. Inform. Theory}, vol.~9, no.~1, pp. 19--31, 1980.

\bibitem{NIT}
A.~El~Gamal and Y.-H. Kim, \emph{Network information theory}.\hskip 1em plus 0.5em minus 0.4em\relax Cambridge University Press, 2011.

\bibitem{kramer2007capacity}
G.~Kramer and S.~Shamai, ``Capacity for classes of broadcast channels with receiver side information,'' in \emph{2007 IEEE Information Theory Workshop}.\hskip 1em plus 0.5em minus 0.4em\relax IEEE, 2007, pp. 313--318.

\bibitem{barrett2005nonlocal}
J.~Barrett, N.~Linden, S.~Massar, S.~Pironio, S.~Popescu, and D.~Roberts, ``Nonlocal correlations as an information-theoretic resource,'' \emph{Physical Review A—Atomic, Molecular, and Optical Physics}, vol.~71, no.~2, p. 022101, 2005.

\bibitem{masanes2006general}
L.~Masanes, A.~Ac{\'\i}n, and N.~Gisin, ``General properties of nonsignaling theories,'' \emph{Physical Review A—Atomic, Molecular, and Optical Physics}, vol.~73, no.~1, p. 012112, 2006.

\bibitem{Rini_Shamai_fading_dirt}
S.~Rini and S.~S. Shitz, ``On capacity of the writing onto fast fading dirt channel,'' \emph{IEEE Transactions on Wireless Communications}, vol.~17, no.~11, pp. 7411--7424, 2018.

\bibitem{gohari2021outer}
A.~Gohari and C.~Nair, ``Outer bounds for multiuser settings: The auxiliary receiver approach,'' \emph{IEEE Transactions on Information Theory}, vol.~68, no.~2, pp. 701--736, 2021.

\bibitem{Birk_Kol_Trans}
Y.~Birk and T.~Kol, ``Coding on demand by an informed source {(ISCOD)} for efficient broadcast of different supplemental data to caching clients,'' \emph{IEEE Transactions on Information Theory}, vol.~52, no.~6, pp. 2825--2830, June 2006.

\bibitem{Jafar_TIM}
S.~A. Jafar, ``Topological interference management through index coding,'' \emph{IEEE Transactions on Information Theory}, vol.~60, no.~1, pp. 529--568, Jan. 2014.

\bibitem{Sun_Jafar_CBC}
{H. Sun and S. Jafar}, ``On the capacity of computation broadcast,'' \emph{IEEE Transactions on Information Theory}, vol.~66, no.~6, pp. 3417--3434, Jun. 2020.

\bibitem{Yao_Jafar_3LCBC}
Y.~Yao and S.~A. Jafar, ``The capacity of 3 user linear computation broadcast,'' \emph{IEEE Transactions on Information Theory}, vol.~70, no.~6, pp. 4414--4438, 2024.

\bibitem{Maddah_Ali_Niesen}
M.~Maddah-Ali and U.~Niesen, ``Fundamental limits of caching,'' \emph{IEEE Transactions on Information Theory}, vol.~60, no.~5, pp. 2856--2867, May 2014.

\bibitem{Ahlswede_network_information_flow}
R.~Ahlswede, N.~Cai, S.-Y. Li, and R.~Yeung, ``Network information flow,'' \emph{IEEE Transactions on Information Theory}, vol.~46, no.~4, pp. 1204--1216, 2000.

\bibitem{Cover_book}
T.~M. Cover and J.~A. Thomas, \emph{Elements of Information Theory 2nd Edition}.\hskip 1em plus 0.5em minus 0.4em\relax USA: Wiley-Interscience, 2006.

\bibitem{mcdiarmid1975rado}
C.~J. McDiarmid, ``Rado's theorem for polymatroids,'' in \emph{Mathematical Proceedings of the Cambridge Philosophical Society}, vol.~78, no.~2.\hskip 1em plus 0.5em minus 0.4em\relax Cambridge University Press, 1975, pp. 263--281.

\end{thebibliography}
\end{document}